\newcommand{\suchthat}{\;\ifnum\currentgrouptype=16 \middle\fi|\;}
\NewDocumentCommand{\abs}{som}{%
  \IfNoValueTF{#2}{
    \IfBooleanTF{#1}{%
      \left\lvert\mskip0.3\thinmuskip#3\mskip0.3\thinmuskip\right\rvert%
    }{%
      \lvert\mskip0.3\thinmuskip#3\mskip0.3\thinmuskip\rvert%
    }%
  }{%
    \mathopen{#2\lvert}\mskip0.3\thinmuskip#3\mskip0.3\thinmuskip\mathclose{#2\rvert}
  }%
}%
\newtheorem{theorem}{Theorem}
\newtheorem{definition}[theorem]{Definition}
\newtheorem{lemma}[theorem]{Lemma}
\newtheorem{corollary}[theorem]{Corollary}
\title{Fixed Points and 2-Cycles of Synchronous
  Dynamic Coloring Processes on Trees}
\author{%
  \href{https://orcid.org/0000-0001-9964-8816}{%
    \includegraphics[height=0.8em]{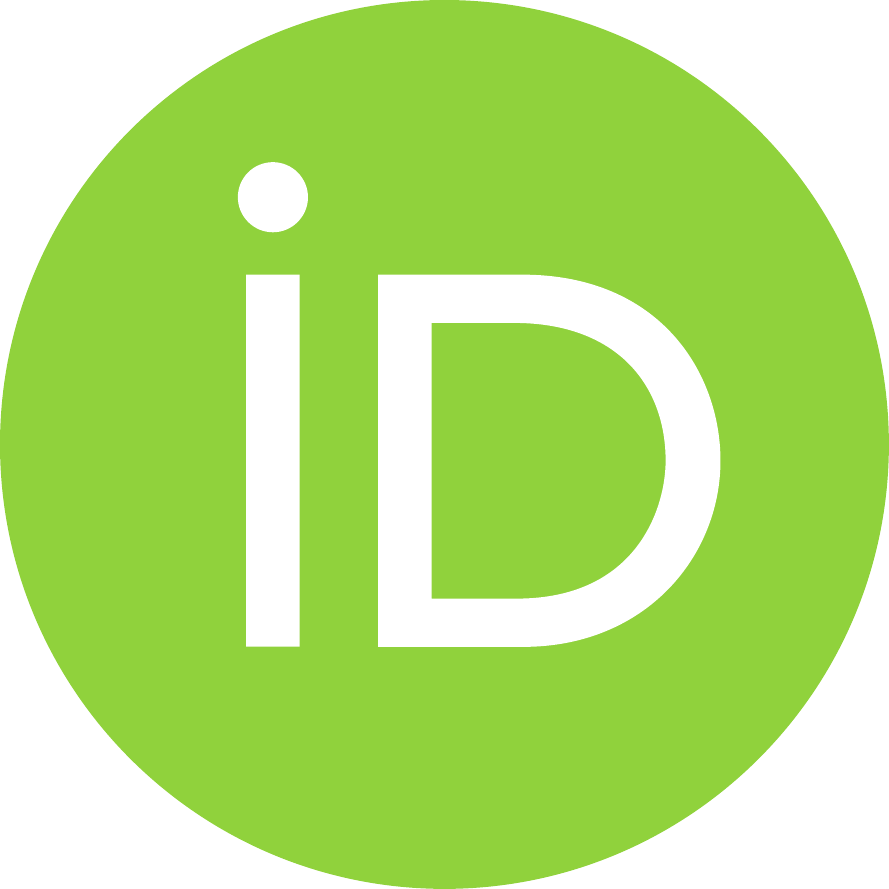}%
    \hspace{1mm}%
    Volker Turau%
  } \\
  Institute of Telematics \\
  Hamburg University of Technology \\
  21073 Hamburg, Germany \\
  \texttt{turau@tuhh.de} }
\begin{document}

\maketitle

\begin{abstract}
  This paper considers synchronous discrete-time dynamical systems on
  graphs based on the threshold model. It is well known that after a
  finite number of rounds these systems either reach a fixed point or
  enter a 2-cycle. The problem of finding the fixed points for this
  type of dynamical system is in general both NP-hard and
  \#P-complete. In this paper we give a surprisingly simple
  graph-theoretic characterization of fixed points and 2-cycles for
  the class of finite trees. Thus, the class of trees is the first
  nontrivial graph class for which a complete characterization of
  fixed points exists. This characterization enables us to provide
  bounds for the total number of fixed points and pure 2-cycles. It
  also leads to an output-sensitive algorithm to efficiently generate
  these states.
\end{abstract}

\section{Introduction}
\label{sec:intro}

Synchronous discrete-time dynamical systems for information spreading
received a lot of attention in recent years. Often the following model
is used: Let $G$ be a graph with an initial configuration, where each
node is either black or white. In discrete-time rounds, all nodes
simultaneously update their color based on a predefined local rule.
The rule is local in the sense that the color associated with a node
in round $t$ is determined by the colors of the neighboring nodes in
round $t-1$. The main focus of the research so far has been on the
stabilization time of this process \cite{Zehmakan:2019} and the
dominance problem, e.g., how many nodes must initially be black so
that eventually all nodes are black \cite{Peleg:2002}. These questions
have been considered for various classes of graphs. These
discrete-time dynamical systems are often based on the {\em threshold
  model}. In a simple version of this model a node becomes black
if at least a fraction of $\alpha$ of its neighbors are black and
white otherwise, $\alpha\in (0,1)$ is a parameter of the model. In
more elaborate versions edges have weights and the local rules are
based on the weighted fraction of neighbors. The main property of
these dynamical systems is that assuming symmetric weights, the system
has period $1$ or $2$ \cite{Goles:1980,Poljak:1983}. This means that
such a system eventually reaches a stable configuration or it toggles
between two configurations. Fogelman et al.\ proved that the
stabilization time is in $O(n^2)$ \cite{Fogelman:1983}. Frischknecht
et al.\ showed that this bound is tight, up to some poly-logarithmic
factor \cite{Frischknecht:2013}.

In this paper we analyze a different aspect of discrete-time dynamical
systems: The number and structure of fixed points and 2-cycles. This
research is motivated by applications of so called Boolean networks
(BN) \cite{Kauffman:1993}, i.e., discrete-time dynamical systems,
where each node (e.g., gene) takes either 0 (passive) or 1 (active)
and the states of nodes change synchronously according to regulation
rules given as Boolean functions. An example for a regulation rule is
the majority rule, i.e., $\alpha=0.5$. Since the problem of finding
the fixed points of a BN is in general both NP-hard and \#P-complete
\cite{Akutsu:1998} (see Sec.~\ref{sec:state-art}), it is interesting
to find graph classes, for which the number of fixed points can be
determined efficiently. We regard our work as a step in this
direction. Interest in the set of fixed points of BNs was also sparked
by a result of Milano and Roli \cite{Milano:2000}. They use BNs to
solve the satisfiability problem (SAT) by defining a mapping between a
SAT instance and a BN and prove that BN fixed points correspond to SAT
solutions.

This paper provides a characterization of the set of stable
configurations (a.k.a.\ fixed points) and the set of states of period
$2$ (a.k.a.\ 2-cycles) for a given finite tree based on its edge set.
We do this for two versions of the threshold model: minority and
majority process. While the stabilization times for the majority and
minority process can differ considerably for a given graph (see
Fig.~\ref{fig:minority}), the sets of stable configurations of a tree
turn out to be closely related for both process types. Our main
contributions  are as follows:
\begin{enumerate}
\item We identify a subset $E_{fix}(T)$ of the power set of the edge
  set of a tree $T$ and show that the elements of $E_{fix}(T)$
  correspond one-to-one with the fixed points of $T$. $E_{fix}(T)$ is
  defined by a set of simple linear inequalities over the node
  degrees. The fixed point corresponding to an element of $E_{fix}(T)$
  can be defined in simple terms. $E_{fix}(T)$ has the hereditary
  property, i.e., if $X\in E_{fix}(T)$ then all subset of $X$ are also
  elements of $E_{fix}(T)$. This property allows to define a simple
  output-sensitive algorithm $\mathcal{A_{M}}$ to explicitly generate
  all fixed points. This allows to prove upper bounds for the number
  of fixed points. We also show that elements of $E_{fix}(T)$
  correspond to solutions of a system of linear diophantine
  inequalities.
\item We characterize the configurations of period $2$, where each
  node changes its color in every round (a.k.a.\ pure configurations).
  As above we identify a subset $E_{pure}(T)$ of the power set of the edge set
  of $T$ such that the elements of $E_{pure}(T)$ correspond one-to-one
  with the pure configurations of $T$. As above the definition of
  $E_{pure}(T)$ is based on simple linear inequalities and it has the
  hereditary property. The 2-cycle corresponding to an element of
  $E_{pure}(T)$ is also defined in simple terms. Again this allows to
  define a simple algorithm enumerating all 2-cycles and to prove
  upper bounds for their number. Interestingly, $E_{pure}(T)$ is a
  subset of $E_{fix}(T)$.
\item Finally we look at general configurations with period $2$. We
  show that for each configuration $c$ of this type each tree
  decomposes into subtrees, such that $c$ induces either a fixed point
  or a pure configuration on each subtree. The subtrees allow to
  define a hyper structure of a tree, called the {\em block tree}. As
  in previous cases we identify a subset $E_{block}(T)$ of the
  power set of the edge set of a tree $T$ and show that the elements
  of $E_{block}(T)$ correspond one-to-one with the block trees of $T$.
  $E_{block}(T)$ is a subset of $E_{fix}(T)$. Since a tree can have
  several pure colorings, a block tree does not uniquely define a
  coloring. We define a subclass of 2-cycles called canonical
  colorings and prove that there is a direct correspondence between
  $E_{block}(T)$ and canonical colorings. The characterization of
  $E_{block}(T)$ is not as simple as in the above cases, since
  $E_{block}(T)$ does not have the hereditary property.
\end{enumerate}
All results are obtained for the minority and the majority model.

\section{State of the Art}
\label{sec:state-art}
Most research on discrete-time dynamical systems on graphs consecrates
oneself to bounds of the stabilization time. Good overviews for the
majority resp.\ the minority process can be found in
\cite{Zehmakan:2019} resp.\ \cite{Papp:2019}. Rouquier et al.\ study
the minority process in the asynchronous model, i.e., not all nodes
update their color concurrently \cite{Rouquier:2011}. They show that
the stabilization time strongly depends on the topology and observe
that the case of trees is non-trivial.

The analysis of fixed points of the majority or minority process
received only some attention. Kr{\'a}lovi{\v{c}} determined the number
of fixed points of a complete binary tree for the majority process
\cite{Rastislav:2001}. Agur et al.\ did the same for ring topologies
\cite{Agur:1988}. In both cases the number of fixed points is an
exponentially small fraction of all configurations.

A related concept are Boolean networks, which have been extensively
used as mathematical models of genetic regulatory networks. The number
of fixed points of such a network is a key feature of its dynamical
behavior. Boolean networks have been extensively used as models for
the dynamics of gene regulatory networks. A gene is modeled by binary
values, 0 or 1, indicating two transcriptional states, either active
or inactive, respectively. Each network node operates by the same
nonlinear majority rule, i.e., majority processes are a particular
type of BN \cite{Veliz:2012}. The set of fixed points is an important
feature of the dynamical behavior of such networks
\cite{Aracena:2008}. The number of fixed points is a measure for the
general memory storage capacity of a system. Many fixed points imply
that a system can store a large amount of information, or, in
biological terms, has a large phenotypic repertoire \cite{Agur:1991}.
However, the problem of finding the fixed points of a Boolean network
is in general both NP-hard and \#P-complete \cite{Akutsu:1998}. There
are only a few theoretical results to efficiently determine this set
\cite{Irons:2006}. Aracena determined the maximum number of fixed
points in a particular class of BN called regulatory Boolean networks
\cite{Aracena:2008}.

Concepts related to fixed points of the minority resp.\ majority
process have been analyzed. A partition $(S,\bar{S})$ of the nodes of
a graph is called a global defensive $0$-alliance if
$\abs{N_S(v)}\ge \abs{N_{\bar{S}}(v)}$ for each node $v$
\cite{Fernau:2014}. Thus, a fixed point of the minority or majority
process induces a $0$-alliance, but the converse does not hold. The
difference is that no condition is placed on $v$. $0$-alliances are
also called monopolies \cite{Mishra:2006}.

Mishra and Rao show that, for trees, a minimum monopoly can be
computed in linear time \cite{Mishra:2006}. A related concept is that
of $2$-community structure. Bazgan et al.\ prove that each tree has a
connected $2$-community structure and it can be found in linear time
\cite{Bazgan:2010}.

\section{Synchronous Discrete-Time Dynamical Systems}
Let $G(V,E)$ be a finite, undirected graph. A coloring $c$ assigns to
each node of $G$ a value of $\{0,1\}$ with no further constraints on
$c$. Denote by $\mathcal{C}(G)$ the set of all colorings of $G$, i.e.,
$\abs{\mathcal{C}(G)}=2^{\abs{V}}$. A transition process $\mathcal{M}$
describes the transition of one coloring to another, i.e., it is a
mapping $\mathcal{M}: \mathcal{C}(G) \longrightarrow \mathcal{C}(G)$.
Given an initial coloring $c$, a transition process produces a sequence of
colorings $c, \mathcal{M}(c), \mathcal{M}(\mathcal{M}(c)),\ldots$. We
consider two transition processes: {\em Minority} and {\em Majority}
process and denote the corresponding mappings by $\mathcal{MIN}$ and
$\mathcal{MAJ}$. They are local mappings in the sense that the new
color of a node is based on the current colors of its neighbors. To
determine $\mathcal{M}(c)$ the local mapping is executed concurrently
by all nodes. The transition from $c$ to $\mathcal{M}(c)$ is called a
{\em round}. In the minority (resp.\ majority) process each node
adopts the minority (resp.\ majority) color among all neighbors. In
case of a tie the color remains unchanged. Formally, the minority
process is defined for a node $v$ as follows:
\[\mathcal{MIN}(c)(v) = \begin{cases}
    c(v) & \text{if } \abs{N^{c(v)}(v)} \le  \abs{N^{1-c(v)}(v)}\\
    1-c(v)  & \text{if } \abs{N^{c(v)}(v)} > \abs{N^{1-c(v)}(v)} 
  \end{cases}\] $N^i(v)$ denotes the set of $v$'s neighbors with color
$i$ ($i=0,1$). The definition of $\mathcal{MAJ}$ is similar, only the
binary operators $\le$ and $>$ are reversed.
Sometimes a result holds for both processes. To simplify matters in
these cases we use the symbol $\mathcal{{M}}$ as a placeholder for
$\mathcal{{MIN}}$ and $\mathcal{{MAJ}}$. Fig.~\ref{fig:minority}
depicts a sequence of colorings for $\mathcal{{MIN}}$.

\begin{figure}[h]
  \hfill
  \includegraphics[scale=0.7]{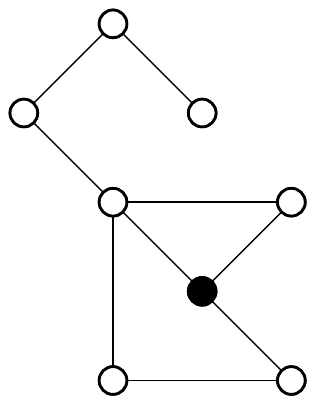}
  \hfill
  \includegraphics[scale=0.7]{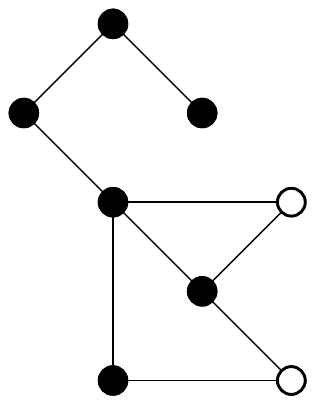}
  \hfill
  \includegraphics[scale=0.7]{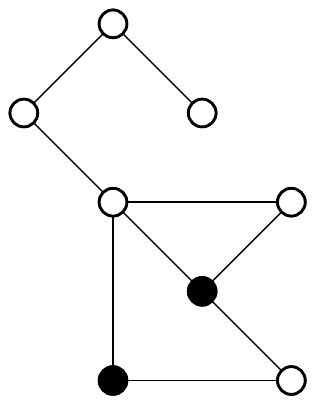}
  \hfill
  \includegraphics[scale=0.7]{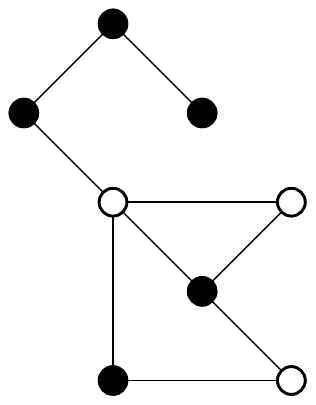}
  \hfill
  \includegraphics[scale=0.7]{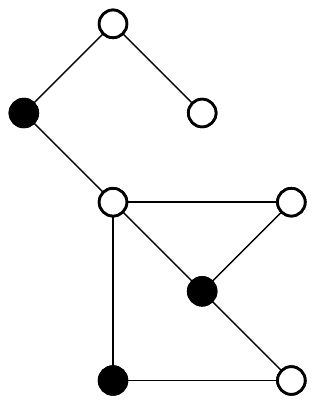}
  \hfill
  \includegraphics[scale=0.7]{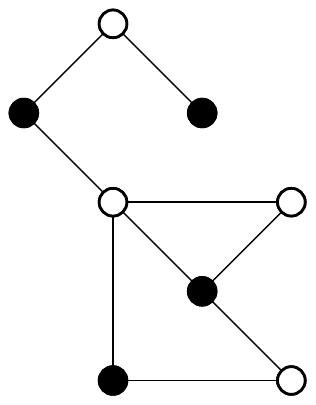}
  \hfill\null
  \caption{For the initial coloring on the left $\mathcal{{MIN}}$
    reaches after five rounds the coloring shown on the right.
    $\mathcal{{MAJ}}$ reaches for the same initial coloring after one
    round a monochromatic coloring.}\label{fig:minority}
\end{figure}

In this paper we are interested in colorings with specific properties.
Let $c\in \mathcal{C}(G)$. If $\mathcal{M}(c)=c$ then $c$ is called a
{\em fixed point}. It is called a {\em 2-cycle} if
$\mathcal{M}(c)\not=c$ and $\mathcal{M}(\mathcal{M}(c))=c$. A 2-cycle
is called {\em pure} if $\mathcal{M}(c)(v) \not =c(v)$ for each node
$v$ of $G$. $c$ is called {\em monochromatic} if all nodes have the
same color, i.e., $c(v)=c(w)$ for all $v,w\in V$. $c$ is called {\em
  independent} if the color of each node is different from the colors
of all its neighbors. Clearly, a monochromatic (resp.\ independent)
coloring is a fixed point for the majority (resp.\ minority) process.
An edge $(v,w)$ is called {\em monochromatic} for $c$ if $c(v)=c(w)$
otherwise it is called {\em multi-chromatic}.

For a mapping $\mathcal{M}$ denote by $\mathcal{F}_{\mathcal{M}}(G)$,
$\mathcal{C}^2_{\mathcal{M}}(G)$, and $\mathcal{P}_{\mathcal{M}}(G)$,
the set of all $c\in \mathcal{C}(G)$ that constitute a fixed point, a
2-cycle, or a pure coloring for $\mathcal{M}$. If $c$ belongs to one
of these sets the complementary coloring and $\mathcal{M}(c)$ also
belong to this set. To cope with this fact we also define the sets
$\mathcal{F}_{\mathcal{M}}(G)^+$, $\mathcal{C}^2_{\mathcal{M}}(G)^+$,
and $\mathcal{P}_{\mathcal{M}}(G)^+$ as the subsets of those colorings
of the corresponding sets which assign to a globally distinguished
node $v^\ast$ color $0$. Hence, if $c\in \mathcal{F}_{\mathcal{M}}(G)$
then either $c$ or the complement of $c$ is in
$\mathcal{F}_{\mathcal{M}}(G)^+$.

\subsection{Notation}
Let $T(V,E)$ be a finite, undirected tree with $n=\abs{V}$. For
$F\subseteq E$ let $\mathcal{C}_T(F)$ be the set of connected
components of $T\!\setminus\! F$. We define a tree $\mathcal{T}_F$ with
nodes ${\cal C}_T(F)$ and edges $F$. An edge of $(u,w)\in F$ connects
components $T_1,T_2 \in {\cal C}_T(F)$ if and only if $u\in T_1$ and
$w\in T_2$. For $F\subseteq E$ and $v\in V$ denote the number of edges
in $F$ incident to $v$ by $F_v$.

The nodes of a nontrivial tree $T$ can be uniquely partitioned into
two subsets, such that the nodes of each subset form an independent
set. In the following we denote these independent subsets by
$\mathcal{I}_0(T)$ and $\mathcal{I}_1(T)$. To enforce unambiguousness
when dealing with these subsets we demand that $v^\ast$ is contained
in $\mathcal{I}_0(T)$. A star graph is a tree with $n-1$ leaves. The
maximal degree of a tree is denoted by $\Delta$. We denote the
$n^{th}$ Fibonacci number by $F_n$, i.e., $F_0=0, F_1=1$, and
$F_n=F_{n-1}+F_{n-2}$. For a set $S$ we denote by $\mathcal{P}(S)$ the
power set of $S$, i.e., the set of all subsets of $S$.

\section{Fixed Points}  
In this section we provide a characterization of
$\mathcal{F}_{\mathcal{M}}(T)$ with respect to subsets of $E$. In
particular; we identify a set $E_{fix}(T)\subset \mathcal{P}(E)$ and
define a bijection $\mathcal{B}_{fix}$ between $E_{fix}(T)$ and
$\mathcal{F}_{\mathcal{M}}(T)^+$. $E_{fix}(T)\not= \emptyset$ since
$\emptyset \in E_{fix}(T)$. This shows that every tree has at least
one fixed point. The definition of $\mathcal{B}_{fix}$ is different
for $\mathcal{MIN}$ and $\mathcal{MAJ}$. These results allow to
characterize the fixed points of paths. In the second subsection we
prove an upper bound for $\abs{\mathcal{F}_{\mathcal{{M}}}(T)}$ in
terms of $n$ and $\Delta$. For the case of paths we give the exact
numbers. In the last part we provide an output-sensitive algorithm to
enumerate all fixed points.

\subsection{The Bijection $\mathcal{B}_{fix}$}\label{sec:fix}
For $c\in \mathcal{F}_{\mathcal{MIN}}(T)$ nodes adjacent to edges
monochromatic for $c$ have degree at least two, moreover at most one
half of the adjacent edges of each node are monochromatic for $c$.
Surprisingly the inverse of this statement is also true and forms the
basis for defining the bijection $\mathcal{B}_{fix}$: If $F$ is a
subset of the edges of $T$ such that nodes adjacent to edges in $F$
have degree at least two and at most one half of the adjacent edges of
each node are in $F$ then $F$ uniquely defines a fixed point of
$\mathcal{F}_{\mathcal{M}}(T)$.

\begin{lemma}\label{lem:basic}
  Let $T$ be a tree, $c\in \mathcal{F}_{\mathcal{M}}(T)$, and $F$ the
  set of monochromatic (resp.\ multicolored) edges $(u,w)\in E$ if
  $\mathcal{M}=\mathcal{MIN}$ (resp.\ $\mathcal{M}=\mathcal{MAJ}$). If
  $(u,w)\in F$ then $deg_T(u)\ge 2$ and $deg_T(w)\ge 2$. Furthermore,
  $F_v\le deg_T(v)/2$ for each node $v$ of $T$.
\end{lemma} 
\begin{proof}
  Assume $\mathcal{{M}}=\mathcal{{MIN}}$, the other case is proved
  similarly. Then $\abs{N_T^{1-c(u)}(u)} \ge \abs{N_T^{c(u)}(u)}\ge 1$
  for $(u,w)\in F$. Thus,
  $deg_T(u) = \abs{N_T^{c(u)}(u)} + \abs{N_T^{1-c(u)}(u)}\ge 2$.
  Similarly $deg_T(w)\ge 2$. Let $v\in V$. Then
  $\abs{N_T^{c(v)}(v)}\le\abs{N_T^{1-c(v)}(v)}$ since
  $c\in \mathcal{F}_{\mathcal{M}}(T)$, i.e.,
  $deg(v) \ge 2\abs{N_T^{c(v)}(v)}= 2F_v$.
\end{proof}

The last lemma motivates the following definition of $E_{fix}(T)$.
Note that $E_{fix}(T)$ satisfies the hereditary property
\begin{definition}
  Let $T$ be a tree. $E^2(T)$ denotes the set of edges of $T$ where
  each end node has degree at least two. $F \subseteq E^2(T)$ is
  called {\em legal} if $F_v \le deg(v)/2$ for each node $v$.
  $E_{fix}(T)$ denotes the set of all legal subsets of a tree $T$.
\end{definition}

\begin{theorem}\label{theo:min_fix}
  For any tree $T$ there exists a bijection $\mathcal{B}_{fix}$ between
  $E_{fix}(T)$ and $\mathcal{F}_{\mathcal{M}}(T)^+$.
\end{theorem}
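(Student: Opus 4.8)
The plan is to construct $\mathcal{B}_{fix}$ explicitly and verify it is a bijection by exhibiting its inverse. Lemma~\ref{lem:basic} already gives one direction: for any $c\in\mathcal{F}_{\mathcal{M}}(T)^+$, the set $F$ of monochromatic edges (for $\mathcal{MIN}$) or multicolored edges (for $\mathcal{MAJ}$) is legal, so the map $c\mapsto F$ sends $\mathcal{F}_{\mathcal{M}}(T)^+$ into $E_{fix}(T)$. The real content is to define a map in the reverse direction, from legal sets $F$ to colorings, and show the two maps are mutually inverse. I would \emph{define} $\mathcal{B}_{fix}$ as this reverse map and verify that its image lands in $\mathcal{F}_{\mathcal{M}}(T)^+$.

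The key construction: given a legal $F\subseteq E^2(T)$, I want to build a coloring $c$ whose monochromatic (resp.\ multicolored) edges are exactly $F$. For $\mathcal{MIN}$, this means $c$ should be ``as independent as possible'' except that the edges of $F$ are forced monochromatic. The natural idea is to use the tree structure: root $T$ at $v^\ast$, set $c(v^\ast)=0$, and propagate colors down the tree, where along each edge $e$ we keep the color equal if $e\in F$ and flip it if $e\notin F$. Concretely, for a node $u$ at tree-distance given by a path $P$ from $v^\ast$, set $c(u) = (\text{number of edges of } P \text{ not in } F) \bmod 2$. Because $T$ is a tree there is a unique such path, so $c$ is well-defined; and by construction an edge $e=(u,w)$ is monochromatic iff $e\in F$, matching the $\mathcal{MIN}$ case exactly. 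For $\mathcal{MAJ}$ the analogous construction flips the role of $F$: edges in $F$ become multicolored, edges outside $F$ monochromatic. In either case, the coloring assigns $0$ to $v^\ast$, so it lies in the ``$+$'' subset.

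Next I would verify that the coloring $c$ just constructed is actually a fixed point, i.e.\ $\mathcal{M}(c)=c$. This is where the legality condition $F_v\le\deg(v)/2$ does its work. In the $\mathcal{MIN}$ case, the number of neighbors of $v$ sharing $v$'s color is exactly $F_v$ (the monochromatic incident edges), and the number of oppositely-colored neighbors is $\deg(v)-F_v$. Legality gives $F_v\le\deg(v)-F_v$, i.e.\ $\abs{N^{c(v)}(v)}\le\abs{N^{1-c(v)}(v)}$, which is precisely the stability condition in the definition of $\mathcal{MIN}(c)(v)=c(v)$. The $\mathcal{MAJ}$ case is symmetric with the inequality reversed. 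Thus every legal $F$ yields a genuine fixed point, giving a well-defined map $\mathcal{B}_{fix}\colon E_{fix}(T)\to\mathcal{F}_{\mathcal{M}}(T)^+$.

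Finally I would show the two maps invert each other. One composition is immediate: starting from legal $F$, building $c$, then reading off its monochromatic/multicolored edge set returns $F$ by construction. For the other direction, start from $c\in\mathcal{F}_{\mathcal{M}}(T)^+$, extract its edge set $F$ (legal by Lemma~\ref{lem:basic}), and rebuild a coloring $c'$ from $F$; I must check $c'=c$. Since both assign $0$ to $v^\ast$ and both have the property that an edge is monochromatic iff it lies in $F$, the colors agree on $v^\ast$ and the flip/no-flip pattern along every root-to-node path is determined by $F$, so an induction on tree-distance from $v^\ast$ forces $c'=c$. The main obstacle I anticipate is not any single step but making the well-definedness and path-independence of the construction fully rigorous, and cleanly handling both processes in parallel; the legality inequality and the uniqueness of paths in a tree are the two facts that make everything go through.
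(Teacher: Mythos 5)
Your proposal is correct and in substance coincides with the paper's proof: the coloring you define by path parity from $v^\ast$ is exactly the paper's $c_F$ (independent inside each component of $T\setminus F$ and monochromatic across $F$-edges for $\mathcal{MIN}$, with the roles of $F$ swapped for $\mathcal{MAJ}$), the legality inequality $F_v\le \deg(v)/2$ is used in the same way to verify $\mathcal{M}(c_F)=c_F$, and your inverse map is the paper's $c\mapsto F_c$ justified by Lemma~\ref{lem:basic}. The only difference is presentational: the paper builds $c_F$ componentwise (BFS extension for $\mathcal{MIN}$, the bipartition $\mathcal{I}_0/\mathcal{I}_1$ of $\mathcal{T}_F$ for $\mathcal{MAJ}$) and argues injectivity and surjectivity separately, whereas your global parity formula along the unique $v^\ast$--$u$ path makes well-definedness and the mutual-inverse property immediate.
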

\begin{proof}
  First assume $\mathcal{{M}}=\mathcal{{MIN}}$. Let
  $F\in E_{fix}(T)$. We define a coloring
  $c_F\in \mathcal{F}_{\mathcal{MIN}}(T)$. Let
  $T^\ast \in \mathcal{C}_T(F)$ with $v^\ast\in T^\ast$. Let
  $c_F(v^\ast)=0$ and extend $c_F$ to an independent coloring of
  $T^\ast$, e.g., by using breadth-first search. This uniquely defines
  $c_F$ on $ T^\ast$. We extend $c_F$ successively to a coloring with
  $c_F\in \mathcal{F}_{\mathcal{MIN}}(T)^+$. While there exists an
  already colored node $u$ that has an uncolored neighbor do the
  following. Let $T_1\in \mathcal{C}_T(F)$ with $u\in T_1$,
  $N_1=N_{T_1}(u)$, and $N_2=N_T(u)\setminus N_1$. All nodes in $N_1$
  have color $1-c_F(u)$ and $F_u=\abs{N_2}$. No node of $N_2$ has yet
  been assigned a color. By assumption we have
  $\abs{N_2} \le deg_T(u)/2$. Hence, $\abs{N_2} \le \abs{N_1}$. Set
  $c_F(w)=c_F(u)$ for all $w\in N_2$. For each $w\in N_2$ let
  $T_w\in \mathcal{C}_T(F)$ with $w\in T_w$. Extend $c_F$ to an
  independent coloring on each $T_w$. Then
  $\abs{N_T^{c_F(u)}}\le \abs{N_T^{1-c_F(u)}}$. Clearly this uniquely
  defines $c_F$ and $c_F \in \mathcal{F}_{\mathcal{MIN}}(T)^+$. Now we
  can define
  $\mathcal{B}_{fix}(F) = {c}_F \text{ for each } F\in E_{fix}(T)$.
  Let $F_1\not=F_2\in E_{fix}(T)$ and $e=(u,w)\in F_1\setminus F_2$.
  Then $c_{F_1}(w)=c_{F_1}(u)$ and
  $c_{F_2}(w)\not=c_{F_2}(u)$. Hence, $c_{F_1}\not= c_{F_2}$. Thus,
  $\mathcal{B}_{fix}(F)$ is injective.

  Next, we prove that $\mathcal{B}_{fix}$ is surjective, i.e., for
  every $c\in \mathcal{F}_{\mathcal{MIN}}(T)^+$ there exists
  $F_c\in E_{fix}(T)$ such that $\mathcal{B}_{fix}(F_c)=c$. For
  $c\in \mathcal{F}_{\mathcal{MIN}}(T)^+$ let
  $F_c=\{(u,w)\in E \suchthat c(u) = c(w)\}$. Then $F_c\in E_{fix}(T)$
  by Lemma~\ref{lem:basic}. By the first part of this proof we have
  $\mathcal{B}_T(F_c)\in \mathcal{F}_{\mathcal{MIN}}(T)^+$. Let
  $v\in T^\ast$ and $u \in N_{T^\ast}(v)$. Then $c(u)\not=c(v)$,
  otherwise $u\not\in T^\ast$. Hence, $\mathcal{B}_T(F_c)$ is for
  $T^\ast$ independent. Since $c_{F_c}(v^\ast) = c(v^\ast)=0$ we have
  $\mathcal{B}_T(F_c)(v) = c(v)$ for all $v\in T^\ast$. Next we repeat
  this argument for all $\hat{T}\in {\cal C}_T(F_c)$. Thus, $c$ and
  $\mathcal{B}_T(F_c)$ define the same coloring of $T$, i.e.,
  $\mathcal{B}_T(F_c)=c$.
  
  Next assume $\mathcal{{M}}=\mathcal{{MAJ}}$. Let $F\in E_{fix}(T)$.
  We use the partition of the nodes of $\mathcal{T}_F$ into two
  independent subsets $\mathcal{I}_0$ and $\mathcal{I}_1$ to define a
  mapping $C_F: {\cal C}_T(F) \rightarrow \{0,1\}$ by setting
  $C_F(\hat{T}) = i$ if $\hat{T}\in \mathcal{I}_i$ for $i=0,1$. We
  define a coloring $c_F$ of $T$ as follows
  \[c_F(v)=C_F(\hat{T}) \text{ if } v\in \hat{T}.\] $F$ uniquely
  defines $c_F$ among all colorings with $c_F(v^\ast)=0$, since
  $v^\ast \in T^\ast \in \mathcal{I}_0$. First, we prove that
  ${c}_F \in \mathcal{F}_{\mathcal{MAJ}}(T)^+$. For $v\in V$ let
  $\hat{T} \in {\cal C}_T(F)$ with $v \in \hat{T}$. Then
  $N(v)\cap \hat{T}=N_T^{c_F(v)}(v)$ and
  $F_u=\abs{N_T^{1-c_F(v)}(v)}$. Since $F\in E_{fix}(T)$ we have
  $\abs{N_T^{c_F(v)}(v)} \ge deg(v)/2$. Thus,
  $2\abs{N_T^{c_F(v)}(v)}\ge \abs{N_T^{c_F(v)}(v)}
  +\abs{N_T^{1-c_F(v)}(v)}$ and hence,
  $\abs{N_T^{c_F(v)}(v)}\ge\abs{N_T^{1-c_F(v)}(v)}$. Thus, no node can
  change its color. i.e., $c_F\in \mathcal{F}_{\mathcal{MAJ}}(T)^+$.
  Now we define
  \[\mathcal{B}_{fix}(F) = {c}_F \text{ for each } F\in E_{fix}(T).\]
  As in the proof for $\mathcal{{M}}=\mathcal{{MIN}}$ we can show that
  $\mathcal{B}_{fix}(F)$ is injective.

  Next, we prove that $\mathcal{B}_{fix}$ is surjective, i.e., for
  every $c\in \mathcal{F}_{\mathcal{MAJ}}(T)^+$ there exists
  $F_c\in E_{fix}(T)$ such that $\mathcal{B}_{fix}(F_c)=c$. For
  $c\in \mathcal{F}_{\mathcal{MAJ}}(T)^+$ define
  $F_c=\{(u,w)\in E \suchthat c(u) \not= c(w)\}$. Let $(u,w)\in F_c$.
  Then $F_c\in E_{fix}(T)$ by Lemma~\ref{lem:basic}. By the first part
  of this proof we have
  $\mathcal{B}_{fix}(F_c)\in \mathcal{F}_{\mathcal{MAJ}}(T)^+$.

  Let $v\in T^\ast$ and $u \in N_{T^\ast}(v)$. Then $c(u)=c(v)$,
  otherwise $u\not\in T^\ast$. Hence, $\mathcal{B}_{fix}(F_c)$ is for
  $T^\ast\in {\cal C}_T(F_c)$ monochromatic. Since
  $c_{F_c}(v^\ast) = c(v^\ast)=0$ we have
  $\mathcal{B}_{fix}(F_c)(v) = c(v)$ for all $v\in T^\ast$. Now we
  repeat this argument for all $\hat{T}\in {\cal C}_T(F_c)$. Thus, $c$
  and $\mathcal{B}_{fix}(F_c)$ define the same coloring, i.e.,
  $\mathcal{B}_{fix}(F_c)=c$.
\end{proof}

Theorem~\ref{theo:min_fix} implies the following two results.
\begin{corollary}\label{cor:fix_min_exits}
  Let $T$ be a tree. The minority process has an independent
  fixed point. It has a non-independent fixed point if and only if $T$
  has at least two inner nodes. The majority process has a
  monochromatic fixed point. It has a non-monochromatic fixed point if
  and only if $T$ has at least two inner nodes.
\end{corollary}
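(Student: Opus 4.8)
The plan is to derive everything from the bijection $\mathcal{B}_{fix}$ of Theorem~\ref{theo:min_fix}, reading off the four claims from the structure of $E_{fix}(T)$. The two ingredients I would use repeatedly are that $\emptyset \in E_{fix}(T)$ (it is vacuously legal, since $\emptyset \subseteq E^2(T)$ and $\emptyset_v = 0 \le deg(v)/2$), and that the recipes $F_c = \{(u,w) \suchthat c(u)=c(w)\}$ for $\mathcal{MIN}$ and $F_c = \{(u,w) \suchthat c(u)\neq c(w)\}$ for $\mathcal{MAJ}$ translate set-theoretic statements about $F$ into color-theoretic statements about $c_F = \mathcal{B}_{fix}(F)$.

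First I would dispatch the two unconditional existence claims. Applying $\mathcal{B}_{fix}$ to $\emptyset$ gives a fixed point in each process. For $\mathcal{MIN}$, $F_{c} = \emptyset$ says $c$ has no monochromatic edge, i.e.\ adjacent nodes differ, so $\mathcal{B}_{fix}(\emptyset)$ is independent. For $\mathcal{MAJ}$, $F_{c} = \emptyset$ says $c$ has no multichromatic edge; since $T$ is connected, all nodes share one color, so $\mathcal{B}_{fix}(\emptyset)$ is monochromatic. This yields the first and third assertions.

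Next I would reduce the two ``iff'' statements to a single combinatorial fact. Under the bijection, a non-independent fixed point (for $\mathcal{MIN}$) exactly corresponds to a nonempty $F\in E_{fix}(T)$, because a nonempty $F$ means at least one monochromatic edge; likewise a non-monochromatic fixed point (for $\mathcal{MAJ}$) corresponds to a nonempty $F$, since that forces a multichromatic edge. Now a nonempty legal set exists if and only if some singleton $\{(u,w)\}$ is legal, and $\{(u,w)\}$ is legal precisely when $(u,w)\in E^2(T)$, i.e.\ when both $deg(u)\ge 2$ and $deg(w)\ge 2$. Hence both ``iff'' claims collapse to: $E^2(T)\neq\emptyset$ if and only if $T$ has at least two inner nodes.

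The crux is this last equivalence, and it is the only step with real content. The forward direction is immediate: an edge of $E^2(T)$ exhibits two adjacent nodes of degree $\ge 2$, hence two inner nodes. For the converse, the key observation is that the inner nodes of a tree induce a connected subtree: any node on the path between two inner nodes has a predecessor and a successor on that path, so it has degree $\ge 2$ and is itself inner. Therefore, given two inner nodes, the subtree they span contains at least one edge, and both endpoints of that edge are inner, witnessing $E^2(T)\neq\emptyset$. I expect the small edge cases (a single node, a single edge, or a star) to be handled automatically here, since each has at most one inner node and correspondingly $E^2(T)=\emptyset$. Finally, I would remark that the claims concern $\mathcal{F}_{\mathcal{M}}(T)$ rather than $\mathcal{F}_{\mathcal{M}}(T)^+$, but since complementation preserves fixed-point membership as well as the independent and monochromatic properties, existence transfers between the two sets without further work.
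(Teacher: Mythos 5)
Your proposal is correct and follows essentially the same route as the paper: both derive the existence claims from $\emptyset\in E_{fix}(T)$ via the bijection of Theorem~\ref{theo:min_fix}, and both reduce the ``iff'' claims to the equivalence between the existence of a nonempty legal set and the presence of two adjacent inner nodes. You merely supply details the paper leaves implicit --- that a singleton $\{e\}$ is legal exactly when $e\in E^2(T)$, and that inner nodes induce a connected subtree --- and in doing so you correctly repair the paper's slip of writing ``$E_{fix}(T)\not=\emptyset$'' where it means that $E_{fix}(T)$ contains a nonempty element (equivalently $E^2(T)\not=\emptyset$), since $E_{fix}(T)$ always contains $\emptyset$.
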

\begin{proof}
  Since $\emptyset \in E_{fix}(T)$ we have
  $c_\emptyset \in \mathcal{F}_{\mathcal{MIN}}(T)$ and $c_\emptyset$
  is an independent coloring. By Theorem~\ref{theo:min_fix} $T$ has a
  non-independent fixed point if and only if
  $E_{fix}(T)\not=\emptyset$. This is equivalent to having at least
  two inner nodes. The proof for $\mathcal{MAJ}$ is similar.
\end{proof}

\begin{corollary}\label{cor:fixed}
  A coloring of a path is a fixed point of the minority (resp.\
  majority) process if and only if each node has at least one
  neighbor with a different (resp.\ same)~color.
\end{corollary}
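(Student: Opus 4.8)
The plan is to read off the corollary from the bijection of Theorem~\ref{theo:min_fix}, using only that every node of a path has degree $1$ or $2$. Since the property ``every node has a neighbor of a different (resp.\ same) color'' is invariant under exchanging the two colors, and fixed points occur in complementary pairs, it suffices to verify the claim for colorings in $\mathcal{F}_{\mathcal{M}}(T)^+$. For such a coloring $c$ let $F_c$ be the associated edge set from the proof of Theorem~\ref{theo:min_fix}: the set of monochromatic edges when $\mathcal{M}=\mathcal{MIN}$, and the set of multichromatic edges when $\mathcal{M}=\mathcal{MAJ}$. In either case $c$ is a fixed point precisely when $F_c\in E_{fix}(T)$, so the task reduces to showing that on a path $F_c$ is legal if and only if the stated color condition holds.

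First I would record the combinatorial effect of the degree restriction. Writing the path as $v_1-v_2-\cdots-v_n$ with edges $e_i=(v_i,v_{i+1})$, the endpoints $v_1,v_n$ have degree $1$ and every interior node has degree $2$; hence $E^2(T)=\{e_2,\dots,e_{n-2}\}$, i.e.\ the two boundary edges $e_1,e_{n-1}$ never lie in a legal set. Because $F_v$ is an integer and $deg(v)\in\{1,2\}$, the legality bound rewrites as $F_v\le \lfloor deg(v)/2\rfloor = deg(v)-1$ for every $v$. I then claim that on a path legality is equivalent to the single family of inequalities $F_v\le deg(v)-1$ for all $v$: the inequalities at the endpoints force $F_{v_1}=F_{v_n}=0$, i.e.\ $e_1,e_{n-1}\notin F_c$, which is exactly the membership condition $F_c\subseteq E^2(T)$, while at an interior node $F_v\le deg(v)-1=1$ is just the degree bound. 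Thus the $E^2(T)$ constraint is subsumed by the endpoint inequalities and needs no separate argument.

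Finally I would translate the inequalities into the color condition. The inequality $F_v\le deg(v)-1$ says precisely that $v$ is incident to at least one edge outside $F_c$. For $\mathcal{M}=\mathcal{MIN}$, $F_c$ consists of the monochromatic edges, so an edge outside $F_c$ is multichromatic and the condition reads: every node has a neighbor of a different color. For $\mathcal{M}=\mathcal{MAJ}$, $F_c$ consists of the multichromatic edges, so an edge outside $F_c$ is monochromatic and the condition reads: every node has a neighbor of the same color. Combined with the reduction of the first paragraph this yields the corollary. The only genuinely delicate point is the treatment of the degree-$1$ endpoints: one must check that demanding a differently (resp.\ equally) colored neighbor at an endpoint is the same as excluding its unique incident edge from $F_c$, and that this excluded-edge condition coincides with the $E^2(T)$ membership required by legality; everything else is a direct substitution based on $\lfloor deg(v)/2\rfloor = deg(v)-1$.
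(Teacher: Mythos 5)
Your proof is correct and follows essentially the same route as the paper: both deduce the corollary from Theorem~\ref{theo:min_fix} by characterizing the legal subsets $F\subseteq E^2(P_n)$ of a path, using that end edges are excluded and interior nodes satisfy $F_v\le 1$. The only cosmetic difference is that you package the legality conditions as the single inequality $F_v\le \deg(v)-1$ (valid since $\deg(v)\in\{1,2\}$ gives $\lfloor\deg(v)/2\rfloor=\deg(v)-1$) and translate it directly into the neighbor-color condition, whereas the paper phrases the same fact by saying $F$ is a matching of $P_{n-2}$ and that $c_F$ is independent (resp.\ monochromatic) on components of size at least two.
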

\begin{proof}
  Let $P_n$ be a path and $F\in E_{fix}(P_n)$. If $n\le 3$ then
  $E_{fix}(T)=\emptyset$, i.e., $\mathcal{F}_{\mathcal{MIN}}(P_n)$
  (resp.\ $\mathcal{F}_{\mathcal{MAJ}}(P_n)$) consist of the two
  independent (resp.\ monochromatic) colorings. Let $n\ge 4$. Then $F$
  is a matching of $P_{n-2}$ since the end edges of $P_n$ cannot be in
  $F$. Since $c_F$ induces an independent (resp.\ monochromatic)
  coloring for every $\hat{T}\in {\cal C}_T(F)$ and $\abs{\hat{T}}\ge
  2$ the proof is complete.
\end{proof}

\subsection{Counting Fixed Points}
\label{sec:count-fixed-points}

Theorem~\ref{theo:min_fix} allows to compute the number of fixed
points in specific cases. If $\Delta=n-1$ (resp.\ $\Delta=n-2$) then
$\abs{\mathcal{F}_{\mathcal{MIN}}(T)}= 2$ (resp.\
$\abs{\mathcal{F}_{\mathcal{MIN}}(T)}= 4$). Furthermore,
$\abs{\mathcal{F}_{\mathcal{MIN}}(T)}\le 8$ if $\Delta=n-3$. To get
more general results we describe an algorithm $\mathcal{A_{M}}$ to
generate all fix points of a given tree $T$. We start with node
$v^\ast$ and color it with $0$. Algorithm $\mathcal{A_{M}}$ is
recursive and extends a partial coloring by coloring all uncolored
neighbors of an already colored node. In this context a partial
coloring is a coloring of a subset of the nodes of $T$ with the
following property: Let $v$ be an already colored node. Firstly, all
nodes on the path from $v^\ast$ to $v$ in $T$ are also colored.
Secondly, if a neighbor of $v$ other than the one closer to $v^\ast$
is colored, then all neighbors of $v$ are colored.

The details of a recursive call for the minority process, i.e.,
$\mathcal{A_{MIN}}$ are as follows. Given a partial coloring $c$, a
single invocation generates several extensions of $c$, all of them are
again partial colorings covering more nodes. Let $v$ be an already
colored node that has an uncolored neighbor. First, each uncolored
neighbor of $v$ that is a leaf gets the complementary color of $v$.
Then $v$ has $r=deg(v) -\abs{N^{0}(v)} -\abs{N^{1}(v)}$ uncolored
neighbors. Let $U$ be the set of the uncolored neighbors of $v$, note
none of them is a leaf. We color $\hat{N}_0$ (resp.\ $\hat{N}_1$) of
these $r$ neighbors with color $0$ (resp.\ $1$), i.e.,
$r=\hat{N}_0 + \hat{N}_1$. In order to produce a fixed point the
following inequality must be satisfied:
\[ \abs{N^{c(v)}(v)} + \hat{N}_{c(v)} \le \abs{N^{1-c(v)}(v)} +
  \hat{N}_{1-c(v)}= \abs{N^{1-c(v)}(v)} + r - \hat{N}_{c(v)}.\] Hence,
\begin{equation}
  \label{eq:1}
\hat{N}_{c(v)} \le \frac{r + \abs{N_{1-c(v)}(v)} -\abs{N_{c(v)}(v)}}{2}.  
\end{equation}
Let
\begin{equation}
  \label{eq:2}
  r_0= \min \left(\lfloor (r +\abs{N_{1-c(v)}(v)}
    -\abs{N_{c(v)}(v)})/2\rfloor, r\right).
\end{equation}
For $i=0,\ldots, r_0$ we extend $c$ by coloring a subset $S$ of $U$ of
size $i$ with color $c(v)$ and the remaining nodes $U\setminus S$ with
color $c(v)-1$. This way we get $\sum_{i=0}^{r_0} \binom{r}{i}$
extended partial colorings. $\mathcal{A_{MIN}}$ is applied to each of
these extensions and terminates when all nodes are colored. Clearly,
the resulting colorings are fixed points and all fixed points are
generated this way. Algorithm $\mathcal{A_{MAJ}}$ differs only in two
places. Firstly, uncolored neighbors of $v$ that are leaves gets the
same color as $v$. Secondly, in Eq.~(\ref{eq:1}) operator $\ge$
must be replaced by $\le$ and the assignment of colors to nodes in $U$
is reversed.

Next we prove an upper bound for $\abs{{F}_{\mathcal{{M}}}(T)}$.
According to Corollary~\ref{cor:fix_min_exits} each tree has at least
two fixed points. A star graph is an extreme case, because it only has
two fixed points. The other extreme are paths as shown in this
section. We start with a simple observation.

\begin{lemma}\label{lem:4path}
  Let $T$ be a tree with a path $v_0,v_1,v_2,v_3$ such that
  $deg(v_0)=1$ and $deg(v_1)=deg(v_2)=2$. Let $T^0=T\setminus v_0$ and
  $T^1=T^0\setminus v_1$. Then
  $\abs{\mathcal{F}_{\mathcal{{M}}}(T)}=\abs{\mathcal{F}_{\mathcal{{M}}}(T^0)}+\abs{\mathcal{F}_{\mathcal{{M}}}(T^1)}$.
\end{lemma}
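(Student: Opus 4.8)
The plan is to bypass the colorings entirely and reduce the identity to a count of legal edge sets, where a clean partition appears. For any nontrivial tree $S$ the complement map $c\mapsto\bar c$ is a fixed-point-free involution on $\mathcal{F}_{\mathcal{M}}(S)$ (the two differ at $v^\ast$), and by the remark following the definition of $\mathcal{F}_{\mathcal{M}}(S)^+$ exactly one member of each pair $\{c,\bar c\}$ assigns color $0$ to $v^\ast$; hence $\abs{\mathcal{F}_{\mathcal{M}}(S)}=2\abs{\mathcal{F}_{\mathcal{M}}(S)^+}=2\abs{E_{fix}(S)}$ by Theorem~\ref{theo:min_fix}. Applying this to $S\in\{T,T^0,T^1\}$, it suffices to prove
\[ \abs{E_{fix}(T)} = \abs{E_{fix}(T^0)} + \abs{E_{fix}(T^1)}. \]
Because this is phrased purely through $E_{fix}$, which is the same set for both processes, I need not treat $\mathcal{MIN}$ and $\mathcal{MAJ}$ separately.

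Next I set $e_0=(v_0,v_1)$, $e_1=(v_1,v_2)$, $e_2=(v_2,v_3)$ and record the degree bookkeeping. Since $deg(v_0)=1$, the edge $e_0\notin E^2(T)$, so no legal set uses it; since $deg(v_1)=deg(v_2)=2$, we have $e_1\in E^2(T)$, and the constraint $F_{v_2}\le deg(v_2)/2 = 1$ forces every legal $F$ to contain \emph{at most one} of $e_1,e_2$. Passing to $T^0=T\setminus v_0$ turns $v_1$ into a leaf, so $E^2(T^0)=E^2(T)\setminus\{e_1\}$ and the constraint at $v_1$ becomes vacuous; passing to $T^1=T^0\setminus v_1$ turns $v_2$ into a leaf, so $E^2(T^1)=E^2(T)\setminus\{e_1,e_2\}$ and the constraint at $v_2$ becomes vacuous. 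Every other node keeps its degree, hence its constraint, unchanged.

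I then partition $E_{fix}(T)$ according to whether $e_1\in F$. For the part with $e_1\notin F$ I claim the identity map is a bijection onto $E_{fix}(T^0)$: such an $F$ avoids $e_0$ and $e_1$, so $F\subseteq E^2(T^0)$, and since the node constraints of $T$ and $T^0$ agree away from $v_1$ and are vacuous at $v_1$, legality in $T$ and in $T^0$ coincide. For the part with $e_1\in F$, the $v_2$-constraint gives $e_2\notin F$, and I claim $F\mapsto F\setminus\{e_1\}$ is a bijection onto $E_{fix}(T^1)$ with inverse $F'\mapsto F'\cup\{e_1\}$. In the forward direction, deleting $e_1$ lands inside $E^2(T)\setminus\{e_1,e_2\}=E^2(T^1)$, and the surviving constraints hold because they held in $T$. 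In the backward direction, every $F'\in E_{fix}(T^1)$ automatically omits $e_2$ (as $e_2\notin E^2(T^1)$), so adjoining $e_1$ yields $F_{v_1}=F_{v_2}=1\le deg(v_i)/2$ while all other constraints are inherited from $F'$; thus $F'\cup\{e_1\}\in E_{fix}(T)$. Summing the two parts gives the displayed identity, and the lemma follows.

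The one delicate point, and the step I would verify most carefully, is this backward map in the case $e_1\in F$: legality of $F'\cup\{e_1\}$ in the full tree $T$ rests entirely on $F'$ never using $e_2$, which is guaranteed precisely because $v_2$ is a leaf of $T^1$ and so $e_2$ is excluded from $E^2(T^1)$. Everything else is routine degree bookkeeping.
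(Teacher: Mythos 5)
Your proof is correct, but it takes a genuinely different route from the paper's. The paper argues directly at the level of colorings: it splits $\mathcal{F}_{\mathcal{M}}(T)$ into $\mathcal{F}_{=}$ (colorings with $c(v_1)=c(v_2)$) and $\mathcal{F}_{\not=}$, and pairs these classes with fixed points of $T^1$ and $T^0$ respectively by tracking how the generation algorithm $\mathcal{A_{M}}$ behaves when $v_2,v_1,v_0$ are visited last (stated for $\mathcal{MIN}$, with $\mathcal{MAJ}$ ``proved similarly''). You instead push everything through Theorem~\ref{theo:min_fix}: after the complement-involution observation $\abs{\mathcal{F}_{\mathcal{M}}(S)}=2\abs{E_{fix}(S)}$, the lemma becomes the purely combinatorial identity $\abs{E_{fix}(T)}=\abs{E_{fix}(T^0)}+\abs{E_{fix}(T^1)}$, which you establish by splitting on $e_1\in F$ versus $e_1\notin F$. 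Under $\mathcal{B}_{fix}$ (for $\mathcal{MIN}$, where $F$ collects the monochromatic edges) your dichotomy is exactly the paper's $c(v_1)=c(v_2)$ versus $c(v_1)\not=c(v_2)$, so the two decompositions correspond; what your execution buys is, first, a single argument covering both processes at once, since $E_{fix}$ is process-independent, and second, fully explicit bijections with explicit inverses, avoiding any appeal to the informally specified algorithm (on which the paper's pairing argument leans). Your degree bookkeeping is sound: removing $v_0$ only drops $\deg(v_1)$ from $2$ to $1$, so $E^2(T^0)=E^2(T)\setminus\{e_1\}$; removing $v_1$ as well only drops $\deg(v_2)$ to $1$, so $E^2(T^1)=E^2(T)\setminus\{e_1,e_2\}$; all other node constraints are untouched, and the point you flag as delicate --- that $e_2\notin E^2(T^1)$ forces $(F'\cup\{e_1\})_{v_2}=1\le \deg_T(v_2)/2$ --- is exactly right and closes the backward direction.
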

\begin{proof}
  We assume $\mathcal{{M}}=\mathcal{{MIN}}$, the other case is proved
  similarly. Note that for each $c\in \mathcal{F}_{\mathcal{{M}}}(T)$,
  the color of $v_0$ is determined by that of $v_1$, i.e.,
  $c(v_0)=1-c(v_1)$. We apply the above described algorithm
  $\mathcal{A_{\mathcal{MIN}}}$ to $T^1$ and $T^0$, where vertices
  $v_2,v_1$ and $v_0$ are visited last and in this order. We classify
  the fixed points $c$ of $T$ in two categories: $\mathcal{F}_=$
  consists of those $c$ with $c(v_2)=c(v_1)$ and $\mathcal{F}_{\not=}$
  the remaining colorings. Colorings of $\mathcal{F}_{\not=}$ arise
  when $\mathcal{A_{\mathcal{MIN}}}$ is applied to $T^0$, because
  $v_1$ is a leaf in $T^0$ and therefore $v_1$ and $v_2$ do receive
  different colors for $T^0$. Thus, each fixed point
  $c \in \mathcal{F}_{\not=}$ of this category can be uniquely paired
  with a fixed point of $T_0$, by setting $c(v_0) = 1- c(v_1)$.
  Consider an application of $\mathcal{A_{\mathcal{MIN}}}$ to $T^1$.
  Since $v_2$ is a leaf in $T^1$, $v_2$ and $v_3$ receive different
  colors. Thus, these colorings are uniquely extended to $T$ by
  setting $c(v_1) = c(v_2)$ and $c(v_0) = 1 - c(v_1)$. Hence, fixed
  points of $T^1$ can be uniquely paired with fixed points of
  $\mathcal{F}_=$. Since we have associated each fixed of $T$ with a
  fixed point of either $T^0$ or $T^1$ we have
  $\abs{\mathcal{F}_{\mathcal{MIN}}(T)}=\abs{\mathcal{F}_{\mathcal{MIN}}(T^0)}+\abs{\mathcal{F}_{\mathcal{MIN}}(T^1)}$. 
\end{proof}

Before proving Theorem~\ref{theo:count_fix} we provide a technical result.

\begin{lemma}\label{lem:rem}
  Let $T$ be a tree and $v$ a leaf of $T$ with neighbor $w$. Let $n_l$
  (resp.\ $n_{i}$) be the number of neighbors of $w$ that are leaves
  (resp.\ inner nodes). If $n_l>n_{i}$ then there is a one-to-one
  correspondence between $\mathcal{F}_{\mathcal{{M}}}(T)$ and
  $\mathcal{F}_{\mathcal{{M}}}(T\setminus v)$.
\end{lemma}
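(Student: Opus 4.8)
The plan is to prove the statement through the edge-set characterization of Theorem~\ref{theo:min_fix}, which applies uniformly to $\mathcal{MIN}$ and $\mathcal{MAJ}$. Concretely, I would establish the set equality $E_{fix}(T) = E_{fix}(T\setminus v)$. Since $\mathcal{B}_{fix}$ is a bijection between $E_{fix}(\cdot)$ and $\mathcal{F}_{\mathcal{M}}(\cdot)^+$, this immediately yields $\abs{\mathcal{F}_{\mathcal{M}}(T)^+} = \abs{\mathcal{F}_{\mathcal{M}}(T\setminus v)^+}$, and passing to complementary colorings gives the desired one-to-one correspondence between $\mathcal{F}_{\mathcal{M}}(T)$ and $\mathcal{F}_{\mathcal{M}}(T\setminus v)$. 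The explicit correspondence is simply restriction $c\mapsto c|_{T\setminus v}$, whose inverse recolors the leaf $v$ by the value forced by its unique neighbor $w$ (namely $1-c(w)$ for $\mathcal{MIN}$ and $c(w)$ for $\mathcal{MAJ}$).

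To prove $E_{fix}(T)=E_{fix}(T\setminus v)$ I would argue in three steps. First, since $v$ is a leaf, the edge $(v,w)$ has an endpoint of degree $1$, so $(v,w)\notin E^2(T)$ and no legal set of $T$ contains it; hence every element of $E_{fix}(T)$ is a subset of $E\setminus\{(v,w)\}$, the edge set of $T\setminus v$. Second, deleting $v$ changes no degree except that of $w$, which drops from $deg_T(w)=n_l+n_i$ to $n_l+n_i-1$. The only edges at $w$ that can lie in a legal set are the $n_i$ edges to inner neighbors (leaf edges are excluded from $E^2$), and for each such inner neighbor the hypothesis $n_l>n_i$ forces $deg_{T\setminus v}(w)=n_l+n_i-1\ge 2n_i\ge 2$, so that edge remains in $E^2$; thus $E^2(T)=E^2(T\setminus v)$. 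Third, the legality inequality $F_u\le deg(u)/2$ is identical in both trees for every $u\ne w$, and at $w$ it is vacuous in both, because any candidate $F$ has $F_w\le n_i$ while $n_l>n_i$ gives both $n_i<(n_l+n_i)/2=deg_T(w)/2$ and $n_i\le(n_l+n_i-1)/2=deg_{T\setminus v}(w)/2$. Consequently a subset of $E^2$ is legal for $T$ if and only if it is legal for $T\setminus v$, which is exactly the claimed set equality.

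The crux of the argument, and the only place the hypothesis is used, is the third step: the strict inequality $n_l>n_i$ is precisely what keeps the degree constraint at $w$ inactive in \emph{both} $T$ and $T\setminus v$. I expect this to be the main obstacle to state cleanly, because when $n_l=n_i$ the bound at $w$ tightens from $F_w\le n_i$ in $T$ to $F_w\le n_i-1$ in $T\setminus v$, so some legal sets of $T$ cease to be legal after deletion and the correspondence breaks; the same tightening is what could otherwise eject an inner edge from $E^2$. All remaining verifications---that $(v,w)$ is irrelevant, that no other degree moves, and that the leaf color is forced so that restriction inverts the reconstruction of Theorem~\ref{theo:min_fix}---are routine and identical for the minority and majority processes.
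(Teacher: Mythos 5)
Your proof is correct, but it takes a genuinely different route from the paper. The paper argues dynamically, via the recursive generation algorithm $\mathcal{A}_{\mathcal{M}}$: the leaf children of $w$ have forced colors in any fixed point, so colorings of $T$ and $T\setminus v$ agreeing off $v$ determine each other, and since $n_l>n_i$ implies $w$ still has at least as many leaf children as non-leaf children after deleting $v$, the algorithm makes exactly the same choices at $w$ (the constraint in Eq.~(\ref{eq:2}) stays slack) and hence emits the same family of fixed points. You instead argue statically through the edge-set characterization of Theorem~\ref{theo:min_fix}, proving the set equality $E_{fix}(T)=E_{fix}(T\setminus v)$: the leaf edge $(v,w)$ never lies in $E^2$, only $w$'s degree moves, and $n_l>n_i$ makes the legality constraint at $w$ inactive in both trees since $F_w\le n_i<\deg_T(w)/2$ and $n_i\le \deg_{T\setminus v}(w)/2$. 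Both proofs hinge on the same slackness at $w$, and both correctly locate where $n_l=n_i$ fails (the path of length $3$), but your version is arguably tighter: it is independent of the informal algorithm description, works uniformly for $\mathcal{MIN}$ and $\mathcal{MAJ}$ without case text, and yields the explicit correspondence $c\mapsto c|_{T\setminus v}$ with the forced recoloring of $v$ as inverse. What the paper's algorithmic formulation buys in exchange is direct compatibility with the induction in Theorem~\ref{theo:count_fix}, where $\mathcal{A}_{\mathcal{M}}$ and the pairing of fixed points across subtrees are reused. One small point worth making explicit in your write-up: passing from $\mathcal{F}_{\mathcal{M}}(\cdot)^+$ to the full sets uses that complementation is a fixed-point-free involution, and if $v=v^\ast$ the distinguished node of $T\setminus v$ must be rechosen; neither affects the cardinality argument, but the restriction map's compatibility with $\mathcal{B}_{fix}$ is cleanest when $v\ne v^\ast$.
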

\begin{proof}
  For $c\in \mathcal{F}_{\mathcal{{M}}}(T)$ the children of $w$ that
  are leaves are colored with $c(w)$ if
  $\mathcal{{M}}=\mathcal{{MAJ}}$ and with $1-c(w)$ otherwise. Thus,
  if $c_1,c_2\in \mathcal{F}_{\mathcal{{M}}}(T)$ with $c_1(u)=c_2(u)$
  for all $u\not=v$ then $c_1=c_2$. Hence, $c$ uniquely corresponds to
  an element of $\mathcal{F}_{\mathcal{{M}}}(T\setminus v)$. Consider
  the coloring of $w$'s children by $\mathcal{A_{\mathcal{M}}}$. Since
  $n_l>n_{i}$ an arbitrary number of the non-leaf neighbors of $w$ can
  receive color $c(w)$ (resp.\ $1-c(w)$. If one of $w$'s leaf children
  is removed, $w$ still has at least as many leaf children as non-leaf
  children. Hence, $\mathcal{A_{\mathcal{M}}}$ produces the same fixed
  points.
\end{proof}

A path of length $3$ shows that the last lemma does not hold in case
$n_l=n_{i}$.

\begin{theorem}\label{theo:count_fix}
  Let $T$ be a tree and $P$ a path. Then
  $\abs{\mathcal{F}_{\mathcal{{M}}}(T)} \le
  2F_{n-\lceil\Delta/2\rceil}$ and
  $\abs{\mathcal{F}_{\mathcal{{M}}}(P)}= 2F_{n-1}$.
\end{theorem}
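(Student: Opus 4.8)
The plan is to reduce the two claims to counting \emph{legal edge sets}, settle the path identity by a matching count, and prove the general bound by induction on $n$ using Lemmas~\ref{lem:4path} and~\ref{lem:rem}. By Theorem~\ref{theo:min_fix}, $\mathcal{B}_{fix}$ is a bijection between $E_{fix}(T)$ and $\mathcal{F}_{\mathcal{M}}(T)^+$; since a coloring and its complement are distinct and exactly one of them assigns $v^\ast$ the color $0$, this gives $\abs{\mathcal{F}_{\mathcal{M}}(T)}=2\abs{E_{fix}(T)}$. Hence it suffices to show $\abs{E_{fix}(T)}\le F_{n-\lceil\Delta/2\rceil}$ and $\abs{E_{fix}(P)}=F_{n-1}$. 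For a path the only nodes of degree at least two are the $n-2$ inner nodes; they form a subpath on $n-2$ vertices, and each has capacity $\lfloor 2/2\rfloor=1$, so $F$ is legal exactly when it is a matching of that subpath. Since a path on $m$ vertices has $F_{m+1}$ matchings, this gives $\abs{E_{fix}(P)}=F_{n-1}$ and also the base case $\Delta=2$ of the induction, in which the bound is attained with equality.

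If $\Delta\ge 3$, I would fix a longest path $u_0,u_1,u_2,\dots$, so that $u_0$ is a leaf and every neighbor of $u_1$ other than $u_2$ is a leaf, and distinguish three cases. If $deg(u_1)=deg(u_2)=2$, then $u_0,u_1,u_2,u_3$ is the configuration of Lemma~\ref{lem:4path}, whence $\abs{\mathcal{F}_{\mathcal{M}}(T)}=\abs{\mathcal{F}_{\mathcal{M}}(T^0)}+\abs{\mathcal{F}_{\mathcal{M}}(T^1)}$ with $T^0=T\setminus u_0$ and $T^1=T^0\setminus u_1$; since a vertex of degree $\Delta$ lies outside $\{u_0,u_1,u_2\}$, both trees retain maximum degree $\Delta$, and writing $k=\lceil\Delta/2\rceil$ the recurrence $F_{n-1-k}+F_{n-2-k}=F_{n-k}$ closes this case. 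If $deg(u_1)\ge 3$, then $u_1$ has strictly more leaf than inner neighbors, so Lemma~\ref{lem:rem} yields $\abs{\mathcal{F}_{\mathcal{M}}(T)}=\abs{\mathcal{F}_{\mathcal{M}}(T\setminus u_0)}$; here $n$ drops by one and $\Delta$ by at most one, and as $\lceil\Delta/2\rceil-\lceil(\Delta-1)/2\rceil\le 1$ the target index does not increase, so the bound carries over.

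The remaining case, $deg(u_1)=2$ and $deg(u_2)\ge 3$, is the main obstacle, since neither lemma applies. I would split $E_{fix}(T)$ by the edge $e=(u_1,u_2)\in E^2(T)$. The legal sets avoiding $e$ are exactly $E_{fix}(T\setminus u_0)$ --- deleting the leaf $u_0$ makes $u_1$ a leaf, which removes $e$ from $E^2$ but leaves $deg(u_2)$ and hence the capacity of $u_2$ unchanged --- contributing at most $F_{(n-1)-k}$; the legal sets containing $e$ biject, after deleting $e$, with the legal sets of $T\setminus\{u_0,u_1\}$ in which the capacity of $u_2$ is reduced by one. The delicate point is to bound this second count by $F_{(n-2)-k}$: removing the pendant path lowers $deg(u_2)$, so according to the parity of $deg(u_2)$ the capacity $\lfloor deg(u_2)/2\rfloor$ may or may not drop, and simultaneously $\Delta$ itself may fall by one when $u_2$ is the unique maximum-degree vertex. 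One must verify that these parity effects never raise the two summands above the indices $n-1-k$ and $n-2-k$; the available slack is exactly the gap in $\lfloor deg(u_2)/2\rfloor\le deg(u_2)/2$ from Lemma~\ref{lem:basic}, and Fibonacci identities such as $F_{2L+1}=F_{L+1}^2+F_L^2$ together with $2ab\le a^2+b^2$ confirm that the inequality holds with room to spare. Making this bookkeeping uniform is the crux; the path count and the first two cases are routine.
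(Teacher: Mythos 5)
Your reduction $\abs{\mathcal{F}_{\mathcal{M}}(T)}=2\abs{E_{fix}(T)}$, the matching count $\abs{E_{fix}(P)}=F_{n-1}$ for paths (which, incidentally, repairs the circular self-citation in the paper's own base case), and your first two cases --- Lemma~\ref{lem:4path} when $deg(u_1)=deg(u_2)=2$, Lemma~\ref{lem:rem} when $deg(u_1)\ge 3$ --- are all correct and track the paper's induction closely. The gap is the third case, and it is exactly where the theorem's difficulty sits. Write $k=\lceil\Delta/2\rceil$. If $u_2$ is the \emph{unique} vertex of degree $\Delta$ and $\Delta$ is odd, then $T\setminus\{u_0,u_1\}$ has $n-2$ nodes and maximum degree $\Delta-1$, and since $\lceil(\Delta-1)/2\rceil=k-1$ the inductive bound for the $e$-containing legal sets is $F_{(n-2)-(k-1)}=F_{n-1-k}$, not $F_{n-2-k}$; the two summands then give $2F_{n-1-k}>F_{n-1-k}+F_{n-2-k}=F_{n-k}$. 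So there is no ``room to spare'' at the level of Fibonacci indices: any repair must extract a quantitative gain from the reduced capacity $\lfloor deg(u_2)/2\rfloor-1$ at $u_2$, or from global structure, and your appeal to $F_{2L+1}=F_{L+1}^2+F_L^2$ and $2ab\le a^2+b^2$ is never connected to an actual estimate of that reduced-capacity count. As written, the decisive case is asserted, not proved --- your own text concedes it is ``the crux.''

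The paper closes this case by a structural dichotomy rather than bookkeeping. It selects the pendant edge $(v,w)$ from the set $\tilde{E}$ so that some vertex \emph{other than} $w$ has degree $\Delta$; then in your notation either $T\setminus\{u_0,u_1\}$ still contains a degree-$\Delta$ vertex, in which case the split by $c(u)=c(w)$ versus $c(u)\not=c(w)$ (equivalently your split by the edge $e$) yields $2F_{n-1-k}+2F_{n-2-k}=2F_{n-k}$ as you intend, or $u$ (your $u_2$) is the only vertex of degree $\Delta$, and iterating the leaf reductions forces $T$ to be an extended star with center $u$ whose legs all have length at most $2$. For that spider the fixed points are counted \emph{exactly} via $\mathcal{A_{M}}$: in the extremal case $n=2\Delta+1$ and, for even $\Delta$, $\abs{\mathcal{F}_{\mathcal{M}}(T)}=2\sum_{i=0}^{\Delta/2}\binom{\Delta}{i}=2^\Delta+\binom{\Delta}{\Delta/2}$, after which the explicit inequality $2^\Delta+\binom{\Delta}{\Delta/2}\le 2F_{3\Delta/2+1}=2F_{n-k}$ is verified (odd $\Delta$ analogously). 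To complete your proof you would need both ingredients you skipped: the identification of the spider structure in the unique-maximum-degree subcase, and the binomial-versus-Fibonacci inequality. Neither is routine, and neither follows from the identities you cite.
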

\begin{proof}
  We assume $\mathcal{{M}}=\mathcal{{MIN}}$, the other case is proved
  similarly. The proof is by induction on $n$. If $\Delta=2$ the
  result holds by Theorem~\ref{theo:count_fix}. If $T$ is a star graph
  then $\abs{\mathcal{F}_{\mathcal{{MIN}}}(T)}=2$, again the result is
  true. Let $\Delta> 2$ and $T$ not a star graph. Thus, $n>4$. Let
  $\tilde{E}$ be the set of edges $(v,w)$ of $T$ where $v$ is a leaf
  and all neighbors of $w$ but one are leaves. Then
  $\abs{\tilde{E}}\ge 2$ since $T$ is not a star graph. Thus, there
  exits $(v,w)\in \tilde{E}$ such that there exists a node different
  from $w$ that has degree $\Delta$. If $deg(w)>2$ then there exists a
  neighbor $u\not=v$ of $w$ that is a leaf. Let $T_u=T\setminus u$.
  Then
  $\abs{\mathcal{F}_{\mathcal{{M}}}(T)}=\abs{\mathcal{F}_{\mathcal{{M}}}(T_u)}$
  by Lemma~\ref{lem:rem}. Thus, the result is true by induction since
  $\Delta(T_u)=\Delta(T)$.
  

  Hence, we can assume that $deg(w)=2$. Let $u\not=v$ be the second
  neighbor of $w$. Denote by $T_v$ (resp.\ $T_w$) the tree
  $T\setminus v$ (resp.\ $T\setminus \{v,w\}$). Assume that $u$ is the
  only node in $T$ with degree $\Delta$. Repeating the above argument
  proves that $T$ is an extended star graph with center node $u$ and
  that all neighbors of $u$ have degree $1$ or $2$. Applying algorithm
  $\mathcal{A_{MIN}}$ shows that the largest number of fixed points is
  achieved if all $\Delta$ neighbors of $u$ have degree $2$. Then
  $n=2\Delta +1$, $r=\Delta$ and $r_0=\lfloor \Delta/2\rfloor$ by
  Eq.~(\ref{eq:2}). If $\Delta \equiv 0(2)$ then as shown above
  \[\abs{\mathcal{F}_{\mathcal{{M}}}(T)}=2\sum_{i=0}^{\Delta/2} \binom{\Delta}{i} =
    2^\Delta + \binom{\Delta}{\Delta/2}\le 2 F_{3\Delta/2 +1}=
    2F_{n-\lceil\Delta/2\rceil}.\] 
  The case $\Delta \equiv 1(2)$ is proved similarly. Therefore, we can
  assume that $T_w$ contains a node with degree $\Delta$. Thus, by
  induction:
  $2F_{n-1-\lceil\Delta/2\rceil} \ge
  \abs{\mathcal{F}_{\mathcal{{M}}}(T_v)}$ and
  $2F_{n-2-\lceil\Delta/2\rceil} \ge
  \abs{\mathcal{F}_{\mathcal{{M}}}(T_w)}$.

  Denote by $\mathcal{F}_=$ (resp.\ $\mathcal{F}_{\not=}$) the set of
  fixed points $c$ of $T$ with $c(u)= c(w)$ (resp.\ $c(u)\not= c(w)$).
  Note that $c(v) = 1- c(w)$. If $c\in \mathcal{F}_{\not=}$ then
  $c\in {\cal F}_{\mathcal{{M}}}(T_v)$ and if $c\in \mathcal{F}_=$
  then $c\in {\cal F}_{\mathcal{{M}}}(T_w)$. Hence,
  \[\abs{\mathcal{F}_{\mathcal{{M}}}(T)} = \abs{\mathcal{F}_{\not=}} +
  \abs{\mathcal{F}_=} \le 2F_{n-1-\lceil\Delta/2\rceil} +
  2F_{n-2-\lceil\Delta/2\rceil} = 2F_{n-\lceil\Delta/2\rceil}.\]
\end{proof}


Fig.~\ref{fig:noFib} shows that the bound of
Theorem~\ref{theo:count_fix} is not sharp. Let $B_h$ be a binary tree
of depth $h$. The equation
\[\abs{\mathcal{F}_{\mathcal{M}}(B_h)}
=\abs{\mathcal{F}_{\mathcal{M}}(B_{h-1})}(\abs{\mathcal{F}_{\mathcal{M}}(B_{h-1})}
+ 2\abs{\mathcal{F}_{\mathcal{M}}(B_{h-2})}^2)\] already contained in
\cite{Rastislav:2001} directly follows from
Theorem~\ref{theo:min_fix}.

\vspace*{-2mm}
\begin{figure}[h]
  \hfill
  \includegraphics[scale=0.95]{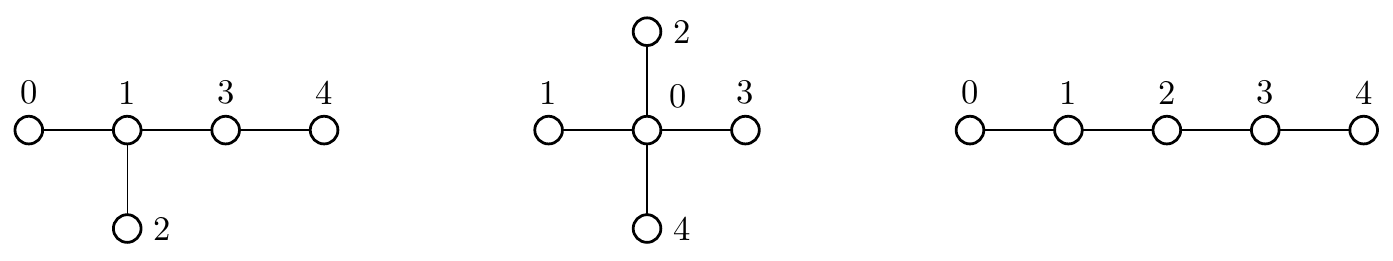}
  \hfill\null
  \caption{Three trees with five nodes having 4, 2, and 6 fixed points
    for $\mathcal{MIN}$.}\label{fig:noFib}
\end{figure}
\vspace*{-4mm}


\subsection{Generating Fixed Points}
\label{sec:enum-fixed-points}

The fixed points of a tree $T$ can be generated by iterating over all
subsets of $E^2(T)$ and outputting the legal ones. The algorithm exploits
the fact that $E_{fix}(T)$ has the hereditary property, i.e., if
$X\in E^2(T)$ is legal, all subset of $X$ are also legal.
Algorithm~\ref{alg:allfix} describes an output-sensitive algorithm
running in time
$O(n +\abs{\mathcal{F}_{\mathcal{M}}(T)}\times \abs{E^2(T)})$. Since
$\abs{E^2(T)} \le n$ the running time is in
$O(n\abs{\mathcal{F}_{\mathcal{M}}(T)})$. If
$E^2(T)=\{e_1,\ldots,e_l\}$ then the
algorithm successively constructs the set of all legal subsets using
the edges $\{e_1,\ldots,e_i\}$ for $i=0,\ldots,l$. The inner {\bf
  foreach}-loop always iterates over the list {\em fixedPoints}
beginning at the first entry.

 \SetProgSty{}
 \begin{algorithm}
  \Input{A tree $T=(V,E)$}
  \BlankLine
  $E^2 := \{(u,w)\in E\suchthat deg(u)\ge 2 \text{ and } deg(w)\ge
  2\}$\; $\mathrm{fixedPoints} :=\emptyset$;
  $\mathrm{fixedPoints.append}(\emptyset)$\;
  \ForEach{$e \in  E^2$}{
  $\mathrm{count}:=\mathrm{fixedPoints.size}()$\;
    \ForEach{$\mathrm{X}\in
      \mathrm{fixedPoints}$}{\If{$\{e\} \cup
        \mathrm{X} \text{ is
          legal}$}{$\mathrm{fixedPoints.append}(\mathrm{\{e\} \cup \mathrm{X}})$\;}
      $\mathrm{count}:=\mathrm{count}-1$\;
    \If{$\mathrm{count}==0$}{\Break\;}}
    }
  \KwRet{$\mathrm{fixedPoints}$\;}
  \caption{Algorithm to generate a list of all fixed points of a tree
    $T(V,E)$}\label{alg:allfix}
 \end{algorithm}

 \begin{theorem}
   Algorithm~\ref{alg:allfix} computes all $\abs{\mathcal{F}_{\mathcal{M}}(T)}$
   fixed points of a tree $T$ in time
   $O(n +\abs{\mathcal{F}_{\mathcal{M}}(T)}\times\abs{E^2(T)})$ using
   $O(\abs{E^2(T)}\times\abs{\mathcal{F}_{\mathcal{M}}(T)})$ memory.
 \end{theorem}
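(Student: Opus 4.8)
The plan is to split the argument into a correctness part and a resource-analysis part. For correctness I would establish the loop invariant that, after the outer \textbf{foreach} has processed the first $i$ edges $e_1,\dots,e_i$ of $E^2(T)$, the list \emph{fixedPoints} contains exactly the legal subsets of $\{e_1,\dots,e_i\}$, each stored once. Taking $i=l=\abs{E^2(T)}$ then shows the list equals $E_{fix}(T)$, which by Theorem~\ref{theo:min_fix} is in bijection with $\mathcal{F}_{\mathcal{M}}(T)^+$ and hence encodes, via $\mathcal{B}_{fix}$ and complementation (to obtain the colorings giving $v^\ast$ color $1$), all $\abs{\mathcal{F}_{\mathcal{M}}(T)}=2\abs{E_{fix}(T)}$ fixed points.

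First I would prove the invariant by induction on $i$. The base case holds because \emph{fixedPoints} is initialized to $\{\emptyset\}$, the unique legal subset of the empty edge set. For the inductive step, the variable \emph{count} is frozen to the list size before $e_{i+1}$ is handled and the \textbf{break} fires once it reaches $0$, so the inner loop scans precisely the entries present beforehand, i.e.\ (by hypothesis) the legal subsets of $\{e_1,\dots,e_i\}$, and never the freshly appended sets, which all contain $e_{i+1}$. A legal subset $Y\subseteq\{e_1,\dots,e_{i+1}\}$ with $e_{i+1}\notin Y$ is already in the list and is left in place; if instead $e_{i+1}\in Y$, the hereditary property guarantees $X:=Y\setminus\{e_{i+1}\}$ is legal, hence already present, so the scan reaches $X$ and appends $\{e_{i+1}\}\cup X=Y$. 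Conversely only legal sets are ever appended. I expect this to be the point where the hereditary property is genuinely used: it is exactly what prevents a legal $Y$ from being lost because its predecessor $X$ is missing.

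Next I would bound the resources. Computing $E^2(T)$ from the degrees costs $O(n)$, as a tree has $n-1$ edges. The running time is dominated by the number of inner iterations, $\sum_{i=1}^{l}\mathrm{count}_i$, where $\mathrm{count}_i$ is the list size when edge $e_i$ is reached; by the invariant $\mathrm{count}_i=\abs{\{X\subseteq\{e_1,\dots,e_{i-1}\}:X\text{ legal}\}}\le\abs{E_{fix}(T)}$, so summing over the $l=\abs{E^2(T)}$ edges gives at most $\abs{E^2(T)}\cdot\abs{E_{fix}(T)}=O(\abs{E^2(T)}\cdot\abs{\mathcal{F}_{\mathcal{M}}(T)})$ iterations. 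To keep each iteration at $O(1)$ I would augment every stored subset $X$ with a sparse table of the counts $X_v$ over the nodes incident to $X$; since $X$ is already legal, testing whether $\{e\}\cup X$ with $e=(u,w)$ is legal reduces to the two $O(1)$ checks $X_u+1\le deg(u)/2$ and $X_w+1\le deg(w)/2$. Each of the $\abs{E_{fix}(T)}$ appends copies the edge list and this table in $O(\abs{E^2(T)})$ time, contributing another $O(\abs{E^2(T)}\cdot\abs{\mathcal{F}_{\mathcal{M}}(T)})$ and establishing the time bound; storing $\abs{E_{fix}(T)}$ subsets of size $O(\abs{E^2(T)})$ each yields the memory bound $O(\abs{E^2(T)}\times\abs{\mathcal{F}_{\mathcal{M}}(T)})$.

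The step I expect to be the main obstacle is making the legality test cheap enough without breaking the space budget: a naive test that rescans $X$ to recompute $X_u$ and $X_w$ would inflate the time to $O(\abs{E^2(T)}^2\abs{\mathcal{F}_{\mathcal{M}}(T)})$, so the whole bound rests on maintaining the incremental, sparsely stored degree-usage counts and verifying that neither the per-iteration lookup nor the per-append copy exceeds the budget. The double-counting identity $\sum_i\mathrm{count}_i\le\abs{E^2(T)}\abs{E_{fix}(T)}$ is the other place needing care, since it relies on the invariant to identify $\mathrm{count}_i$ with the number of legal subsets of the scanned prefix.
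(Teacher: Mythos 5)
Your proof is correct, and your correctness half (the loop invariant on prefixes of $E^2(T)$ plus the hereditary property, then Theorem~\ref{theo:min_fix} and complementation) is essentially the paper's argument, just spelled out in more detail. The resource analysis, however, takes a genuinely different route. The paper bounds the \emph{total} number of inner-loop iterations by an amortized growth argument: for each edge $e=(u,w)$ it classifies the stored sets $X$ according to whether adding $e$ would overflow the degree constraint at $u$, at $w$, or at both, charges them in groups of $4$, $2$, $1$ against groups of $5$, $3$, $2$ in the next list, and concludes $(4/5)\abs{S_{i+1}}\ge \abs{S_i}$, whence $\sum_{i=1}^l\abs{S_i}\le 4(\abs{S_{l+1}}-1)<4\abs{\mathcal{F}(T)^+}$; with only $O(\abs{\mathcal{F}_{\mathcal{M}}(T)})$ iterations in total, the naive $O(\abs{X})$ legality-test-and-append suffices per iteration. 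You instead use the trivial bound $\mathrm{count}_i\le\abs{E_{fix}(T)}$, accept $\abs{E^2(T)}\cdot\abs{\mathcal{F}_{\mathcal{M}}(T)^+}$ iterations, and compensate by making each test $O(1)$ via per-subset incidence-count tables copied in $O(\abs{E^2(T)})$ per append. Both accountings land on the same bound $O(n+\abs{\mathcal{F}_{\mathcal{M}}(T)}\times\abs{E^2(T)})$ and the same memory bound. What the paper's sharper charging buys is a byproduct your analysis cannot deliver: the lower bound $\abs{\mathcal{F}_{\mathcal{M}}(T)}\ge (5/4)^{\abs{E^2(T)}}$ stated immediately after the theorem. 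What your route buys is robustness and simplicity --- no charging scheme --- at the price of an auxiliary data structure that the paper's plain edge-list representation does not need; one small detail worth fixing there is to make the count lookup worst-case $O(1)$, e.g.\ by preindexing in $O(n)$ the at most $2\abs{E^2(T)}$ nodes incident to $E^2(T)$ and giving each stored subset an array over these indices, rather than relying on expected-time hashing.
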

 \begin{proof}
   By Theorem~\ref{theo:min_fix} each legal subset of $E^2(T)$
   uniquely corresponds to a fixed point of $T$. If a subset $S$ of
   $E^2(T)$ is not legal, then no superset of $S$ is legal and if $S$
   is legal then all subsets of $S$ are legal. Therefore, the
   algorithm generates all legal subsets of $E^2(T)$. Let
   $l=\abs{E^2(T)}$. Denote by $S_i$ the set of elements of the list
   $fixedPoints$ at the beginning of the $i^{th}$ outer {\bf
     foreach}-loop and $S_{l+1}$ the elements of $fixedPoints$ after
   the last execution. Then $\abs{S_1}=1$ and
   $\abs{\mathcal{F}(T)^+}=\abs{S_{l+1}}$.

   Next we prove that $(4/5)\abs{S_{i+1}}\ge \abs{S_i}$ for
   $i=1,\ldots, l$. Let $e=(u,w)\in E^2(T)$. For $X\in S_i$ denote the
   number of edges in $X$ that are incident with a node $v$ by $X_v$.
   Let $\bar{S}=S_i$ and $\hat{S}=\emptyset$. Let $X\in \bar{S}$ with
   $X_u+1> deg(u)/2$ and $X_w+1> deg(w)/2$. Let $e_u$ (resp.\ $e_w$)
   be an edge of $X$ that is incident with $u$ (resp.\ $w$). Then we
   remove $X, X\setminus \{e_u,e_w\}, X\setminus \{e_u\}$, and
   $X\setminus \{e_w\}$ from $\bar{S}$ and insert
   $X, X\setminus \{e_u,e_w\}, X\setminus \{e_u\}$,
   $X\setminus \{e_w\}$, and $X\setminus \{e_u,e_w\}\cup \{e\}$ into
   $\hat{S}$. We repeat this process until there is no $X$ in
   $\bar{S}$ with the above property. Next, let $X\in \bar{S}$ with
   $X_u+1> deg(u)/2$ and $X_w+1\le deg(w)/2$. Let $e_u$ be an edge of
   $X$ that is incident with $u$. Then we remove $X$, and
   $X\setminus \{e_u\}$ from $\bar{S}$ and insert
   $X, X\setminus \{e_u\}, X\setminus \{e_u\}\cup \{e\}$ into
   $\hat{S}$. We repeat this process until there is no $X$ in
   $\bar{S}$ with the above property. Finally, for the remaining
   $X\in \bar{S}$ we insert $X, X\cup \{e\}$ into $\hat{S}$. Assume,
   that $S_i$ contains $n_1,n_2$ resp.\ $n_3$ elements according to
   the above classification, then \[\abs{S_i}=4n_1+2n_2+n_3 \text{ and }
   \abs{\hat{S}}=5n_1+3n_2+2n_3.\] Since $S_{i+1}= \hat{S}$ we have
   $(4/5)\abs{S_{i+1}}\ge \abs{S_i}$.
   The overall number of executions of the inner {\bf foreach}-loop is
   $\sum_{i=1}^l\abs{S_i}$. Thus,
   \[\sum_{i=1}^l\abs{S_i} \le   (4/5)\sum_{i=2}^{l+1}\abs{S_i}= 
     (4/5)\sum_{i=1}^{l}\abs{S_i} + (4/5)(\abs{S_{l+1}}-1).\]
   Hence,
   \[\sum_{i=1}^l\abs{S_i} \le 4 (\abs{S_{l+1}}-1) <
     4 \abs{\mathcal{F}(T)^+}.\] In time $O(n)$ we provide the degrees
   of all nodes in an array. Also the test whether $X\cup e$ is legal
   and append the entry to the list can be performed in time
   $O(\abs{X})$.
 \end{proof}

 The bound $(4/5)\abs{S_{i+1}}\ge \abs{S_i}$ for all $i$ can be used
 to prove the lower bound of $((5/4)^l$ with $l=\abs{E^2(T)}$ for
 $\abs{\mathcal{F}_{\mathcal{M}}(T)}$. We conjecture that a more detailed analysis
 of the relation between $\abs{S_{i+1}}$ and $\abs{S_{i}}$ leads to a
 better bound.

 Finally, we sketch an alternative approach for computing all fixed
 points. The elements of $E_{fix}(T)$ correspond to the solutions of a
 system of linear diophantine inequalities $Ax\le b$. Here, $A$ is a
 binary $\abs{E^2(T)} \times n$ matrix, where $a_{i,j}=1$ if node $i$
 is incident with edge $j$ of $E^2(T)$ and
 $b_i=\lfloor deg_T(i)/2\rfloor$. Thus, by Theorem~\ref{theo:min_fix}
 the set of fixed points corresponds to the solutions of $Ax\le b$.
 Unfortunately there isn't much work available for solving systems of
 linear diophantine inequalities \cite{Gao:2008}.


 \section{General 2-Cycles}
 In this section we analyze the structure of
 $\mathcal{C}^2_{\mathcal{M}}(T)$. First we collect general results
 about colorings from $\mathcal{C}^2_{\mathcal{MIN}}(T)$. In the
 second subsection we consider the set
 $c \in \mathcal{P}_{\mathcal{M}}(T)$ of all pure colorings. We first
 prove properties of $c$ and use these to define the set
 $E_{pure}(T)$ and define a bijection $\mathcal{B}_{pure}$ between
 $E_{pure}(T)$ and $\mathcal{P}_{\mathcal{M}}(T)^+$. Since
 $E_{pure}(T)\not= \emptyset$ this shows that every tree has pure
 coloring. These results immediately lead to a simple characterization
 pure coloring of paths. In the third subsection we derive from
 $\mathcal{B}_{pure}$ an upper bound for
 $\abs{\mathcal{P}_{\mathcal{{M}}}(T)}$ in terms of $n$. Finally we
 consider the general case of 2-cycles. We prove that $T$ decomposes
 into subtrees, such that $c$ is either a fixed point or a pure
 coloring on each of these subtrees.
 These subtrees provide the basis to define a hyper structure of a
 tree, called the {\em block tree}. After analyzing properties of block
 trees we define a set $E_{block}(T)$ of subsets of the edge set of a
 tree $T$ and show in Theorem~\ref{theo:main_2cycle_min} that the
 elements of $E_{block}(T)$ correspond one-to-one with the block trees
 of $T$. Since $E_{block}(T)$ does not have the hereditary property,
 we cannot use the approach of Algorithm~\ref{alg:allfix} to enumerate
 all block trees.

\subsection{General Results}
Let $c \in \mathcal{C}^2_{\mathcal{M}}(T)$. We separate the nodes of
$T$ in two groups. A node $u$ is called a {\em fixed node} for $c$ if
$\mathcal{M}(c)(u)=c(u)$; it is called a {\em toggle node} for $c$ if
$\mathcal{M}(c)(u)\not=c(u)$. Note that in any case
$\mathcal{M}(\mathcal{M}(c))(u)=c(u)$. Denote by $N^i_f(u)$ (resp.\
$N^i_t(u)$) the number of neighbors of $u$ with color $i$ that are
fixed (resp.\ toggle) nodes for $c$.

First, we provide a simple characterization of fixed and toggle nodes for
$\mathcal{MIN}$ and $\mathcal{MAJ}$.

\begin{lemma}\label{lem:basic_eq_min}
  Let $T$ be a tree and $c\in \mathcal{C}^2_{\mathcal{MIN}}(T)$. A
  node $u$ of $T$ is a fixed node of $c$ if and only if
  \[\abs{N_t^{1-c(u)}(u) - N_t^{c(u)}(u)} \le N_f^{1-c(u)}(u) -
  N_f^{c(u)}(u)\] and a toggle node of $c$ if and only if
  \[\abs{N_f^{c(u)}(u) - N_f^{1-c(u)}(u)} < N_t^{c(u)}(u) -
  N_t^{1-c(u)}(u).\]
\end{lemma}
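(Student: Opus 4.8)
The plan is to unfold the 2-cycle into its two constituent colorings and read off the minority rule at $u$ in each. Write $c' = \mathcal{MIN}(c)$; since $c$ is a 2-cycle we also have $\mathcal{MIN}(c') = c$. By definition $u$ is a fixed node exactly when $c'(u) = c(u)$ and a toggle node when $c'(u) \neq c(u)$, and in either case the defining requirement is really a pair of conditions: the minority rule must be consistent both with the step $c \to c'$ and with the step $c' \to c$. So I would extract two scalar inequalities per node, one from each round, and then fuse them into a single absolute-value statement.

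The bookkeeping step is to express $\abs{N^{c(u)}(u)}$ and $\abs{N^{1-c(u)}(u)}$, evaluated in $c$ and in $c'$, through the four quantities $N_f^{c(u)}(u)$, $N_f^{1-c(u)}(u)$, $N_t^{c(u)}(u)$, $N_t^{1-c(u)}(u)$. In the coloring $c$ these counts are simply the sums $N_f^{i}(u)+N_t^i(u)$. The single observation that drives everything is that in passing from $c$ to $c'$ a fixed neighbor keeps its color while a toggle neighbor swaps it; hence in $c'$ the color-$i$ neighbors of $u$ are the fixed neighbors that had color $i$ together with the toggle neighbors that had color $1-i$. Thus each toggle term moves to the opposite side of the count, while each fixed term stays put.

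Now I would split on the two cases. For a fixed node ($c'(u)=c(u)$) the minority rule in round $c$ gives $N_f^{c(u)}(u)+N_t^{c(u)}(u) \le N_f^{1-c(u)}(u)+N_t^{1-c(u)}(u)$, and in round $c'$ (where $u$ keeps color $c(u)$ but its toggle neighbors have flipped) gives $N_f^{c(u)}(u)+N_t^{1-c(u)}(u) \le N_f^{1-c(u)}(u)+N_t^{c(u)}(u)$. Rearranged, these read $N_t^{c(u)}(u)-N_t^{1-c(u)}(u) \le N_f^{1-c(u)}(u)-N_f^{c(u)}(u)$ and $N_t^{1-c(u)}(u)-N_t^{c(u)}(u) \le N_f^{1-c(u)}(u)-N_f^{c(u)}(u)$, which together are precisely $\abs{N_t^{1-c(u)}(u)-N_t^{c(u)}(u)} \le N_f^{1-c(u)}(u)-N_f^{c(u)}(u)$. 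For a toggle node ($c'(u)=1-c(u)$) the reference color of $u$ itself inverts in $c'$, interchanging the two sides of the count; both rounds now produce strict inequalities, and combining them yields $\abs{N_f^{c(u)}(u)-N_f^{1-c(u)}(u)} < N_t^{c(u)}(u)-N_t^{1-c(u)}(u)$. Since every node is either fixed or toggle, I would finish by checking that the two displayed conditions are mutually exclusive, which upgrades each forward implication to the stated biconditional.

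I expect the main obstacle to be nothing more than the sign and direction bookkeeping: keeping straight the strict versus non-strict inequalities (the minority tie-break makes ``stay'' non-strict and ``flip'' strict), and correctly tracking that when $u$ is itself a toggle node its own reference color inverts in $c'$, so that the two sides of every neighbor count are swapped before the second inequality is read off. Once these conventions are pinned down, the two per-round inequalities collapse into the absolute-value forms mechanically, and the $\mathcal{MAJ}$ analogue would follow by the same computation with the inequalities reversed.
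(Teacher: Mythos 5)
Your proposal is correct and takes essentially the same approach as the paper: both unfold the 2-cycle into its two rounds, use the observation that in $\mathcal{MIN}(c)$ toggle neighbors swap color while fixed neighbors keep it (so each $N_t^i(u)$ term moves to the opposite side of the count), and fuse the non-strict ``stay'' and strict ``flip'' inequalities from the two rounds into the stated absolute-value conditions. The only cosmetic difference is the closing step: the paper proves sufficiency directly by reading the two inequalities back into the minority rule (noting that $\mathcal{MIN}(\mathcal{MIN}(c))(u)=c(u)$ holds automatically for a 2-cycle), whereas you invoke the fixed/toggle dichotomy together with mutual exclusivity of the two conditions --- an equally valid finish, and the exclusivity check you defer is indeed a one-line contradiction.
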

\begin{proof}
  Let $u$ be a fixed node of $c$, i.e., $\mathcal{MIN}(c)(u)=c(u)$.
  Then
  $N_t^{c(u)}(u) + N_f^{c(u)}(u)\le N_t^{1-c(u)}(u) +
  N_f^{1-c(u)}(u)$. Since $\mathcal{M}(\mathcal{M}(c))(u)=c(u)$ we
  also have
  $N_t^{1-c(u)}(u) + N_f^{c(u)}(u)\le N_t^{c(u)}(u) +
  N_f^{1-c(u)}(u)$. This yields
  $-(N_f^{1-c(u)}(u) - N_f^{c(u)}(u)) \le N_t^{1-c(u)}(u) -
    N_t^{c(u)}(u) \le N_f^{1-c(u)}(u) - N_f^{c(u)}(u)$ which proves
  that the condition is necessary.

  Next assume
  $\abs{N_t^{1-c(u)}(u) - N_t^{c(u)}(u)} \le N_f^{1-c(u)}(u) -
  N_f^{c(u)}(u)$. Then
  $N_t^{c(u)}(u) - N_t^{1-c(u)}(u) \le N_f^{1-c(u)}(u) -
  N_f^{c(u)}(u)$ and $N^{c(u)}(u) \le N^{1-c(u)}(u)$. Hence,
  $\mathcal{M}(c)(u)=c(u)$. The assumption also implies that
  $N_t^{1-c(u)}(u) - N_t^{c(u)}(u) \le N_f^{1-c(u)}(u) -
  N_f^{c(u)}(u)$ resp.
  $N_f^{c(u)}(u) +N_t^{1-c(u)}(u) \le N_f^{1-c(u)}(u) +
  N_t^{c(u)}(u)$. This yields $\mathcal{M}(\mathcal{M}(c))(u)=c(u)$.
  Hence, the condition is sufficient. The result for toggle nodes is
  proved similarly.
\end{proof}

The proof of the following lemma is similar to the proof of
Lemma~\ref{lem:basic_eq_min}.

\begin{lemma}\label{lem:basic_eq_maj}
  Let $T$ be a tree and $c\in \mathcal{C}^2_{\mathcal{MAJ}}(T)$. A
  node $u$ of $T$ is a fixed node of $c$ if and only if
  \[\abs{N_t^{1-c(u)}(u) - N_t^{c(u)}(u)} \le N_f^{c(u)}(u) -
  N_f^{1-c(u)}(u)\] and a toggle node of $c$ if and only if
  \[\abs{N_f^{c(u)}(u) - N_f^{1-c(u)}(u)} < N_t^{1-c(u)}(u) -
  N_t^{c(u)}(u).\]
\end{lemma}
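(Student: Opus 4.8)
The plan is to mirror the proof of Lemma~\ref{lem:basic_eq_min}, replacing the minority update by the majority rule throughout. The only conceptual change is that under $\mathcal{MAJ}$ a node keeps its color exactly when its own color is in the (weak) majority among its neighbors, and flips exactly when its color is in the strict minority; consequently every inequality that governed fixed versus toggle nodes in the minority case appears here with its sense reversed, and the tie-breaking convention is what distinguishes the weak inequality in the fixed-node condition from the strict inequality in the toggle-node condition.

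First I would treat the fixed-node direction. Fix a node $u$ with $\mathcal{MAJ}(c)(u)=c(u)$, and write $N^{c(u)}(u)=N_f^{c(u)}(u)+N_t^{c(u)}(u)$ together with the analogous decomposition for color $1-c(u)$. The majority rule applied to $c$ gives $N^{c(u)}(u)\ge N^{1-c(u)}(u)$, which rearranges to $N_f^{c(u)}(u)-N_f^{1-c(u)}(u)\ge N_t^{1-c(u)}(u)-N_t^{c(u)}(u)$. I would then pass to the coloring $c'=\mathcal{MAJ}(c)$ and record that under $c'$ each fixed neighbor keeps its color while each toggle neighbor flips, so the number of neighbors of $u$ colored $c(u)$ becomes $N_f^{c(u)}(u)+N_t^{1-c(u)}(u)$ and those colored $1-c(u)$ becomes $N_f^{1-c(u)}(u)+N_t^{c(u)}(u)$. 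Since $u$ is fixed we have $c'(u)=c(u)$, and the 2-cycle property forces $\mathcal{MAJ}(c')(u)=c(u)$, giving the companion inequality $N_f^{c(u)}(u)-N_f^{1-c(u)}(u)\ge N_t^{c(u)}(u)-N_t^{1-c(u)}(u)$. The two inequalities together are precisely $N_f^{c(u)}(u)-N_f^{1-c(u)}(u)\ge \abs{N_t^{1-c(u)}(u)-N_t^{c(u)}(u)}$, the claimed condition. For the converse I would run this backwards: the absolute-value inequality supplies both component inequalities, the first of which yields $\mathcal{MAJ}(c)(u)=c(u)$, so $u$ is a fixed node.

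The toggle-node direction is completely analogous but with strict inequalities. From $\mathcal{MAJ}(c)(u)=1-c(u)$ the strict-minority rule gives $N_t^{1-c(u)}(u)-N_t^{c(u)}(u)> N_f^{c(u)}(u)-N_f^{1-c(u)}(u)$, and applying the same relabelling of neighbor counts to $c'$ together with the return step $\mathcal{MAJ}(c')(u)=c(u)$ yields $N_t^{1-c(u)}(u)-N_t^{c(u)}(u)> N_f^{1-c(u)}(u)-N_f^{c(u)}(u)$. Combining the two strict inequalities gives $N_t^{1-c(u)}(u)-N_t^{c(u)}(u)> \abs{N_f^{c(u)}(u)-N_f^{1-c(u)}(u)}$, as stated, and the converse again follows by reading the chain in reverse. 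Since fixed and toggle nodes are mutually exclusive and exhaust $V$, the two equivalences are consistent.

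The one point that needs genuine care, and the analogue of the delicate step in Lemma~\ref{lem:basic_eq_min}, is the bookkeeping when the neighbor counts are relabelled after one round of $\mathcal{MAJ}$: one must track that a toggle neighbor of color $i$ under $c$ contributes to the count of color $1-i$ under $c'$, and that the convention of a weak inequality for staying but a strict inequality for flipping is applied consistently in both the forward and the return step. Once this relabelling is set up correctly, the two component inequalities collapse into the single absolute-value condition exactly as in the minority proof, and no further estimates are required.
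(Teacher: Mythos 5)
Your proposal is correct and follows exactly the route the paper intends: the paper proves Lemma~\ref{lem:basic_eq_maj} only by remarking that it is ``similar to the proof of Lemma~\ref{lem:basic_eq_min}'', and your argument is precisely that adaptation, pairing the round-one majority condition with the return-step condition obtained from $\mathcal{M}(\mathcal{M}(c))(u)=c(u)$ after relabelling toggle neighbors, then collapsing the two component inequalities into the absolute-value form (and correctly noting that sufficiency needs only the first component). No gaps.
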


\subsection{Pure 2-Cycles}
If $c \in \mathcal{P}_{\mathcal{M}}(T)$ then each node of $T$ is a
toggle node. In Theorem~\ref{theo:main_pure} we give a
characterization $\mathcal{P}_{\mathcal{M}}(T)$, it allows to generate
all pure 2-cycles and compute $\abs{\mathcal{P}_{\mathcal{M}}(T)}$.

\begin{lemma}\label{lem:base:pure}
  Let $T$ be a tree, $c\in \mathcal{C}_{\mathcal{M}}(T)$. Then
  $c \in \mathcal{P}_{\mathcal{MIN}}(T)$ (resp.\
  $c \in \mathcal{P}_{\mathcal{MAJ}}(T)$) if and only if
  $N^{c(u)}(u)> N^{1-c(u)}(u)$ (resp.\ $N^{c(u)}(u)< N^{1-c(u)}(u)$)
  for each node $u$.
\end{lemma}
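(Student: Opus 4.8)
The plan is to prove both directions by unwinding the definitions of the minority and majority processes at a single node, using the fact that for a pure coloring \emph{every} node is a toggle node. Since $c \in \mathcal{P}_{\mathcal{M}}(T)$ means $\mathcal{M}(c)(v) \neq c(v)$ for all $v$, I can drop the fixed/toggle bookkeeping from Lemma~\ref{lem:basic_eq_min} and Lemma~\ref{lem:basic_eq_maj} entirely: there are no fixed neighbors, so $N^i_f(u) = 0$ and $N^i(u) = N^i_t(u)$ for all $u$ and $i \in \{0,1\}$. This collapses the general toggle-node inequalities into the claimed strict inequality on raw neighbor counts.

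First I would treat $\mathcal{M} = \mathcal{MIN}$. For the forward direction, assume $c \in \mathcal{P}_{\mathcal{MIN}}(T)$ and fix a node $u$. Since $u$ toggles, $\mathcal{MIN}(c)(u) = 1 - c(u)$, which by the definition of $\mathcal{MIN}$ forces the strict inequality $\abs{N^{c(u)}(u)} > \abs{N^{1-c(u)}(u)}$ (a tie or minority majority-of-same would leave the color fixed). That is exactly the claimed condition. For the converse, suppose $N^{c(u)}(u) > N^{1-c(u)}(u)$ holds at every node $u$. Then each node is a toggle node in the first round, so $\mathcal{MIN}(c)(u) = 1 - c(u)$ for all $u$, i.e.\ $\mathcal{MIN}(c) = \bar{c}$, the complement of $c$. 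The only remaining point to check is that $c$ is genuinely a pure $2$-cycle rather than merely toggling once; but applying the same neighbor-count condition to $\bar{c}$ shows that the counts $N^{c(u)}(u)$ and $N^{1-c(u)}(u)$ simply swap roles under complementation, so $\mathcal{MIN}(\bar{c})(u) = c(u)$ for every $u$, giving $\mathcal{MIN}(\mathcal{MIN}(c)) = c$ and purity.

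The majority case is entirely dual: with no fixed neighbors, a node $u$ toggles under $\mathcal{MAJ}$ exactly when it is in the strict minority, i.e.\ $N^{c(u)}(u) < N^{1-c(u)}(u)$, since $\mathcal{MAJ}$ keeps the color whenever the current color is at least tied. The forward and converse directions then mirror the minority argument verbatim, and the complementation check $\mathcal{MAJ}(\bar{c}) = c$ goes through identically because complementing again swaps the two neighbor counts.

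I expect no serious obstacle here, as the lemma is essentially a specialization of Lemmas~\ref{lem:basic_eq_min} and~\ref{lem:basic_eq_maj} to the all-toggle regime. The one subtlety worth stating carefully is the tie-breaking convention: because ``in case of a tie the color remains unchanged,'' a node on the boundary ($N^{c(u)}(u) = N^{1-c(u)}(u)$) is a fixed node, so the inequalities characterizing a \emph{toggle} at every node must be strict in exactly the direction claimed. Making sure the strict-versus-weak distinction lands on the correct side of each inequality for each of the two processes is the only place where a sign error could creep in, so I would verify the tie case explicitly against the definitions of $\mathcal{MIN}$ and $\mathcal{MAJ}$.
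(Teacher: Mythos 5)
Your proof is correct and follows essentially the same route as the paper: the sufficiency direction is the same complementation argument (all nodes toggle, so $\mathcal{M}(c)=\bar{c}$, and the own-color neighbor counts are preserved under complementing, giving $\mathcal{M}(\mathcal{M}(c))=c$), while your necessity direction unwinds the process definition directly, which is just the $N_f^i(u)=0$ specialization of Lemma~\ref{lem:basic_eq_min} (resp.\ Lemma~\ref{lem:basic_eq_maj}) that the paper cites. Your explicit attention to the tie-breaking convention and the strictness of the inequalities is exactly the right place to be careful, and you land on the correct side for both processes.
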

\begin{proof} We present the proof for $\mathcal{M}=\mathcal{MIN}$. If
  $N^{c(u)}(u)> N^{1-c(u)}(u)$ for all nodes $u$ then all nodes are
  toggle nodes. Hence, $N^{\mathcal{MIN}(c)(u)}(u)= N^{1-c(u)}(u)$.
  Thus, $N^{1-\mathcal{MIN}(c)(u)}(u) >N^{\mathcal{MIN}(c)(u)}(u)$,
  i.e., $\mathcal{MIN}(\mathcal{MIN}(c)) = c$, thus
  $c\in \mathcal{C}^2_{\mathcal{MIN}}(T)$. Since no nodes are fixed
  nodes we have $c\in \mathcal{P}_{\mathcal{MIN}}(T)$. The opposite
  statement follows from Lemma~\ref{lem:basic_eq_min}.
\end{proof}

As in Sec.~\ref{sec:fix} we use properties of monochromatic edges to
characterize pure 2-cycles. Corollary~\ref{col:monochr} is similar to
Lemma~\ref{lem:basic} and is used to define the set $E_{pure}(T)$.

\begin{lemma}\label{lem:Pure}
  Let $T$ be a tree, $c\in \mathcal{P}_{\mathcal{M}}(T)$, and
  $e=(u,w)\in E$ with $c(u)\not=c(w)$ if $\mathcal{M}=\mathcal{MIN}$
  and $c(u)=c(w)$ if $\mathcal{M}=\mathcal{MAJ}$. Let $T_u$ (resp.\
  $T_w$) be the subtree of $T\setminus e$ that contains $u$ (resp.\
  $w$). Then $u$ and $w$ have degree at least 3, $T_u$ and $T_w$
  contain at least 3 nodes, and $c$ induces a pure 2-cycle on both
  subtrees.
\end{lemma}
\begin{proof}
  We state the proof for $\mathcal{{M}}=\mathcal{{MIN}}$. Since $c$ is
  pure we have $N^{c(u)}_T(u) > N^{1-c(u)}_T(u)$ and since
  $c(u)\not=c(w)$ we also have $N^{1-c(u)}_T(u)\ge 1$. Hence,
  $deg(u)= N^{c(u)}_T(u) + N^{1-c(u)}_T(u)\ge 3$. Similarly
  $deg(w)\ge 3$.
  Let $v\in T_u$. If $v\not=u$ then all neighbors of $v$ in $T$ are in
  $T_u$ and thus
  $\abs{N^{c(u)}_{T_u}(u)} > \abs{N^{1-c(u)}_{T_u}(u)}$. Next consider
  the case $v=u$. Since $c$ is pure, there exists in $N(u)$ at least
  one more node with color $c(u)$ than with color $c(w)$. Thus, $u$
  has at least two neighbors in $T_u$, hence $T_u$ contains at least
  three nodes. Since
  $\abs{N^{c(u)}_{T_u}(u)}=\abs{N^{c(u)}_T(u)} > \abs{N^{1-c(u)}_T(u)}
  = \abs{N^{1-c(u)}_{T_u}(u)}+1$ we have
  $\abs{N^{c(u)}_{T_u}} > \abs{N^{1-c(u)}_{T_u}(u)}$. Hence,
  Lemma~\ref{lem:base:pure} implies that $c$ induces a pure 2-cycle
  for $\mathcal{{MIN}}$ on $T_u$. The same is true for $T_w$.
\end{proof}

\begin{corollary}\label{col:monochr}
  Let $T$ be a tree. If $c\in \mathcal{P}_{\mathcal{MIN}}(T)$,
  $F_c=\{(u,w)\in E \suchthat c(u) \not= c(w)\}$, and
  $\hat{T}\in {\cal C}_T(F_c)$ then $\abs{\hat{T}}\ge 3$ and $c$
  induces a monochromatic coloring on $\hat{T}$. If
  $c\in \mathcal{P}_{\mathcal{MAJ}}(T)$,
  $F_c=\{(u,w)\in E \suchthat c(u) = c(w)\}$, and
  $\hat{T}\in {\cal C}_T(F_c)$ then $\abs{\hat{T}}\ge 3$ and $c$
  induces an independent coloring on $\hat{T}$. Furthermore, $(F_c)_v
  < deg_T(v)/2$ for $v\in V$.
\end{corollary}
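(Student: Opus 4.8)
The plan is to extract Corollary~\ref{col:monochr} almost directly from Lemma~\ref{lem:Pure}, treating the three assertions (size bound, induced coloring, strict degree inequality) separately and leveraging the connectivity of each component.

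First I would prove the claims about the components $\hat{T}\in\mathcal{C}_T(F_c)$. For $\mathcal{M}=\mathcal{MIN}$ the set $F_c$ consists exactly of the multichromatic edges, so the edges \emph{remaining} inside any component $\hat{T}$ are monochromatic; hence $c$ restricted to $\hat{T}$ is constant on every edge of $\hat{T}$, and since $\hat{T}$ is connected this forces $c$ to be monochromatic on $\hat{T}$. The size bound $\abs{\hat{T}}\ge 3$ I would get by invoking Lemma~\ref{lem:Pure}: any edge $e=(u,w)\in F_c$ splits $T$ into $T_u$ and $T_w$, each of which has at least three nodes and carries an induced pure 2-cycle. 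A component $\hat{T}$ is obtained by deleting all of $F_c$ at once, so $\hat{T}$ is a subtree of some such $T_u$; I would argue that repeatedly cutting one edge at a time (again using Lemma~\ref{lem:Pure} at each cut to guarantee the purity is preserved and each piece keeps at least three nodes) shows that the final component still contains at least three nodes. The cleanest way is probably to note that if some $\hat{T}$ had $\abs{\hat{T}}\le 2$, then a node $v$ in it would have too few same-colored neighbors inside $\hat{T}$, contradicting the strict inequality $N^{c(v)}(v)>N^{1-c(v)}(v)$ from Lemma~\ref{lem:base:pure}. The $\mathcal{MAJ}$ case is symmetric: there $F_c$ collects the monochromatic edges, so the surviving edges inside $\hat{T}$ are multichromatic, forcing an independent coloring, and the same purity/size argument applies via the $\mathcal{MAJ}$ half of Lemma~\ref{lem:Pure}.

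For the final inequality $(F_c)_v < deg_T(v)/2$ I would argue directly from purity. Fix a node $v$. In the $\mathcal{MIN}$ case every edge of $F_c$ incident with $v$ is multichromatic, so $(F_c)_v = \abs{N^{1-c(v)}(v)}$, while purity gives $\abs{N^{c(v)}(v)}>\abs{N^{1-c(v)}(v)}$ by Lemma~\ref{lem:base:pure}. Adding $\abs{N^{1-c(v)}(v)}$ to both sides of $deg_T(v)=\abs{N^{c(v)}(v)}+\abs{N^{1-c(v)}(v)}>2\abs{N^{1-c(v)}(v)}$ yields $(F_c)_v=\abs{N^{1-c(v)}(v)}<deg_T(v)/2$, which is the strict form needed here (strictly sharper than the weak inequality in Lemma~\ref{lem:basic}). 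The $\mathcal{MAJ}$ case is identical after swapping the roles of the two colors, since there $(F_c)_v=\abs{N^{c(v)}(v)}$ and purity reads $\abs{N^{c(v)}(v)}<\abs{N^{1-c(v)}(v)}$.

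I expect the size bound $\abs{\hat{T}}\ge 3$ to be the only delicate point, because the components of $T\setminus F_c$ arise from deleting many edges simultaneously rather than one at a time, so one cannot quote Lemma~\ref{lem:Pure} verbatim. The safest route is the direct contradiction argument sketched above: a component with one or two nodes contains a node $v$ all of whose same-colored neighbors would have to lie outside $\hat{T}$, but edges to same-colored neighbors are precisely the edges \emph{not} in $F_c$ (for $\mathcal{MIN}$), hence stay inside $\hat{T}$, contradicting $\abs{N^{c(v)}(v)}>\abs{N^{1-c(v)}(v)}\ge 0$ which forces $v$ to have at least one same-colored neighbor inside $\hat{T}$ and in fact at least two since the majority is strict. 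This makes the size bound follow from the strict inequality rather than from any iterative cutting, avoiding the subtlety entirely.
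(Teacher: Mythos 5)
Your proposal is correct in substance and, for two of the three claims, follows the paper's own route exactly: the induced monochromatic (resp.\ independent) coloring comes from the definition of $F_c$ plus connectivity of $\hat{T}$, and the strict bound $(F_c)_v < deg_T(v)/2$ comes from $(F_c)_v=\abs{N^{1-c(v)}(v)}$ together with the strict purity inequality of Lemma~\ref{lem:base:pure} (with colors swapped for $\mathcal{MAJ}$), which is precisely the paper's computation. For $\abs{\hat{T}}\ge 3$ you are in fact more scrupulous than the paper, which silently leans on Lemma~\ref{lem:Pure} even though, as you correctly observe, that lemma deletes one edge at a time and cannot be quoted verbatim for the simultaneous deletion of all of $F_c$; your direct counting argument is the cleaner way to close this.

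One step of that counting argument is imprecise as written, though: ``at least two since the majority is strict'' does not follow for an arbitrary $v\in\hat{T}$. Strictness gives $\abs{N^{c(v)}(v)}\ge \abs{N^{1-c(v)}(v)}+1$, which forces \emph{two} same-colored neighbors only when $\abs{N^{1-c(v)}(v)}\ge 1$, i.e., only when $v$ is incident with an edge of $F_c$. The fix is to choose $v$ accordingly: if $\hat{T}\not=T$, connectivity of $T$ yields a node $v\in\hat{T}$ incident with a cut edge, and then $\abs{N^{c(v)}(v)}\ge 2$ with both same-colored neighbors inside $\hat{T}$, giving $\abs{\hat{T}}\ge 3$. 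If $F_c=\emptyset$, so that $\hat{T}=T$, your argument only yields $\abs{\hat{T}}\ge 2$ --- and indeed the monochromatic coloring of the two-node tree is a pure 2-cycle for $\mathcal{MIN}$ with $\abs{\hat{T}}=2$, so the size bound genuinely needs $n\ge 3$ there. That degenerate case defeats the corollary as stated in the paper as well, so it is a defect of the statement rather than of your argument; with the choice of $v$ made explicit, your proof is complete for all nontrivial components.
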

\begin{proof}
  Assume $\mathcal{{M}}=\mathcal{{MIN}}$. If
  $\hat{T}\in {\cal C}_T(F_c)$ then $c(u)=c(w)$ for all
  $u,w\in \hat{T}$. Thus, by Lemma~\ref{lem:Pure}, $c$ induces a
  monochromatic coloring on $\hat{T}$. Similarly, $c(u)\not=c(w)$ for
  all $u,w\in \hat{T}$ for $\mathcal{{M}}=\mathcal{{MAJ}}$. Again
  Lemma~\ref{lem:Pure}, $c$ induces an independent coloring on
  $\hat{T}$. $\abs{N^{c(v)}(v)}>\abs{N^{1-c(v)}(v)} = (F_c)_v$ since
  $c$ is pure. Hence, $deg_T(v)>2(F_c)_v$.
\end{proof}

Corollary~\ref{col:monochr} motivates the following definition of
$E_{pure}(T)$. Note that $E_{pure}(T)$ satisfies the hereditary
property and $E_{pure}(T)=E_{fix}(T)$ if all degrees of $T$ are odd.

\begin{definition}
  Let $T$ be a tree. $E^3(T)$ denotes the set of all edges of $T$
  where each end node has degree at least three. $F \subseteq E^3(T)$
  is called {\em legal} if $F_v < deg(v)/2$ for each node $v$.
  $E_{pure}(T)$ denotes the set of all legal subsets of $E^3(T)$.
\end{definition}

\begin{theorem}\label{theo:main_pure}
  For any tree $T$ there exists a bijection $\mathcal{B}_{pure}$ between
  $E_{pure}(T)$ and $\mathcal{P}_{\mathcal{M}}(T)^+$.
\end{theorem}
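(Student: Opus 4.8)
The plan is to mirror the structure of the proof of Theorem~\ref{theo:min_fix} very closely, since $E_{pure}(T)$ is defined in exact analogy to $E_{fix}(T)$ (the only changes being the strict inequality $F_v < \deg(v)/2$ and the requirement that endpoints have degree at least three rather than two). I would first treat $\mathcal{M}=\mathcal{MIN}$. Given $F\in E_{pure}(T)$, I construct a candidate coloring $c_F$ by essentially the same greedy extension procedure: color $v^\ast$ with $0$, extend to a \emph{monochromatic} coloring on the component $T^\ast\in\mathcal{C}_T(F)$ containing $v^\ast$, and then propagate across each edge $e\in F$ by flipping the color. The crucial distinction from the fixed-point case is that here the edges of $F$ are the \emph{multichromatic} edges (by Corollary~\ref{col:monochr}, for $\mathcal{MIN}$ the set $F_c=\{(u,w): c(u)\neq c(w)\}$), so each component of $\mathcal{C}_T(F)$ receives a monochromatic coloring and adjacent components (across an $F$-edge) get opposite colors. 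This is well-defined because $\mathcal{T}_F$ is itself a tree, so the bipartition extends consistently from $T^\ast$.

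The heart of the argument is verifying $c_F\in\mathcal{P}_{\mathcal{MIN}}(T)^+$, and here I would lean directly on Lemma~\ref{lem:base:pure}: it suffices to check $N^{c_F(u)}(u) > N^{1-c_F(u)}(u)$ for every node $u$. For a node $u$ lying in component $\hat{T}$, its neighbors within $\hat{T}$ all share its color and its neighbors reached via $F$-edges carry the opposite color, so $N^{c_F(u)}(u)=\deg(u)-F_u$ and $N^{1-c_F(u)}(u)=F_u$. The strict legality condition $F_u < \deg(u)/2$ then gives exactly $\deg(u)-F_u > F_u$, i.e.\ the strict majority of same-colored neighbors required by Lemma~\ref{lem:base:pure}. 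This is the place where the strict inequality in the definition of $E_{pure}(T)$ does real work, in contrast to the non-strict $\le$ in $E_{fix}(T)$. Injectivity follows as before: if $e=(u,w)\in F_1\setminus F_2$ then $c_{F_1}(u)\neq c_{F_1}(w)$ while $c_{F_2}(u)=c_{F_2}(w)$, so the colorings differ.

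For surjectivity I would start from an arbitrary $c\in\mathcal{P}_{\mathcal{MIN}}(T)^+$, set $F_c=\{(u,w): c(u)\neq c(w)\}$, and invoke Corollary~\ref{col:monochr} to conclude $F_c\in E_{pure}(T)$ (the corollary supplies both the degree-$\ge 3$ endpoints and the strict bound $(F_c)_v<\deg_T(v)/2$). Then I argue $\mathcal{B}_{pure}(F_c)=c$ by the same component-by-component matching used in Theorem~\ref{theo:min_fix}: on $T^\ast$ the coloring $c$ is monochromatic (any same-colored neighbor of a node in $T^\ast$ lies in $T^\ast$), it agrees with $c_{F_c}$ at $v^\ast$, hence everywhere on $T^\ast$, and propagating across $F_c$-edges extends the agreement to all of $T$. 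The case $\mathcal{M}=\mathcal{MAJ}$ is dual in the now-familiar way: $F$ collects the \emph{monochromatic} edges, each component is colored \emph{independently} via the bipartition $\mathcal{I}_0,\mathcal{I}_1$, and Lemma~\ref{lem:base:pure} is applied with the reversed inequality $N^{c(u)}(u)<N^{1-c(u)}(u)$.

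I do not expect a serious obstacle, since the machinery is already laid out in the fixed-point proof and in Lemma~\ref{lem:base:pure} and Corollary~\ref{col:monochr}. The one point demanding care is the bookkeeping that distinguishes the strict from the non-strict bound: I must make sure that the edges placed in $F$ are the correct chromatic type for each process (multichromatic for $\mathcal{MIN}$, monochromatic for $\mathcal{MAJ}$, i.e.\ the \emph{reverse} of the fixed-point convention) and that the strict inequality is genuinely used so that every node ends up a toggle node rather than a fixed node. Getting this parity right is where an error would most likely creep in.
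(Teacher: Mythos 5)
Your proposal is correct and follows essentially the same route as the paper's own proof: the coloring $c_F$ built from the bipartition of $\mathcal{T}_F$ (monochromatic components for $\mathcal{MIN}$, independent for $\mathcal{MAJ}$), verification of pureness via Lemma~\ref{lem:base:pure} using exactly the strict bound $F_v < \deg(v)/2$, the same injectivity argument on an edge in $F_1\setminus F_2$, and surjectivity via $F_c$ together with Corollary~\ref{col:monochr} and component-by-component agreement starting at $v^\ast$. You also correctly identify the parity reversal of the edge sets relative to the fixed-point case, which is the one subtlety the paper's proof likewise navigates.
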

\begin{proof} 
  Let $F\in E_{pure}(T)$. We uniquely partition the nodes of
  $\mathcal{T}_F$ into two independent subsets $\mathcal{I}_0$ and
  $\mathcal{I}_1$ with $v^\ast\in \mathcal{I}_0$. Assume
  $\mathcal{{M}}=\mathcal{{MIN}}$. Define a mapping
  $C_F: {\cal C}_T(F) \rightarrow \{0,1\}$ by setting
  $C_F(\hat{T}) = i$ if $\hat{T}\in \mathcal{I}_i$. Based on $C_F$ we
  define a coloring $c_F$ of $T$ as follows
$c_F(v)=C_F(\hat{T}) \text{ if } v\in \hat{T}$. Note that
$c_F(v^\ast)=0$. $F$ uniquely defines $c_F$, since for each node $v$
there is a unique $\hat{T} \in {\cal C}_T(F)$ that contains $v$.
First, we prove that ${c}_F\in \mathcal{P}_{\mathcal{MIN}}(T)^+$. For
$v\in V$ let $\hat{T} \in {\cal C}_T(F)$ with $v \in \hat{T}$.
Then $N(v)\cap \hat{T}=N_T^{c_F(v)}(v)$. Since $F\in E_{pure}(T)$ we
have $\abs{N_T^{c_F(v)}(v)} > deg(v)/2$. Thus,
$2\abs{N_T^{c_F(v)}(v)}> \abs{N_T^{c_F(v)}(v)}
+\abs{N_T^{1-c_F(v)}(v)}$ and hence,
$\abs{N_T^{c_F(v)}(v)}>\abs{N_T^{1-c_F(v)}(v)}$ for all $v$. Hence,
$c_F\in \mathcal{P}_{\mathcal{MIN}}(T)^+$ by Lemma~\ref{lem:base:pure}.
Now we can define
$\mathcal{B}_{pure}(F) = {c}_F \text{ for each } F\in E_{pure}(T)$.
Let $F_1\not=F_2\in E_{pure}(T)$ and $e=(u,w)\in F_1\setminus F_2$.
Then $c_{F_1}(w)\not=c_{F_1}(u)$ and $c_{F_2}(w)=c_{F_2}(u)$. Hence,
$c_{F_1}\not= c_{F_2}$, i.e., $\mathcal{B}_{pure}(F)$ is injective.

Next, we prove that $\mathcal{B}_{pure}$ is surjective, i.e., for
every $c\in \mathcal{P}_{\mathcal{MIN}}(T)^+$ there exists
$F_c\in E_{pure}(T)$ with $\mathcal{B}_{pure}(F_c)=c$. For
$c\in \mathcal{P}_{\mathcal{MIN}}(T)^+$ define
$F_c=\{(u,w)\in E \suchthat c(u) \not= c(w)\}$. By
Lemma~\ref{lem:Pure} we have $F_c\in E^3(T)$. Let $v\in V$. Since $c$
is a pure 2-cycle we have $\abs{N_T^{c(v)}(v)}>\abs{N_T^{1-c(v)}(v)}$,
i.e., $deg(v) > 2\abs{N_T^{1-c(v)}(v)}$. Since,
$(F_c)_v =\abs{N_T^{1-c(v)}(v)}$ we have $deg(v)/2 > (F_c)_v$. This
yields $F_c\in E_{pure}(T)$. By the first part of this proof we have
$\mathcal{B}_{pure}(F_c)\in \mathcal{P}_{\mathcal{MIN}}(T)^+$. By
Corollary~\ref{col:monochr}  $\mathcal{B}_{pure}(F_c)$ is
for each tree $\hat{T}\in {\cal C}_T(F_c)$ a monochromatic coloring
with $\mathcal{B}_{pure}(F_c)(v) = c(v)$ for all $v\in \hat{T}$.
Hence, $c$ and $\mathcal{B}_{pure}(F_c)$ define the same coloring of
$T$, i.e., $\mathcal{B}_{pure}(F_c)=c$.

The proof for the case $\mathcal{{M}}=\mathcal{{MAJ}}$ is similar. The
main differences are that we define $c_F$ such that it induces an
independent coloring on each $\hat{T} \in {\cal C}_T(F)$ and in the
second part we define $F_c=\{(u,w)\in E \suchthat c(u) = c(w)\}$.
\end{proof}

\begin{corollary}\label{cor:pur_m_exits}
  Every tree $T$ has a pure coloring for the minority and the majority
  process. $T$ has a non-monochromatic (resp.\ non-independent) pure
  coloring for the minority (resp.\ majority) process if and only if
  there exist an edge $(u,w)\in T$ such that $deg(u)\ge3$ and
  $deg(w)\ge3$.
\end{corollary}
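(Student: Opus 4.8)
The plan is to derive both claims directly from Theorem~\ref{theo:main_pure}, which gives the bijection $\mathcal{B}_{pure}$ between $E_{pure}(T)$ and $\mathcal{P}_{\mathcal{M}}(T)^+$. For the existence of \emph{some} pure coloring, I would simply observe that $\emptyset \in E_{pure}(T)$ (the empty set vacuously satisfies the legality condition $F_v < deg(v)/2$ and trivially lies in $E^3(T)$), so $E_{pure}(T) \neq \emptyset$. By the bijection, $c_\emptyset := \mathcal{B}_{pure}(\emptyset)$ is a pure coloring. Unwinding the construction in the proof of Theorem~\ref{theo:main_pure}, with $F=\emptyset$ the tree $\mathcal{T}_\emptyset$ is a single node (all of $T$), so $c_\emptyset$ is constant for $\mathcal{MIN}$ (monochromatic) and alternating for $\mathcal{MAJ}$ (independent). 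Either way a pure coloring exists for both processes.

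For the second sentence I would argue that a non-monochromatic (resp.\ non-independent) pure coloring exists if and only if $E_{pure}(T) \setminus \{\emptyset\} \neq \emptyset$, i.e.\ $E^3(T) \neq \emptyset$. The forward direction: if $c \in \mathcal{P}_{\mathcal{MIN}}(T)^+$ is non-monochromatic, then $F_c = \{(u,w) \in E \suchthat c(u) \neq c(w)\}$ is nonempty, and by Lemma~\ref{lem:Pure} any edge of $F_c$ has both endpoints of degree at least $3$, so $E^3(T) \neq \emptyset$; the majority case is symmetric via $F_c = \{(u,w) \suchthat c(u)=c(w)\}$. The reverse direction: given an edge $(u,w)$ with $deg(u),deg(w) \geq 3$, the singleton $\{(u,w)\}$ lies in $E^3(T)$, and it is legal since $\{(u,w)\}_v \in \{0,1\}$ while $deg(v)/2 \geq 3/2 > 1$ for the two endpoints (and $=0$ elsewhere). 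Hence $\{(u,w)\} \in E_{pure}(T)$, and $\mathcal{B}_{pure}(\{(u,w)\})$ is a pure coloring whose restriction places $u,w$ in distinct components of $T \setminus \{(u,w)\}$; that edge is multichromatic for $\mathcal{MIN}$ (so the coloring is non-monochromatic) and monochromatic for $\mathcal{MAJ}$ (so it is non-independent).

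The one point requiring a little care is confirming that the singleton edge genuinely produces a coloring \emph{distinct} from the trivial one, rather than just a relabelling. This follows from the injectivity of $\mathcal{B}_{pure}$ established in Theorem~\ref{theo:main_pure}: since $\{(u,w)\} \neq \emptyset$, their images differ, and inspecting the two components of $T \setminus \{(u,w)\}$ shows the edge $(u,w)$ changes its chromatic type, giving exactly the non-monochromatic (resp.\ non-independent) behavior claimed. I do not expect any genuine obstacle here, as the corollary is a direct reading-off of the bijection together with Lemma~\ref{lem:Pure}; the main task is merely translating the degree condition $deg(u),deg(w)\ge 3$ into membership of $\{(u,w)\}$ in $E^3(T)$ and legality, which is immediate from the definitions.
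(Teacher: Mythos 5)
Your proof is correct and follows essentially the same route as the paper: existence via $\emptyset \in E_{pure}(T)$ and the bijection of Theorem~\ref{theo:main_pure}, and the equivalence via Lemma~\ref{lem:Pure} for necessity together with the observation that a singleton $\{(u,w)\}$ with $deg(u),deg(w)\ge 3$ is legal for sufficiency. If anything, your write-up is slightly more careful than the paper's: its phrase ``$E_{pure}(T)\not=\emptyset$'' is a slip (the empty set always belongs to $E_{pure}(T)$), and your formulation $E_{pure}(T)\setminus\{\emptyset\}\neq\emptyset$, equivalently $E^3(T)\neq\emptyset$, is the condition actually intended.
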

\begin{proof} We provide the proof for $\mathcal{M}=\mathcal{MIN}$.
  The result follows from Theorem~\ref{theo:main_pure}. Since
  $\emptyset \in E_{pure}(T)$ we have
  $c_\emptyset \in \mathcal{F}_{\mathcal{M}}(T)^+$. $c_\emptyset$ is a
  monochromatic coloring. $T$ has a non-monochromatic pure coloring if
  and only if $E_{pure}(T)\not=\emptyset$. This is equivalent to
  having an edge with the stated properties.
\end{proof}

\begin{corollary}\label{cor:2cyc}
  Let $P$ be a path and $ c\in \mathcal{C}(P)$. Then
  $c\in \mathcal{P}_{\mathcal{MIN}}(P)$ (resp.\
  $c\in \mathcal{P}_{\mathcal{MAJ}}(P)$) if and only if $c(v)= c(w)$
  (resp.\ $c(v)\not= c(w)$) for each edge $(v,w)$ of $P$.
\end{corollary}
\begin{proof} Clearly $E_{pure}=\emptyset$. Hence,
  $\mathcal{P}_{\mathcal{M}}(P)^+=\{c_\emptyset\}$ by
  Theorem~\ref{theo:main_pure}. Hence, $c$ is monochromatic (resp.\
  independent) for $\mathcal{M}=\mathcal{MIN}$ (resp.\
  $\mathcal{M}=\mathcal{MAJ}$).
\end{proof}

Since $E_{pure}(T)\subseteq E_{fix}(T)$ we have
$\mathcal{P}_{\mathcal{MAJ}}(T) \subseteq
\mathcal{F}_{\mathcal{MIN}}(T)$ and
$\mathcal{P}_{\mathcal{MIN}}(T) \subseteq
\mathcal{F}_{\mathcal{MAJ}}(T)$. Fig.~\ref{fig:outside} shows that
there are trees $T$ where
$\mathcal{P}_{\mathcal{MAJ}}(T) \subset
\mathcal{F}_{\mathcal{MIN}}(T)$ and
$\mathcal{P}_{\mathcal{MIN}}(T) \subset
\mathcal{F}_{\mathcal{MAJ}}(T)$. 

\begin{figure}[h]
  \hfill%
  \includegraphics[scale=0.95]{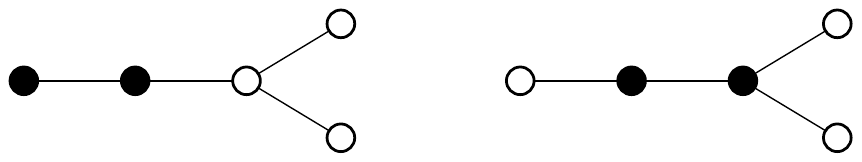}
  \hfill\null%
  \caption{The left coloring is in
    $\mathcal{F}_{\mathcal{MAJ}}(T)\setminus\mathcal{P}_{\mathcal{MIN}}(T)$,
    the right one is in
    $\mathcal{F}_{\mathcal{MIN}}(T)\setminus\mathcal{P}_{\mathcal{MAJ}}(T)$.}\label{fig:outside}
\end{figure}

\subsection{Counting Pure 2-Cycles}
Theorem~\ref{theo:main_pure} allows to determine the pure 2-cycles of
a tree $T$, and thus, $\abs{\mathcal{P}_{\mathcal{M}}(T)}$. Since
$E_{pure}(T) \subseteq E_{fix}(T)$ we have
$\abs{\mathcal{P}_{\mathcal{M}}(T)}\le
\abs{\mathcal{F}_{\mathcal{M}}(T)}$ and
$\abs{\mathcal{P}_{\mathcal{M}}(T)}\le2F_{n-\lceil\Delta/2\rceil}$ by
Theorem~\ref{theo:count_fix}. To generate all pure 2-cycles
Algorithm~\ref{alg:allfix} can be adopted, note that $E_{pure}(T)$ has
the hereditary property. The difference is that it uses $E^3(T)$ and
the corresponding notion of legal. The algorithm works in time
$O(n + \abs{\mathcal{P}_{\mathcal{M}}(T)}\abs{E^3(T)})$. Next we
provide a better upper bound for $\abs{\mathcal{P}_{\mathcal{M}}(T)}$.
Let $e_T=\abs{E^3(T)}$.

\begin{lemma}\label{lem:upper_bound}%
  Let $T$ be a tree, then $e_T\le (n-4)/2$.
\end{lemma}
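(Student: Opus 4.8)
The plan is to control $e_T=\abs{E^3(T)}$ through a coarse degree census of $T$. Partition $V$ according to degree and write $n = n_1 + n_2 + k$, where $n_1$ is the number of leaves, $n_2$ the number of degree-$2$ nodes, and $k = \abs{V_{\ge 3}}$ with $V_{\ge 3} := \{v : \deg_T(v)\ge 3\}$. Before the main argument I would dispose of the degenerate case $E^3(T)=\emptyset$: there $e_T=0$ and the stated inequality holds precisely when $n\ge 4$, so I would treat that boundary case separately (or assume it) and from here on suppose $e_T\ge 1$, which forces $V_{\ge 3}\neq\emptyset$.

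First I would establish the leaf bound $k \le n_1 - 2$. This is where the only genuine content lies, and it comes from the handshake identity on a tree: since $\sum_{v}\deg_T(v)=2(n-1)$, we obtain $\sum_{v}(\deg_T(v)-2)=-2$. Each leaf contributes $-1$, each degree-$2$ node contributes $0$, and each node of $V_{\ge 3}$ contributes at least $+1$; isolating the last group gives $\sum_{v\in V_{\ge 3}}(\deg_T(v)-2)=n_1-2$. As this is a sum of $k$ terms each at least $1$, we conclude $k\le n_1-2$.

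Next I would bound $e_T$ by a forest argument. The edges counted by $e_T$ are exactly those edges of $T$ joining two nodes of $V_{\ge 3}$, so $(V_{\ge 3},E^3(T))$ is a subgraph of the tree $T$ and hence a forest on $k$ vertices, which has at most $k-1$ edges; thus $e_T\le k-1$ (using $V_{\ge 3}\neq\emptyset$). Combining the two bounds with $n_2\ge 0$ yields
\[
  n = n_1 + n_2 + k \;\ge\; n_1 + k \;\ge\; (k+2) + k \;=\; 2k+2 \;\ge\; 2(e_T+1)+2 \;=\; 2e_T+4,
\]
which rearranges to the claimed $e_T\le (n-4)/2$.

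I expect the main obstacle to be purely conceptual rather than computational: recognizing that the right invariants to track are the leaf count $n_1$ and the count $k$ of branch nodes, and that these are tied together by the degree-sum identity $\sum_v(\deg_T(v)-2)=-2$, which is the crux of the whole estimate. The remaining steps (the forest inequality $e_T\le k-1$ and the final chain) are bookkeeping. The one place demanding explicit care is the boundary case $E^3(T)=\emptyset$, since for a short path the stated bound fails and must be handled by the $n\ge 4$ proviso.
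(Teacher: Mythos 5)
Your proof is correct, and it takes a genuinely different route from the paper. The paper argues by induction on $n$ with a three-way case analysis: it suppresses a degree-$2$ vertex (contracting its two incident edges, which leaves $e_T$ unchanged), splits the tree at a vertex of degree at least $4$ along an edge into two subtrees satisfying $n_1+n_2=n+1$, and, when every degree is $1$ or $3$, closes with a direct leaf count yielding the equality $2e_T=n-4$. You instead give a one-shot, non-inductive count: the handshake identity $\sum_v(\deg_T(v)-2)=-2$ pins the number $k$ of branch vertices to the leaf count via $k\le n_1-2$, acyclicity of the branch subgraph $(V_{\ge 3},E^3(T))$ gives $e_T\le k-1$, and the chain $n\ge n_1+k\ge 2k+2\ge 2e_T+4$ finishes. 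Your version is shorter, avoids the bookkeeping of the paper's splitting step, and makes the extremal configuration transparent: equality forces $n_2=0$, every branch vertex of degree exactly $3$, and the branch forest connected, which is exactly the paper's cubic case and its sharpness example $H_n$. A further point in your favor is the boundary discussion: as literally stated the lemma fails for trees on $n\le 3$ vertices (there $e_T=0>(n-4)/2$), a caveat the paper's own base case ``if $n<6$ then $e_T=0$'' silently glosses over; your explicit $n\ge 4$ proviso (together with the observation that $e_T\ge 1$ already forces $k\ge 2$, hence $n\ge 2e_T+4\ge 6$) is the honest way to state it. What the paper's induction buys in exchange is stylistic uniformity with the rest of its arguments; it carries no extra generality here.
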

\begin{proof}
  The proof is by induction on $n$. If $n<6$ then $e_T=0$ and for
  $n=6$ we have $e_T\le 1$. So let $n>6$. Let $v$ be a node with
  $deg(v)=2$. Let $T'$ be the tree that is constructed from $T$ by
  removing node $v$ and connecting the two neighbors of $v$ by an
  edge. Then $e_T=e_{T'}$. By induction $e_{T'} \le (n-5)/2< (n-4)/2$.  
  Hence we can assume that $deg(v)\not=2$ for all nodes $v$ of $T$.
  Let $v$ be a node with $deg(v)\ge 4$. Then by induction, no neighbor
  of a $v$ is a leaf. Hence, each neighbor of $v$ has degree at least
  $3$. Let $w$ be a neighbor of $v$ and $(v,w)$ an edge. Let $T_1$ be
  the connected component of $T\setminus e$ that contains $v$ and
  $T_2$ the other component with the additionally edge $e$.
  Let $T_i$ have $n_i$ nodes. Then $n_1+n_2=n+1$ and
  $e_T=1+e_{T_1} +e_{T_2}$. By induction $e_{T_i}\le (n_i-4)/2$. Thus,
  \[e_T\le 1 +(n_1-4)/2 + (n_2-4)/2 = (n_1+n_2 -6)/2 =
    (n-5)/2 < (n-4)/2.\] It remains the case each node has either degree $1$ or
  $3$. Let $l$ be the number of leaves of $T$, then there are $n-l$
  nodes of degree $3$. Then $e_T=n-l-1$ (remove all leaves, then
  $e_T$ edges remain). We have $l+3(e_T+1) = 2(n-1)$ and $l+e_T+1=n$.
  Thus $n-1-e_T +3(e_T+1)=2(n-1)$. This yields $2e_T= n-4$ which
  completes the proof.
\end{proof}

The last lemma implies
$\abs{\mathcal{P}_{\mathcal{M}}(T)}\le 2^{1+(n-4)/2}$. This bound is
purely based on the bound for $\abs{E^3(T)}$. By utilizing the
constraints imposed by $E_{pure}(T)$ better bounds may be derived. The
tree $H_n$ with $n\equiv 0(2)$ that consists of a path of length
$(n+2)/2$ and a single node attached to each inner node of the path
(see Fig.~\ref{fig:kamm}) shows that the bound of
Lemma~\ref{lem:upper_bound} is sharp, but there is large gap between
$\abs{E^3(H_n)}$ and $\abs{E_{pure}(H_n)}$.

\begin{figure}[h]
  \hfill%
  \includegraphics[scale=0.95]{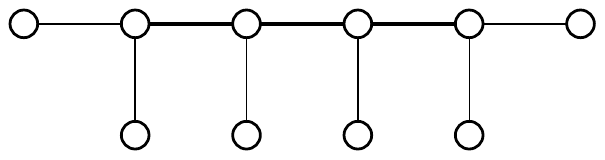}
  \hfill\null%
  \caption{The graph $H_{10}$, the three edges belonging to
    $E^3(H_{10})$ are depicted by solid lines. In general we have
    $\abs{E^3(H_n)}=2^{(n-4)/2}$ and $\abs{E_{pure}(H_n)}=F_{n/2}$.}\label{fig:kamm}
\end{figure}

\subsection{Block Trees of 2-Cycles}
In this section we consider general 2-cycles, i.e., those that have
both fixed and toggle nodes. We characterize the coarse grain
structure of $\mathcal{C}^2_{\mathcal{M}}(T)$, called the block tree
of $T$.

\begin{definition}
  Let $T$ be a tree and $c\in {\cal C}^2_{\mathcal{M}}(T)$. Let $V_f$
  (resp.\ $V_t$) be the set of fixed (resp.\ toggle) nodes of $c$ and
  $T^f$ (resp.\ $T^t$) the subgraph of $T$ induced by $V_f$ (resp.\
  $V_t$).
\end{definition}

The next result shows that a 2-cycle $c$ induces a structure on $T$ that
allows to define a hyper-tree $\mathcal{B}_c(T)$.

\begin{lemma}\label{lem:mainCycle}
  Let $T$ be a tree, $c\in {\cal C}^2_{\mathcal{M}}(T)$, and $T'$ a
  connected component of $T^f$ (resp.\ $T^t$). Then $c$ induces a
  fixed point (resp.\ a pure 2-cycle) on $T'$.
\end{lemma}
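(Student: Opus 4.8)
The plan is to reduce both claims to the local characterizations already established in Lemma~\ref{lem:basic_eq_min} (for $\mathcal{MIN}$) and Lemma~\ref{lem:basic_eq_maj} (for $\mathcal{MAJ}$), together with Lemma~\ref{lem:base:pure} for the pure case. The single structural observation driving everything is that if $T'$ is a connected component of $T^f$ (resp.\ $T^t$), then for every node $u\in T'$ its neighborhood inside $T'$ coincides exactly with its set of fixed (resp.\ toggle) neighbors in $T$. Indeed, $T^f$ is the subgraph of $T$ induced by $V_f$, so every fixed neighbor $w$ of $u$ is joined to $u$ by an edge of $T^f$ and hence lies in the same component $T'$, whereas no toggle neighbor of $u$ belongs to $T^f$ at all. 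Thus $N_{T'}^i(u)=N_f^i(u)$ for $u\in T'\subseteq T^f$, and symmetrically $N_{T'}^i(u)=N_t^i(u)$ for $u\in T'\subseteq T^t$.

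For the fixed-point claim (take $\mathcal{M}=\mathcal{MIN}$), I would fix $u\in T'\subseteq T^f$ and invoke the fixed-node characterization of Lemma~\ref{lem:basic_eq_min}, namely $|N_t^{1-c(u)}(u)-N_t^{c(u)}(u)|\le N_f^{1-c(u)}(u)-N_f^{c(u)}(u)$. Since the left-hand side is non-negative, so is the right-hand side, whence $N_f^{c(u)}(u)\le N_f^{1-c(u)}(u)$. By the neighborhood identity this reads $N_{T'}^{c(u)}(u)\le N_{T'}^{1-c(u)}(u)$, which is precisely the condition for $u$ to retain its color under $\mathcal{MIN}$ evaluated on $T'$. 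As this holds for every $u\in T'$, the restriction of $c$ to $T'$ is a fixed point. The $\mathcal{MAJ}$ case is identical via Lemma~\ref{lem:basic_eq_maj}: the fixed-node inequality yields $N_f^{c(u)}(u)\ge N_f^{1-c(u)}(u)$, which is the condition for keeping color under $\mathcal{MAJ}$.

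For the pure 2-cycle claim (again $\mathcal{M}=\mathcal{MIN}$), I would fix $u\in T'\subseteq T^t$ and use the toggle-node characterization $|N_f^{c(u)}(u)-N_f^{1-c(u)}(u)|< N_t^{c(u)}(u)-N_t^{1-c(u)}(u)$. Non-negativity of the left-hand side forces $N_t^{c(u)}(u)>N_t^{1-c(u)}(u)$, i.e.\ $N_{T'}^{c(u)}(u)>N_{T'}^{1-c(u)}(u)$. Since this strict inequality holds at every node of $T'$, Lemma~\ref{lem:base:pure} gives $c\in\mathcal{P}_{\mathcal{MIN}}(T')$, a pure 2-cycle. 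The $\mathcal{MAJ}$ case uses Lemma~\ref{lem:basic_eq_maj} to obtain the reversed strict inequality $N_t^{c(u)}(u)<N_t^{1-c(u)}(u)$, which is exactly the pure condition of Lemma~\ref{lem:base:pure} for $\mathcal{MAJ}$.

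The argument is essentially bookkeeping, so I do not expect a genuine obstacle; the one point demanding care is the neighborhood identity and, relatedly, confirming that a toggle component is never a single isolated vertex (which would spuriously fail the strict inequality of Lemma~\ref{lem:base:pure}). This is automatic: the strict toggle inequality forces $N_t^{c(u)}(u)\ge 1$, so every toggle node has a toggle neighbor and each component of $T^t$ has at least two nodes. I would write the proof so that it plainly covers both processes through the $\mathcal{M}$ placeholder rather than repeating nearly identical computations, stating the $\mathcal{MIN}$ chain in full and remarking that the $\mathcal{MAJ}$ chain only reverses the relevant inequalities.
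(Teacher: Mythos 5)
Your proposal is correct and follows essentially the same route as the paper's proof: both hinge on the neighborhood identity $N_{T'}^i(u)=N_f^i(u)$ for $u$ in a component of $T^f$ (resp.\ $N_{T'}^i(u)=N_t^i(u)$ in $T^t$) combined with the fixed/toggle characterizations of Lemmas~\ref{lem:basic_eq_min} and~\ref{lem:basic_eq_maj}, deducing the required (strict) inequality from nonnegativity of the absolute value, with Lemma~\ref{lem:base:pure} closing the pure case. Your write-up merely makes explicit the toggle-component and $\mathcal{MAJ}$ branches that the paper dismisses as ``proved similarly.''
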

\begin{proof}
  We present the proof for $\mathcal{MIN}$. Let $T'$ be a connected
  component of $T^f$ and $u$ a node of $T'$. With respect to $T$ we
  have
  $\abs{N_t^{1-c(u)}(u) - N_t^{(u)}(u)} \le N_f^{1-c(u)}(u) -
    N_f^{c(u)}(u)$ by Lemma~\ref{lem:basic_eq_min}. Restricting $c$
  to $T'$ gives $N_{T'}^{c(u)}(u)=N_f^{c(u)}(u)$ and
  $N_{T'}^{1-c(u)}(u)=N_f^{1-c(u)}(u)$. This yields
  $N_{T'}^{1-c(u)}(u) \ge N_{T'}^{c(u)}(u)$. This proves that $u$ is a
  fixed node of $T'$ for $c$. Hence, $c$ is a fixed point for $T'$.
  The result about components of $T^t$ is proved similarly. The result for
  $\mathcal{MAJ}$ is based on Lemma~\ref{lem:basic_eq_maj}.
\end{proof}

Lemma~\ref{lem:mainCycle} provides the base to define the
{\em block tree} of a coloring $c\in {\cal C}^2_{\mathcal{M}}(T)$.
\begin{definition}
  Let $T$ be a tree, $c\in {\cal C}^2_{\mathcal{M}}(T)$, and
  $T_1,\ldots,T_s$ the connected components of $T^f$ and $T^t$. The
  block tree $\mathcal{B}_c(T)$ of $T$ for $c$ is a tree with nodes
  $\{T_1,\ldots,T_s\}$, nodes $T_i$ and $T_j$ are connected if there
  exists $(u,w)\in E$ with $u\in T_i$ and $w\in T_j$. A node $T_i$ is
  called a {\em fixed block} (resp. {\em toggle block}) of
  $\mathcal{B}_c(T)$ if $T_i$ is a connected component of $T^f$
  (resp.\ $T^t$).
\end{definition}

Obviously $\mathcal{B}_c(T)$ is a tree. $\mathcal{B}_c(T)$ is uniquely
defined, but different 2-cycles can induce the same block tree (see
Fig.~\ref{fig:Many_bct}). Each edge $e$ of $\mathcal{B}_c(T)$ connects
a fixed block with a toggle block, $e$ uniquely corresponds to an edge
of $T$. For convenience we denote this edge also by $e$. If $T_i$ is a
toggle block then obviously $\abs{T_i}\ge 2$, since all neighboring
blocks are fixed blocks. Fixed blocks can consist of a single node
only (see Fig.~\ref{fig:fixedSingle}).

The goal of this section is to present a characterization of the set
of all block trees for a given tree $T$ similar to
Theorem~\ref{theo:main_pure}, i.e., the trees $T_B$ for which there
exists $c\in {\cal C}^2_{\mathcal{M}}(T)$ such that
$T_B=\mathcal{B}_c(T)$. The following theorem summarizes properties of
2-cycles.

\begin{theorem}\label{thm:2cycmain}
  Let $T$ be a tree, $c\in \mathcal{C}^2_{\mathcal{M}}(T)$, and
  $e=(u,w)$ an edge of $\mathcal{B}_c(T)$. Then
  \begin{enumerate}
  \item If $deg_T(u)=2$ then $u$ is a fixed node. \label{lem:degree3}
  \item $\min(deg_T(u),deg_T(w))\ge 2$ and
    $\max(deg_T(u),deg_T(w))\ge 3$.\label{lem:fixedBlock}
  \item If $T_0$ is a node of $\mathcal{B}_c(T)$, $v\in T_0$,
    $deg_{T_0}(v)=1$ and $deg_T(v)\equiv 0(2)$ then $v$ is a fixed
    node and $T_0$ is a fixed block.\label{lem:fixedpt}
  \item If $T_0=\{v\}$ is a node of $\mathcal{B}_c(T)$ then $v$ is a
    fixed node, $T_0$ is a fixed block, and $deg_T(v)$ is
    even.\label{lem:singlept}
  \end{enumerate}
\end{theorem}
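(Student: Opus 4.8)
The plan is to work throughout with the minority process, using the fixed/toggle characterization of Lemma~\ref{lem:basic_eq_min}, and to observe that each statement transfers verbatim to the majority process by substituting Lemma~\ref{lem:basic_eq_maj}. The structural fact I would invoke repeatedly is that every edge $e=(u,w)$ of $\mathcal{B}_c(T)$ joins a fixed block to a toggle block, so exactly one of $u,w$ is a fixed node and the other a toggle node. I would also use that all toggle neighbors of a node $v$ lie in the single connected component of $T^t$ containing $v$; hence, for a vertex on the boundary between blocks, the counts $N_t^0,N_t^1$ are governed entirely by its own block, and Lemma~\ref{lem:mainCycle} lets me impose the pure-cycle constraint of Lemma~\ref{lem:base:pure} on that block.

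For part~\ref{lem:degree3} I would assume $deg_T(u)=2$ and suppose for contradiction that $u$ is a toggle node; then the second endpoint $w$ is fixed, so $u$ has at most one toggle neighbor, giving $N_t^{c(u)}(u)-N_t^{1-c(u)}(u)\le 1$ on the right of the toggle inequality. Positivity of that side forces it to equal $1$ and forces $u$ to have exactly one fixed neighbor, so $\abs{N_f^{c(u)}(u)-N_f^{1-c(u)}(u)}=1\not<1$, a contradiction. For part~\ref{lem:fixedBlock} I would first get $\min(deg_T(u),deg_T(w))\ge 2$: the toggle endpoint needs a positive right-hand side in its toggle inequality, hence a toggle neighbor inside its block together with the fixed neighbor across $e$; a degree-one fixed endpoint would have its single toggle neighbor contribute $\abs{N_t^{1-c}(\cdot)-N_t^{c}(\cdot)}=1$ against $N_f^{1-c}(\cdot)-N_f^{c}(\cdot)=0$, violating the fixed inequality. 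Then $\max(deg_T(u),deg_T(w))\ge 3$ follows by contradiction: if both endpoints had degree $2$, part~\ref{lem:degree3} would make both fixed, placing them in a common fixed block so that $e$ could not be an edge of $\mathcal{B}_c(T)$.

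For part~\ref{lem:singlept} I would note that a singleton toggle block is impossible, since a toggle node with no toggle neighbor has right-hand side $0$ in its toggle inequality; thus $T_0=\{v\}$ is a fixed block with $N_f(v)=0$, and the fixed inequality collapses to $N_t^{c(v)}(v)=N_t^{1-c(v)}(v)$, whence $deg_T(v)=2N_t^{c(v)}(v)$ is even. For part~\ref{lem:fixedpt} I would suppose $v$ were a toggle leaf of its block $T_0$; then Lemma~\ref{lem:mainCycle} and Lemma~\ref{lem:base:pure} give $v$ a single toggle neighbor of its own color, so the toggle inequality reads $\abs{N_f^{c(v)}(v)-N_f^{1-c(v)}(v)}<1$, forcing the $deg_T(v)-1$ fixed neighbors to split evenly between the two colors. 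Since $deg_T(v)$ is even, $deg_T(v)-1$ is odd and no even split exists, a contradiction; hence $v$ is fixed and $T_0$ a fixed block.

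The main obstacle I anticipate is purely organizational: keeping the four neighbor counts $N_f^{0},N_f^{1},N_t^{0},N_t^{1}$ consistent and confirming in each case that all toggle neighbors of the boundary vertex lie in one block, so that the pure-cycle inequality of Lemma~\ref{lem:base:pure} applies there. Once this bookkeeping is pinned down, every part reduces to a short integer or parity computation, and the majority statements are obtained word for word by swapping in Lemma~\ref{lem:basic_eq_maj}.
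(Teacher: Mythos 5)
Your proposal is correct and takes essentially the same route as the paper: each part is reduced, via the fixed/toggle characterizations of Lemma~\ref{lem:basic_eq_min} (resp.\ Lemma~\ref{lem:basic_eq_maj}) and the structural fact that every edge of $\mathcal{B}_c(T)$ joins a fixed block to a toggle block, to the same short counting and parity arguments the paper uses. The only cosmetic difference is in part~\ref{lem:degree3}, where you derive the contradiction $1 \not< 1$ directly from the toggle inequality, while the paper instead concludes that both neighbors of the degree-two toggle node would themselves be toggle nodes, contradicting that $e$ is a block-tree edge; the substance is identical.
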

\begin{proof}
  Assume $\mathcal{M}=\mathcal{MIN}$, the proof for
  $\mathcal{MAJ}$ is similar and uses Lemma~\ref{lem:basic_eq_maj}.
  Assume that $u$ is toggle node. Then
  $\abs{N^{c(u)}(u)} > \abs{N^{1-c(u)}}$. Thus, if
  $\abs{N^{1-c(u)}}> 0$ then $deg_T(u)\ge 3$. Therefore,
  $\abs{N^{1-c(u)}}=0$ and $\abs{N^{c(u)}(u)}=2$. Since $u$ is toggle
  node, both neighbors must change their color, i.e., both are toggle
  nodes. This yields that $w$ is a toggle node. Contradiction, since
  $e(u,w)$ is an edge of $\mathcal{B}_c(T)$.
   
  WLOG we assume that $u$ is a fixed node while $w$ toggles its
  color. Assume that $\min(deg(u),deg(w))=1$. If $deg(u) =1$ then $u$
  cannot be a fixed node because $w$ toggles its color. Similarly,
  $w$ cannot have degree $1$. Hence, $\min(deg(u),deg(w))\ge 2$.
  Assume that $deg(u)=deg(w)=2$. Then by the first part, both nodes
  are fixed nodes. Contradiction.
  Assume that $v$ is a toggle node. Then $N_t^{c(v)}=1$ and
  $N_t^{1-c(v)}=0$. Hence, by Lemma~\ref{lem:basic_eq_min} we have
  $N_f^{1-c(v)}=N_f^{c(v)}$ thus,
  $deg_T(v)= 1 + 2 N_f^{c(v)}\equiv 1 (2)$. Contradiction.
  Let $T_0=\{v\}$. If $v$ is a toggle node then all neighbors are
  fixed nodes. Hence, $v$ is also a fixed node. Contradiction.
  Lemma~\ref{lem:basic_eq_min} yields that $deg_T(v)$ is even.
\end{proof}

\begin{figure}[h]
  \hfill
  \includegraphics[scale=0.95]{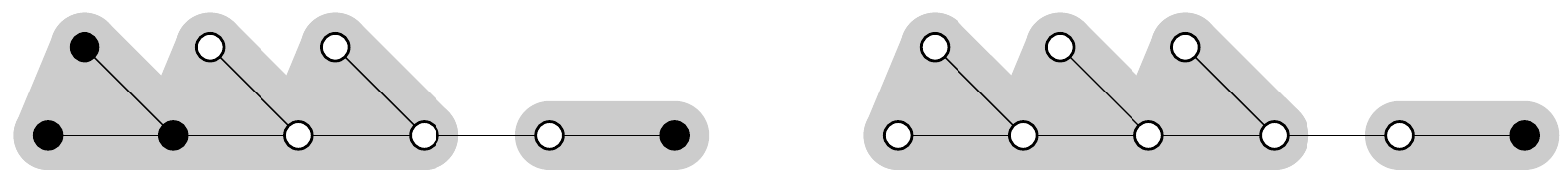}
  \hfill\null
  \caption{Two colorings leading to the same block tree. For the
    minority process both colorings define the same block tree. The
    left block node is a toggle node while the right is a fixed
    point.}\label{fig:Many_bct}
\end{figure}
\begin{figure}[h]
  \hfill%
  \includegraphics[scale=0.95]{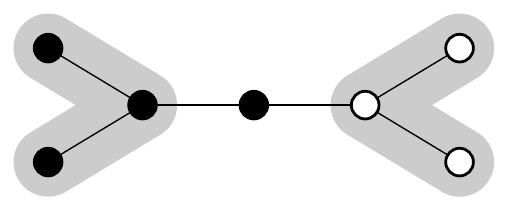}
  \hfill\null%
  \caption{A block tree consisting of two toggle blocks and one fixed
    block with a single node.}\label{fig:fixedSingle}
\end{figure}


The last theorem list properties of $\mathcal{B}_c(T)$ for
$c\in \mathcal{C}^2_{\mathcal{M}}(T)$. As before we take these
properties to identify a set of edges $F_c$ such that
${\mathcal{T}}_{F_c}= \mathcal{B}_c(T)$. The following two
definitions provide a formal framework for this purpose. 

\begin{definition}\label{def:block}
  Let $T$ be a tree. $E^{\,2.5}(T)$ denotes the set of edges of $T$,
  where one end node has degree at least two and the other has degree
  at least $3$. For $F\subseteq E^{\,2.5}(T)$ a component
  $\hat{T} \in {\cal C}_T(F)$ is called {\em fixed} if
  $\abs{\hat{T}}=1$ or if there exists $v\in \hat{T}$ such that
  $deg_T(v)\equiv 0(2)$ and $deg_{\hat{T}}(v)=1$. $Fix(T,F)$ denotes
  the set of all fixed components of ${\cal C}_T(F)$.
\end{definition}

\begin{definition}\label{def:legal3}
  Let $T$ be a tree.  $F \subseteq E^{\,2.5}(T)$ is called {\em legal} if
  all components of $Fix(T,F)$ are fully contained in
  $\mathcal{I}_0(\mathcal{T}_F)$ and if $T_0\in {\cal C}_T(F)$ with $T_0=\{v\}$
  then $deg_T(v)\equiv 0(2)$. $E_{block}(T)$ denotes the set of all
  legal subsets of $E^{\,2.5}(T)$.
\end{definition}

The next result reveals the significance of $E_{block}(T)$ for block
trees.

\begin{lemma}\label{lem:2cycle:req}
  Let $T$ be a tree, $c\in \mathcal{C}^2_{\mathcal{M}}(T)$, and $F_c$
  the edges of $\mathcal{B}_c(T)$. Then ${F_c}\in E_{block}(T)$.
\end{lemma}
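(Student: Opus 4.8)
The plan is to show that the edge set $F_c$ of the block tree $\mathcal{B}_c(T)$ satisfies the two defining conditions of $E_{block}(T)$: first, that $F_c \subseteq E^{\,2.5}(T)$, and second, that $F_c$ is \emph{legal} in the sense of Definition~\ref{def:legal3}. I assume throughout that $\mathcal{M}=\mathcal{MIN}$, with the $\mathcal{MAJ}$ case following the analogous route via Lemma~\ref{lem:basic_eq_maj}. Recall that by the definition of $\mathcal{B}_c(T)$, each edge $e=(u,w)\in F_c$ joins a fixed block to a toggle block, so without loss of generality $u$ is a fixed node and $w$ a toggle node. The connected components ${\cal C}_T(F_c)$ of $T\setminus F_c$ are exactly the blocks $T_1,\dots,T_s$, i.e.\ the vertex sets of the components of $T^f$ and $T^t$.

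First I would verify $F_c\subseteq E^{\,2.5}(T)$. This is immediate from Theorem~\ref{thm:2cycmain}, part~\ref{lem:fixedBlock}: every edge $e=(u,w)$ of $\mathcal{B}_c(T)$ satisfies $\min(deg_T(u),deg_T(w))\ge 2$ and $\max(deg_T(u),deg_T(w))\ge 3$, which is precisely the condition that one endpoint has degree at least two and the other at least three. Hence $e\in E^{\,2.5}(T)$ for every $e\in F_c$.

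Next I would establish legality, which has two requirements. For the second (easier) requirement: if $T_0\in {\cal C}_T(F_c)$ is a singleton $\{v\}$, then Theorem~\ref{thm:2cycmain}, part~\ref{lem:singlept}, tells us directly that $deg_T(v)$ is even, as required. For the first requirement I must show every fixed component in the sense of Definition~\ref{def:block} lies entirely in $\mathcal{I}_0(\mathcal{T}_{F_c})$. The key observation connecting the two notions of ``fixed'' is that the components appearing in $Fix(T,F_c)$ should coincide with the genuine \emph{fixed blocks} of $\mathcal{B}_c(T)$ (components of $T^f$), while the toggle blocks should land in $\mathcal{I}_1$. So the strategy is: (a) show that every component $\hat T$ that is fixed in the combinatorial sense of Definition~\ref{def:block} is actually a fixed block of $c$; and (b) show that fixed blocks and toggle blocks alternate along edges of $F_c$ and that the fixed blocks form exactly one side of the bipartition $\mathcal{I}_0/\mathcal{I}_1$, with $v^\ast$'s block on the $\mathcal{I}_0$ side. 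For (a), a singleton component is a fixed block by part~\ref{lem:singlept}, and a component containing a node $v$ with $deg_T(v)$ even and $deg_{\hat T}(v)=1$ forces $v$ (and hence its whole block) to be fixed by part~\ref{lem:fixedpt}. For (b), since each edge of $\mathcal{B}_c(T)$ joins a fixed block to a toggle block, the block tree is properly $2$-colored by the fixed/toggle labeling; because $\mathcal{T}_{F_c}$ is a tree its bipartition into $\mathcal{I}_0,\mathcal{I}_1$ is unique, so the fixed blocks and toggle blocks are exactly these two classes, and the convention $v^\ast\in\mathcal{I}_0$ together with $v^\ast$ lying in a fixed block (the block of $v^\ast$ is fixed, which I would confirm using the normalization of Lemma~\ref{lem:basic_eq_min}) pins the fixed blocks to $\mathcal{I}_0$.

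The main obstacle I anticipate is part (a): bridging the purely combinatorial definition of a ``fixed'' component (Definition~\ref{def:block}, which only inspects degrees in $T$ and $\hat T$) with the dynamical notion of a fixed block (a component of $T^f$). The subtlety is that $Fix(T,F_c)$ might in principle flag a component that is genuinely a toggle block, which would break the claim that fixed components sit in $\mathcal{I}_0$. I would rule this out by arguing the contrapositive through Theorem~\ref{thm:2cycmain}: any toggle block, being on the $\mathcal{I}_1$ side, cannot contain an even-degree node that is a leaf of its block (part~\ref{lem:fixedpt} would force such a node to be fixed) and cannot be a singleton (part~\ref{lem:singlept}); hence no toggle block satisfies the criteria of Definition~\ref{def:block}, so $Fix(T,F_c)$ consists solely of fixed blocks, all of which reside in $\mathcal{I}_0$. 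Assembling these observations gives $F_c\in E_{block}(T)$.
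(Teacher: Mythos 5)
Your proposal follows the paper's own route almost exactly: membership $F_c\subseteq E^{\,2.5}(T)$ via Theorem~\ref{thm:2cycmain}.\ref{lem:fixedBlock}, the parity condition for singleton components via Theorem~\ref{thm:2cycmain}.\ref{lem:singlept}, and the containment $Fix(T,F_c)\subseteq \mathcal{I}_0(\mathcal{T}_{F_c})$ via parts \ref{lem:fixedpt} and \ref{lem:singlept}. Your steps (a) and (b) --- every combinatorially fixed component of Definition~\ref{def:block} is a genuine fixed block, and the fixed/toggle labeling is a proper $2$-coloring of the tree $\mathcal{T}_{F_c}$ and hence coincides with its unique bipartition --- are precisely what the paper's terse phrase ``by construction of $\mathcal{B}_c(T)$'' is meant to carry, and they are argued correctly.

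The one step that fails is the anchoring claim that the block containing $v^\ast$ is a fixed block, which you say you would ``confirm using the normalization of Lemma~\ref{lem:basic_eq_min}''. That lemma contains no normalization involving $v^\ast$, and the claim is false: $v^\ast$ is a globally distinguished node chosen independently of $c$, and nothing prevents it from being a toggle node (for any pure $2$-cycle \emph{every} node toggles). For a non-vacuous example, take the configuration of Fig.~\ref{fig:fixedSingle}: two monochromatic paths on three nodes whose centers are joined to a middle node $m$ of degree two that sees one neighbor of each color; this is a $2$-cycle whose block tree is a path with the fixed singleton $\{m\}$ in the middle, $Fix(T,F_c)=\{\{m\}\}$, and if $v^\ast$ lies in either toggle path, the unique fixed component sits at odd distance from $v^\ast$'s block. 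Note this exposes a looseness in the paper's conventions rather than only in your argument: read literally, with $\mathcal{I}_0(\mathcal{T}_{F_c})$ anchored at the component containing $v^\ast$, the lemma itself would fail on this example. The paper's proof (and Lemma~\ref{lem:2cycle:suf}, which constructs colorings whose fixed nodes are exactly $\mathcal{I}_0$) tacitly reads the legality condition of Definition~\ref{def:legal3} label-free: what must be verified is that all elements of $Fix(T,F_c)$ lie in a \emph{single} class of the bipartition, and that class is then the one designated $\mathcal{I}_0$. Your steps (a) and (b) already establish exactly this, so the correct repair is to delete the $v^\ast$ claim, not to attempt to prove it.
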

\begin{proof}
  Note that $\mathcal{T}_{F_c}=\mathcal{B}_c(T)$. By
  Theorem~\ref{thm:2cycmain}.\ref{lem:fixedBlock} we have
  $F_c \subseteq E^{\,2.5}(T)$. By construction of $\mathcal{B}_c(T)$
  and Theorem~\ref{thm:2cycmain}.\ref{lem:singlept} and
  \ref{thm:2cycmain}.\ref{lem:fixedpt} we have
  $Fix(T,F_c)\subseteq \mathcal{I}_0(\mathcal{T}_{F_c})$.
  Theorem~\ref{thm:2cycmain}.\ref{lem:singlept} completes the proof.
\end{proof}

\begin{definition}
  Let $T$ be a tree. A coloring $c\in {\cal C}^2_{\mathcal{MIN}}(T)$
  is called {\em canonical} if $c$ induces a monochromatic (resp.\
  independent) coloring on each connected component of $T^t$ (resp.\
  $T^f$). A coloring $c\in {\cal C}^2_{\mathcal{MAJ}}(T)$ is called
  {\em canonical} if $c$ induces an independent (resp.\ monochromatic)
  coloring on each connected component of $T^t$ (resp.\ $T^f$).
\end{definition}

The next result lays the groundwork for our characterization of block
trees.

\begin{lemma}\label{lem:2cycle:suf}
  Let $T$ be a tree and $F\in E_{block}(T)$. There exits
  $c\in \mathcal{C}^2_{\mathcal{M}}(T)$ with
  $\mathcal{B}_c(T) = \mathcal{T}_F$ such that $c$ is canonical and
  $\mathcal{I}_0$ (resp.\ $\mathcal{I}_1$) is the set of fixed (resp.\
  toggle) nodes of $c$.
\end{lemma}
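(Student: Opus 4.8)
The plan is to construct $c$ explicitly from $F$ and then verify it has all the required properties. Assume $\mathcal{M}=\mathcal{MIN}$ (the $\mathcal{MAJ}$ case being symmetric). Given $F\in E_{block}(T)$, I first use the independent partition $\mathcal{I}_0,\mathcal{I}_1$ of $\mathcal{T}_F$ supplied by Definition~\ref{def:legal3}. The coloring $c$ must make the nodes of $\mathcal{I}_0$-components fixed and the nodes of $\mathcal{I}_1$-components toggle, so by the definition of canonical (for $\mathcal{MIN}$) I should color each toggle component monochromatically and each fixed component independently. Concretely, I would fix $c(v^\ast)=0$ and then propagate: on each component $\hat{T}\in\mathcal{I}_0$ spread an independent $2$-coloring, on each component $\hat{T}\in\mathcal{I}_1$ spread a constant color, and resolve the colors across the edges of $F$ so that every edge in $F$ is consistent with the component it enters. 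Since $\mathcal{T}_F$ is a tree and $F$ is exactly its edge set, a breadth-first sweep from $T^\ast$ assigns all component-colors and hence all node-colors uniquely once a base color is chosen; I must check this propagation is well-defined (no conflicts), which holds precisely because $\mathcal{T}_F$ is acyclic.

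Next I would verify the three claims: that $c\in\mathcal{C}^2_{\mathcal{MIN}}(T)$, that $\mathcal{B}_c(T)=\mathcal{T}_F$, and that $\mathcal{I}_0$ (resp.\ $\mathcal{I}_1$) is exactly the set of fixed (resp.\ toggle) nodes. The strategy for the dynamics is to compute, for each node $v$, the counts $N^i(v)$ decomposed into the contribution from $v$'s own component and the contribution across $F$-edges, then invoke Lemma~\ref{lem:basic_eq_min} to certify fixed/toggle status. For a node $v$ in a fixed component $\hat{T}\in\mathcal{I}_0$, its same-component neighbors are oppositely colored (independent coloring), while its $F$-neighbors lie in toggle components; I would show the fixed-node inequality of Lemma~\ref{lem:basic_eq_min} is satisfied. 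For a node $v$ in a toggle component $\hat{T}\in\mathcal{I}_1$, the same-component neighbors share $v$'s color and the $F$-neighbors lie in fixed components; here I want the strict toggle inequality, which is where the \emph{strict} degree bound hidden in the legality of $F$ and the structure of $E^{\,2.5}(T)$ must be used.

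The main obstacle will be the border bookkeeping at nodes incident to $F$-edges, especially the delicate cases isolated by Theorem~\ref{thm:2cycmain}: single-node fixed blocks $T_0=\{v\}$ (which force $deg_T(v)$ even, guaranteed by legality), degree-$2$ nodes that must be fixed (part~\ref{lem:degree3}), and leaves of a block with even $T$-degree that must be fixed (part~\ref{lem:fixedpt}). I would need to confirm that placing all fixed components in $\mathcal{I}_0$ is compatible with each such node actually satisfying its fixed-node inequality under $c$; this is exactly the content encoded by the definition of $Fix(T,F)$ and the legality requirement $Fix(T,F)\subseteq\mathcal{I}_0$, so the verification should reduce to matching the combinatorial conditions of Definition~\ref{def:legal3} against the inequalities of Lemma~\ref{lem:basic_eq_min}. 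Once every node's status is pinned down, the equality $\mathcal{B}_c(T)=\mathcal{T}_F$ follows because $T^f$ has components exactly the $\mathcal{I}_0$-components and $T^t$ the $\mathcal{I}_1$-components, so no two adjacent $F$-components merge and no $F$-edge becomes internal to a block.

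The remaining routine step is to confirm $c$ is genuinely a $2$-cycle and not a fixed point: because $F\in E_{block}(T)$ arises from a bona fide block tree with at least one toggle block (any edge of $\mathcal{T}_F$ forces a toggle component by part~\ref{lem:fixedBlock}), at least one node toggles, so $\mathcal{MIN}(c)\neq c$, while Lemma~\ref{lem:basic_eq_min} guarantees $\mathcal{MIN}(\mathcal{MIN}(c))=c$. The $\mathcal{MAJ}$ case follows verbatim after swapping ``independent'' and ``monochromatic'' in the component colorings and replacing Lemma~\ref{lem:basic_eq_min} by Lemma~\ref{lem:basic_eq_maj}.
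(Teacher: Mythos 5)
Your overall plan --- color each $\mathcal{I}_1$-component monochromatically, each $\mathcal{I}_0$-component independently, then certify fixed/toggle status via Lemma~\ref{lem:basic_eq_min} --- aims at the right object, but it hides the entire content of the lemma inside the step ``resolve the colors across the edges of $F$.'' There is no local consistency condition along an $F$-edge to propagate: such an edge joins a monochromatic block to an independent block, and the edge itself forces no relation between the two blocks' colors (this is exactly why distinct 2-cycles can induce the same block tree, cf.\ Fig.~\ref{fig:Many_bct}), so your claim that a breadth-first sweep assigns all component colors \emph{uniquely} is false. What the construction really requires is a choice of one free bit per block such that at every node incident with $F$-edges the inequalities of Lemma~\ref{lem:basic_eq_min} hold; these are per-node constraints coupling several $F$-edges simultaneously, and some of them are exact balance conditions. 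For instance, at a singleton fixed block $T_0=\{v\}$ with $deg_T(v)=2k$ all $2k$ incident edges lie in $F$, the fixed-node inequality degenerates to $N_t^{c(v)}(v)=N_t^{1-c(v)}(v)=k$, and the $2k$ pairwise distinct adjacent toggle blocks must split their colors exactly $k$ against $k$; at a block-leaf $v$ of a fixed block with $deg_T(v)$ even, the $deg_T(v)-1$ toggle neighbors must be balanced to within one, with parity forcing the difference to be exactly one. Definition~\ref{def:legal3} does not encode these inequalities --- it only constrains which components are forced fixed and the parity at singletons --- so your hope that the verification ``reduces to matching the combinatorial conditions of Definition~\ref{def:legal3} against the inequalities of Lemma~\ref{lem:basic_eq_min}'' cannot be a pointwise check: the constraints restrict the choice of the block colors, not the set $F$, and some choices of bits genuinely violate them.

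Proving that a globally consistent choice of block colors exists for every legal $F$ is the heart of the lemma, and the paper does it by induction on $\abs{F}$: it detaches a leaf block $L$ of $\mathcal{T}_F$ (four cases according to $L\in\mathcal{I}_0$ or $L\in\mathcal{I}_1$ and whether $L$ has minimal size), applies the inductive hypothesis to a shrunken tree and edge set, and re-attaches $L$ with a color chosen according to the sign of $N_f^{\tilde{c}(u)}(u)-N_f^{1-\tilde{c}(u)}(u)$ (resp.\ the corresponding $N_t$ difference) at the boundary node $u$, including a parity argument that rules out a bad subcase by contradicting $Fix(T,F)\subseteq\mathcal{I}_0(\mathcal{T}_F)$. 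Your sketch contains no mechanism playing this role: as written it is a plan for verifying a coloring whose existence you have not established. To repair it you would need either to reproduce such an induction, or to give a direct argument (for example, a constraint-satisfaction argument on $\mathcal{T}_F$ showing the balance conditions at all boundary nodes are simultaneously satisfiable) --- neither of which is routine bookkeeping.
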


\begin{proof}
  Assume $\mathcal{M}=\mathcal{MIN}$, $\mathcal{M}=\mathcal{MAJ}$ is
  similar and uses Lemma~\ref{lem:basic_eq_maj}. The proof is by
  induction on $\abs{F}$. The case $\abs{F}=0$ is obvious, $c$ is the
  monochromatic coloring. Let $\abs{F}>0$. Let $L\in \mathcal{T}_{F}$
  be a leaf and $e=(u,w)\in F$ such that $w\in L$. Then $\abs{L}\ge 2$
  if $L\in \mathcal{I}_0(\mathcal{T}_F)$ and $\abs{L}\ge 3$ if
  $L\in \mathcal{I}_1(\mathcal{T}_F)$. Remember that
  $\mathcal{I}_0(\mathcal{T}_F)$ contains the fixed components of
  $\mathcal{T}_F$. By the definition of $ E_{block}(T)$ we have to
  consider four cases.
  
  {\bf Case 1:} $L\in
  \mathcal{I}_0(\mathcal{T}_F)$ and $\abs{L} > 2$.
  We construct a tree $\tilde{T}$ as follows: Remove from $T$ all
  nodes of $L$ except $w$ and add a new neighbor $v$ to $w$. Then
  $\abs{\tilde{T}} < \abs{T}$. Then $deg_T(u)>2$ otherwise $L$ would
  not be in $\mathcal{I}_0(\mathcal{T}_F)$. Hence,
  $F\subseteq E^{2.5}(\tilde{T})$. Denote the leaf of
  ${\cal C}_{\tilde{T}}(F)$ consisting of $v$ and $w$ by $\tilde{L}$.
  Thus, $\tilde{L}\in Fix(\tilde{T},F)$ and
  $Fix(\tilde{T},F) = Fix(T,F) \cup \tilde{L} \setminus L \subseteq
  \mathcal{I}_0(\mathcal{T}_F)$. Let $T_0=\{v\} \in {\cal C}_{\tilde{T}}(F)$.
  Then, $T_0\in {\cal C}_{{T}}(F)$. Hence, $deg_T(v)\equiv 0(2)$ by
  assumption. Since $T_0\in \mathcal{I}_0(\mathcal{T}_F)$ we also have
  $deg_{\tilde{T}}(v)\equiv 0(2)$. This shows that $\tilde{T}$ and $F$
  satisfy the theorem's assumption. Hence, by induction there exists a
  canonical coloring $\tilde{c}\in {\cal C}^2(\tilde{T})$ with
  $\mathcal{B}_{\tilde{c}}(\tilde{T}) = \mathcal{T}_F$ satisfying all
  properties. We can extend $\tilde{c}$ to a coloring
  $c\in {\cal C}^2({T})$ by setting $c(x)=\tilde{c}(x)$ for all nodes
  $x\in T\setminus L$, $c(w) = \tilde{c}(w)$, and color the remaining
  nodes of $L$ in the canonical way for a fixed point.

    {\bf Case 2:} $L\in
  \mathcal{I}_0(\mathcal{T}_F)$ and $\abs{L} = 2$.
  Let $\tilde{F}= F\setminus e$. Let $v\in L$ be a neighbor of $w$ and
  set $\tilde{T}= T\setminus v$. Let $T_u \in {\cal C}_{{T}}({F})$
  with $u\in T_u$. Then $T_u \in \mathcal{I}_1(\mathcal{T}_F)$ and thus,
  $\abs{T_u}>1$, $deg_{T_u}(u)\ge 1$ and $deg_{T}(u)\ge 3$. Let
  $\tilde{T}_u \in {\cal C}_{\tilde{T}}(\tilde{F})$ with
  $u\in \tilde{T}_u$. Then $w \in \tilde{T}_u$,
  $\tilde{T}_u \in \mathcal{I}_1(\mathcal{T}_F)$ and $T_u \subset \tilde{T}_u$.
  Clearly, $\tilde{F} \subseteq E^{2.5}(\tilde{T})$. Let
  $T_0 =\{v_0\} \in {\cal C}_{\tilde{T}}(\tilde{F})$ with
  $\abs{T_0}=1$. Then $T_0 \in {\cal C}_{{T}}({F})$, thus
  $deg_T(v_0)\equiv 0(2)$. Hence, $deg_{\tilde{T}}(v_0)\equiv 0(2)$.
  Let $\hat{T}\in {\cal C}_{\tilde{T}}(\tilde{F})$ and
  $v_0\in \hat{T}$ with $deg_{\hat{T}}(v_0)=1$,
  $deg_{\tilde{T}}(v_0)\equiv 0(2)$. Assume~$\hat{T} = \tilde{T}_u$.
  Then $v_0\not= w$ since $deg_{\tilde{T}}(w)=1\not\equiv 0(2)$. Thus,
  $\hat{T} = \tilde{T}_u$ if $v_0\in \hat{T}$ with
  $deg_{\hat{T}}(v_0)=1$ for some $v_0\not=w$. Hence,
  $\hat{T}\in Fix(\tilde{T},\tilde{F}) = Fix(T,F)\subseteq
  \mathcal{I}_0(\mathcal{T}_F)=
 \mathcal{I}_0(\mathcal{\tilde{T}}_{\tilde{F}})$.

  Therefore, $\tilde{T}$ and $\tilde{F}$ satisfy the theorem's
  assumption. By induction there exists a canonical coloring
  $\tilde{c}\in {\cal C}^2(\tilde{T})$ with
  $\mathcal{B}_{\tilde{c}}(\tilde{T}) = \mathcal{T}_{\tilde{F}}$
  satisfying all properties. Since $\tilde{T}_u\in \mathcal{I}_1(T)$
  all nodes of $\tilde{T}_u$ have the same color, thus
  $N_t^{1-\tilde{c}(u)}(u)=0$ and $\tilde{c}(u)=\tilde{c}(w)$. By
  Lemma~\ref{lem:basic_eq_min} we have
  $\abs{N_f^{\tilde{c}(u)}(u) - N_f^{1-\tilde{c}(u)}(u)} <
  N_t^{\tilde{c}(u)}(u)$.
  We change $\tilde{c}$ to a coloring $c$ of $T$ as follows. First, we
  set $c(x)= \tilde{c}(x)$ for all $x \not\in \{w,v\}$. We apply
  Lemma~\ref{lem:basic_eq_min} to prove that $u$ is still a toggle
  node for $c$.
  
  If $N_f^{\tilde{c}(u)}(u) > N_f^{1-\tilde{c}(u)}(u)$ we set
  $c(w)=1-\tilde{c}(w)$ and $c(v)=\tilde{c}(w)$. If
  $N_f^{\tilde{c}(u)}(u) < N_f^{1-\tilde{c}(u)}(u)$ we set
  $c(w)=\tilde{c}(w)$ and $c(v)=1-\tilde{c}(w)$. At last consider the
  case $N_f^{\tilde{c}(u)}(u) = N_f^{1-\tilde{c}(u)}(u)$. If
  $N_t^{\tilde{c}(u)}(u)=2$ then $N_t^{{c}(u)}(u)=1$, i.e.,
  $deg_{{T_u}}(u)=1$. Also
  $deg_{\tilde{T}}(u) = 2N_f^{\tilde{c}(u)}(u) + 2$, i.e,
  $deg_{{T}}(u) \equiv 0(2)$. Hence,
  $T_u\in \mathcal{I}_0(\mathcal{T}_F)$. Contradiction and thus
  $N_t^{\tilde{c}(u)}(u)>2$. Set $c(w)=1-\tilde{c}(w)$ and
  $c(v)=\tilde{c}(w)$. Then $N_t^{{c}(u)}(u)>1$ and thus,
  $\abs{N_f^{c(u)} - N_f^{1-c(u)}} =1 < N_t^{{c}(u)}(u)$. Therefore,
  $c$ has the desired properties.

  {\bf Case 3:} $L\in
  \mathcal{I}_1(T)$ and $\abs{L} > 3$.\\
  Construct a tree $\tilde{T}$ as follows: Remove from $T$ all nodes
  of $L$ except $w$ and add two new neighbors $v_1,v_2$ to $w$. Then
  $\abs{\tilde{T}} < \abs{T}$. Note that $deg_T(w)\ge 3$ otherwise
  $L \in \mathcal{I}_0(\mathcal{T}_F)$. Hence, $F\subseteq E^{2.5}(\tilde{T})$.
  Denote the leaf of $\mathcal{I}_{\tilde{T}}(F)$ consisting of
  $v_1,v_2$ and $w$ by $\tilde{L}$. Then
  $\tilde{L}\in \mathcal{I}_1(T)$ and hence,
  $Fix(\tilde{T},F) = Fix(T,F) \subseteq \mathcal{I}_0(\mathcal{T}_F)$. Let
  $T_0=\{v\} \in {\cal C}_{\tilde{T}}(F)$. Then,
  $T_0\in {\cal C}_{{T}}(F)$. Hence, $deg_T(v)\equiv 0(2)$ by
  assumption. Then also $deg_{\tilde{T}}(v)\equiv 0(2)$. This shows
  that $\tilde{T}$ and $F$ satisfy the assumption of the theorem.
  Hence, by induction there exists a canonical coloring
  $\tilde{c}\in {\cal C}^2(\tilde{T})$ with
  $\mathcal{B}_{\tilde{c}}(\tilde{T}) = \mathcal{T}_F$. We extend
  $\tilde{c}$ to a coloring $c\in {\cal C}^2({T})$ by setting
  $c(x) = \tilde{c}(w)$ for all $x\in L$ and $c(x)=\tilde{c}(x)$ for
  all other nodes $x$. Clearly $c$ satisfies the required conditions.

  {\bf Case 4:} $L\in
  \mathcal{I}_1(T)$ and $\abs{L} = 3$.\\
  Let $\tilde{F}= F\setminus e$. Since $L\in \mathcal{I}_1(T)$ we have
  $deg_{L}(w)=2$. Let $v_1,v_2$ be the neighbors of $w$ in $L$ and set
  $\tilde{T}= T\setminus v_1$. Let $T_u \in {\cal C}_{{T}}({F})$ with
  $u\in T_u$. Then $T_u \in \mathcal{I}_0(\mathcal{T}_F)$. Let
  $\tilde{T}_u \in {\cal C}_{\tilde{T}}(\tilde{F})$ with
  $u\in \tilde{T}_u$. Then $\tilde{T}_u \in \mathcal{I}_0(\mathcal{T}_F)$ and
  $T_u \subseteq \tilde{T}_u$.

  Clearly, $\tilde{F} \subseteq E^{2.5}(\tilde{T})$.
  Let $T_0 =\{v_0\} \in {\cal C}_{\tilde{T}}(\tilde{F})$ with
  $\abs{T_0}=1$. Then $T_0 \in {\cal C}_{{T}}({F})$, thus
  $deg_T(v_0)\equiv 0(2)$. Hence, $deg_{\tilde{T}}(v_0)\equiv 0(2)$.

  Let $\hat{T}\in {\cal C}_{\tilde{T}}(\tilde{F})$ and
  $v_0\in \hat{T}$ with $deg_{\hat{T}}(v_0)=1$ and
  $deg_{\hat{T}}(v_0)\equiv 0(2)$. If $\hat{T} = \tilde{T}_u$ then
  $\hat{T}\in \mathcal{I}_0(\mathcal{T}_F)$. Otherwise
  $\hat{T} \in {\cal C}_{{T}}({F})$ and thus,
  $\hat{T}\in \mathcal{I}_0(\mathcal{T}_F)$. Therefore, $\tilde{T},\tilde{F}$
  satisfy the assumption of the theorem. By induction there exists a
  canonical coloring $\tilde{c}\in {\cal C}^2(\tilde{T})$ with
  $\mathcal{B}_{\tilde{c}}(\tilde{T}) = \mathcal{T}_{\tilde{F}}$.
  Since $\tilde{T}_u\in \mathcal{I}_0(\tilde{T}_{\tilde{F}})$ neighboring nodes of
  $\tilde{T}_u$ have different colors. Since
  $\tilde{T}_u\in \mathcal{I}_0(\mathcal{T}_F)$ neighboring nodes of $\tilde{T}_u$
  have different colors, thus $N_f^{c(u)}(u)=0$ and
  $N_f^{1-c(u)}(u)>0$. By Lemma~\ref{lem:basic_eq_min} we have
  $\abs{N_t^{1-c(u)}(u) - N_t^{c(u)}(u)} < N_f^{1-c(u)}(u)$. We change
  $\tilde{c}$ to a coloring $c$ of $T$ as follows. First, we set
  $c(x)= \tilde{c}(x)$ for all $x \not\in \{w,v_1,v_2\}$. If
  $N_t^{1-c(u)}(u) > N_t^{c(u)}(u)$ we set $c(w)=\tilde{c}(u)$. If
  $N_t^{1-c(u)}(u) < N_t^{c(u)}(u)$ we set $c(w)=1-\tilde{c}(u)$. In
  both cases we set $c(v_1)=c(v_2)=c(w)$. At last consider the case
  $N_t^{c(u)}(u) = N_t^{1-c(u)}(u)$. Since $N_f^{1-c(u)}(u)>0$ we can
  take any of the two approaches. Clearly $c$ has the desired
  properties.
\end{proof}

\begin{theorem}\label{theo:main_2cycle_min}
  For a tree $T$ there exists a bijection $\mathcal{B}_{block}$
  between $E_{block}(T)$ and the set of block trees of $T$ of the
  minority and the majority process.
\end{theorem}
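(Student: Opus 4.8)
The plan is to exhibit the map $F \mapsto \mathcal{T}_F$ as the desired bijection $\mathcal{B}_{block}$, where the fixed/toggle labeling on $\mathcal{T}_F$ is supplied by the canonical bipartition of $\mathcal{T}_F$: blocks in $\mathcal{I}_0(\mathcal{T}_F)$ are declared fixed and those in $\mathcal{I}_1(\mathcal{T}_F)$ toggle. The two preceding lemmas do almost all of the work, so the argument reduces to identifying $\mathcal{B}_c(T)$ with $\mathcal{T}_{F_c}$ and then checking the three standard properties of a bijection.

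First I would record the key identity $\mathcal{B}_c(T) = \mathcal{T}_{F_c}$ for every $c \in \mathcal{C}^2_{\mathcal{M}}(T)$, where $F_c$ is the edge set of $\mathcal{B}_c(T)$. Two adjacent nodes of the same type lie in a common connected component of $T^f$ or $T^t$, hence in the same block; therefore an edge of $T$ joins two distinct blocks exactly when its endpoints have opposite type. Consequently $F_c$ is precisely the set of opposite-type edges, and deleting $F_c$ from $T$ leaves exactly the connected components of $T^f$ and $T^t$, i.e.\ the blocks. Since $\mathcal{T}_{F_c}$ is by definition the tree whose nodes are the components of $T\setminus F_c$ and whose edges are $F_c$, it coincides with $\mathcal{B}_c(T)$.

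With this identity the remaining properties are short. For well-definedness, given $F\in E_{block}(T)$, Lemma~\ref{lem:2cycle:suf} produces a $c\in\mathcal{C}^2_{\mathcal{M}}(T)$ with $\mathcal{B}_c(T)=\mathcal{T}_F$ and with $\mathcal{I}_0$, $\mathcal{I}_1$ the fixed resp.\ toggle nodes, so $\mathcal{T}_F$ is a genuine block tree carrying the stated labeling. For surjectivity, an arbitrary block tree equals $\mathcal{B}_c(T)$ for some $c$; Lemma~\ref{lem:2cycle:req} gives $F_c\in E_{block}(T)$, and the identity above yields $\mathcal{B}_{block}(F_c)=\mathcal{T}_{F_c}=\mathcal{B}_c(T)$. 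Injectivity is the easiest part, since the edge set of $\mathcal{T}_F$, read back as a subset of $E$, is exactly $F$; hence distinct legal sets give block trees with distinct edge sets.

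The one delicate point, where I expect to spend the most care, is the consistency of the fixed/toggle labeling: a priori the labeling is a proper two-colouring of $\mathcal{T}_F$ (neighbouring blocks always have opposite type), and a tree admits two such colourings, so I must verify that every 2-cycle realizing $\mathcal{T}_F$ selects the same one, namely fixed $=\mathcal{I}_0$. This is forced by Theorem~\ref{thm:2cycmain}.\ref{lem:singlept} and Theorem~\ref{thm:2cycmain}.\ref{lem:fixedpt}, which show that the combinatorially distinguished components of $Fix(T,F)$ must be fixed blocks in \emph{any} realizing colouring, while legality of $F$ places $Fix(T,F)\subseteq\mathcal{I}_0(\mathcal{T}_F)$; together with the convention $v^\ast\in\mathcal{I}_0$ to break the residual symmetry, this pins the labeling down and guarantees that $\mathcal{B}_{block}$ is single-valued on labeled block trees. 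The majority case runs identically, invoking Lemma~\ref{lem:basic_eq_maj} in place of Lemma~\ref{lem:basic_eq_min} wherever the two background lemmas branch.
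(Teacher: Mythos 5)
Your overall route is exactly the paper's: the paper proves this theorem in one sentence by citing Lemma~\ref{lem:2cycle:req} and Lemma~\ref{lem:2cycle:suf}, and your map $F\mapsto\mathcal{T}_F$, the identity $\mathcal{B}_c(T)=\mathcal{T}_{F_c}$ (an edge of $T$ joins two blocks precisely when its endpoints have opposite type, so deleting $F_c$ leaves exactly the components of $T^f$ and $T^t$), and the well-definedness/surjectivity/injectivity bookkeeping are precisely the details the paper leaves implicit. All of that part is correct.

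The problem is your ``delicate point''. You claim that \emph{every} 2-cycle realizing $\mathcal{T}_F$ must carry the same fixed/toggle labeling, pinned down by Theorem~\ref{thm:2cycmain}.\ref{lem:fixedpt}, Theorem~\ref{thm:2cycmain}.\ref{lem:singlept}, and the convention $v^\ast\in\mathcal{I}_0$. This is true only when $Fix(T,F)\not=\emptyset$: then the components of $Fix(T,F)$ must be fixed blocks in any realization, and since fixed and toggle blocks alternate along $\mathcal{T}_F$, the fixed class is determined. When $Fix(T,F)=\emptyset$ nothing pins the labeling, and the convention $v^\ast\in\mathcal{I}_0$ has no forcing power --- it merely names the two bipartition classes and cannot forbid a 2-cycle in which the block containing $v^\ast$ toggles. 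Concretely, let $T$ consist of two adjacent nodes $u,w$, each with three pendant leaves, and $F=\{(u,w)\}$. Then $F$ is legal and $Fix(T,F)=\emptyset$ (each center has degree $3$ in its component, each leaf has odd degree in $T$), yet for $\mathcal{MIN}$ the coloring that is independent on the $u$-star and monochromatic on the $w$-star is a valid 2-cycle with the $u$-block fixed and the $w$-block toggling, and its mirror image is a valid 2-cycle with the labels swapped. Both induce the same underlying tree $\mathcal{T}_F$ with opposite labelings, so if block trees are taken as labeled objects the correspondence is two-to-one on such $F$ and no bijection exists --- the rigidity you assert is simply false. The repair is to read ``block tree'' as the unlabeled object (the partition of $V$ into blocks together with its tree structure, which is exactly how $\mathcal{T}_F$ is defined); under that reading your injectivity and surjectivity arguments go through verbatim and the labeling paragraph should be deleted, since Lemma~\ref{lem:2cycle:suf} only asserts the \emph{existence} of one canonical realization with fixed nodes in $\mathcal{I}_0$, not uniqueness of the labeling across realizations.
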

\begin{proof}
  The existence of $\mathcal{B}_{block}$ follows from
  Lemma~\ref{lem:2cycle:req} and \ref{lem:2cycle:suf}.
\end{proof}

The following result is an immediate implication of the last theorem.
\begin{corollary}
  Let $T$ be a tree where all nodes have odd degree. Then
  \[E_{block}(T) = \{ F\subseteq E^{3} (T) \suchthat {\cal C}_T(F)
    \text{ does not contain a component of size } 1\}.\] If $P$ is a
  path then
  $\mathcal{C}^2_{\mathcal{M}}(P)=\mathcal{P}_{\mathcal{M}}(P)$.
\end{corollary}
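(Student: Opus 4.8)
The plan is to treat the two assertions separately, in both cases unwinding Definitions~\ref{def:block} and~\ref{def:legal3} under the stated degree hypotheses and then invoking Theorem~\ref{theo:main_2cycle_min}.

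For the first assertion, suppose every node of $T$ has odd degree. First I would observe that for an odd-degree node ``degree at least two'' is equivalent to ``degree at least three'', so $E^{\,2.5}(T)=E^3(T)$; thus the two candidate edge sets already agree. Next I would examine the notion of a \emph{fixed} component from Definition~\ref{def:block}: a component $\hat T$ is fixed if $\abs{\hat T}=1$ or if it contains a node $v$ with $deg_T(v)\equiv 0(2)$ and $deg_{\hat T}(v)=1$. Since no node has even degree, the second alternative is vacuous, so $\hat T$ is fixed precisely when $\abs{\hat T}=1$. I would then read off the two legality conditions of Definition~\ref{def:legal3}. The condition ``if $T_0=\{v\}\in{\cal C}_T(F)$ then $deg_T(v)\equiv 0(2)$'' can never be met when a single-node component exists, so legality forces ${\cal C}_T(F)$ to have no component of size $1$; conversely, if there is no such component then $Fix(T,F)=\emptyset$ and both legality conditions hold vacuously. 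Combining these gives exactly the claimed description of $E_{block}(T)$.

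For the second assertion, the key structural fact is that a path $P$ has maximum degree at most $2$, so no edge has an endpoint of degree $\ge 3$ and therefore $E^{\,2.5}(P)=\emptyset$; in particular $E_{block}(P)=\{\emptyset\}$. Given any $c\in\mathcal{C}^2_{\mathcal{M}}(P)$, Lemma~\ref{lem:2cycle:req} (together with Theorem~\ref{thm:2cycmain}.\ref{lem:fixedBlock}, which forces each block-tree edge to have an endpoint of degree $\ge 3$) shows that the edge set $F_c$ of $\mathcal{B}_c(P)$ lies in $E^{\,2.5}(P)$, hence $F_c=\emptyset$. Thus $\mathcal{B}_c(P)$ has no edges and, being a tree, consists of a single block equal to all of $P$. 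This block is either a fixed block or a toggle block; if it were a fixed block then every node would be a fixed node and $c$ a fixed point, contradicting $c\in\mathcal{C}^2_{\mathcal{M}}(P)$. Hence the block is a toggle block, every node toggles, and $c\in\mathcal{P}_{\mathcal{M}}(P)$. Since pure colorings are $2$-cycles by definition, the reverse inclusion is immediate and equality follows.

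I expect the only delicate point to be the careful collapse of the two legality conditions in the first part: one must check both that legality \emph{implies} the absence of size-$1$ components (via the parity condition) and that absence of size-$1$ components makes the ``$Fix(T,F)\subseteq\mathcal{I}_0$'' condition vacuous, so that no further constraint survives. The path statement is then essentially a corollary of the emptiness of $E^{\,2.5}(P)$ plus the tree structure of $\mathcal{B}_c(P)$, and should present no real obstacle.
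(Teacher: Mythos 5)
Your proposal is correct and follows exactly the route the paper intends: the paper states the corollary as an ``immediate implication'' of Theorem~\ref{theo:main_2cycle_min} without spelling out details, and your argument is precisely the natural unwinding of Definitions~\ref{def:block} and~\ref{def:legal3} under the odd-degree hypothesis (giving $E^{\,2.5}(T)=E^3(T)$ and collapsing legality to the absence of size-$1$ components), plus the observation that $E^{\,2.5}(P)=\emptyset$ forces a single block for a path, which must be a toggle block since $c$ is not a fixed point. Both halves, including the converse vacuity check for $Fix(T,F)\subseteq\mathcal{I}_0(\mathcal{T}_F)$, are handled correctly, so there is nothing to add.
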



\subsection{Counting Block Trees}

The concept of Algorithm~\ref{alg:allfix} can not be used to generate
all elements of $E_{block}(T)$ because $E_{block}(T)$ does not have
the hereditary property (see Fig.~\ref{fig:gegenbeispiel}). Since
$E_{block}(T) \subseteq E_{fix}(T)$ each upper bound for
$\abs{\mathcal{F}_{\mathcal{M}}(T)}$ is also an upper bound for
$\abs{\mathcal{C}^2_{\mathcal{M}}(T)}$. A naive way to generate all
block trees of a tree is to iterate over the set $E_{fix}(T)$ and
test, whether an element is legal according to Def.~\ref{def:legal3}.

\begin{figure}[h]
  \hfill
  \includegraphics[scale=0.95]{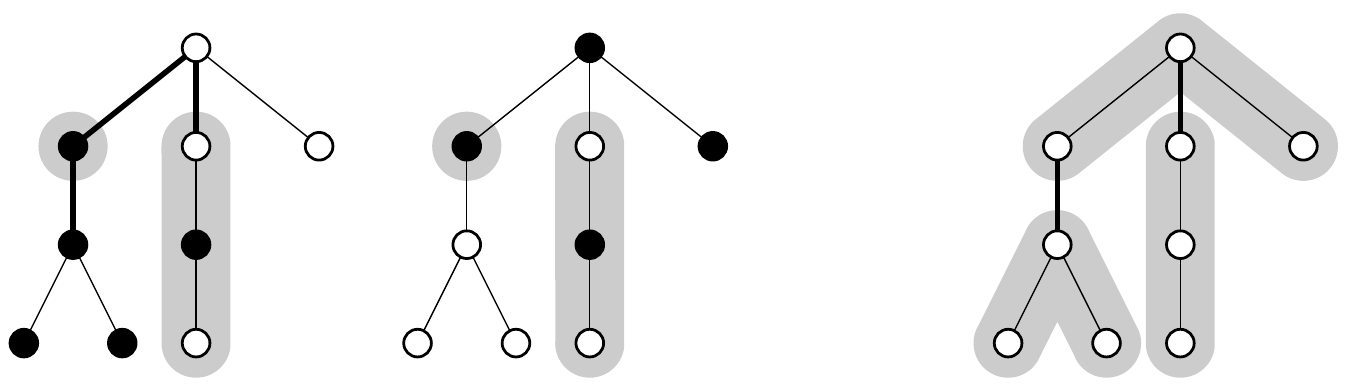}
  \hfill\null

  \caption{The left two drawings show a coloring from
    $\mathcal{C}^2_{\mathcal{M}}(T)$ which corresponds to the set
    $F\in E_{block}(T)$ which consists of three bold edges. The two
    fixed blocks are highlighted. In the right most drawing the tree
    $\mathcal{T}_{F'}$ for the set $F'\subset F$ of the two marked edges is
    shown. $F'$ is not legal, because two neighboring components are
    fixed by Def.~\ref{def:block}.}\label{fig:gegenbeispiel}
\end{figure}

\section{Conclusion and Open Problems}
In this paper we provided characterizations of several categories of
colorings of trees for the minority and majority process in terms of
subsets of the tree edges. This means that the class of trees is the
first nontrivial graph class for which a complete characterization of
fixed points for the minority/majority process exists. This includes
an algorithm to enumerate all fixed points and upper bounds for the
number of fixed points.

There are several open questions that are worth pursuing. Firstly, is
it possible to characterize fixed points and pure colorings for other
graph classes? Clearly, the results for trees do not hold for general
graphs, e.g.\ for cycles. But, it might be possible to use the same
approach, i.e., find suitable subsets of the edge set similar to
$E_{fix}$.

Furthermore, the current work for trees can be improved. It would be
interesting to find better general upper bounds for
$\abs{\mathcal{F}_{\mathcal{M}}(T)}$ and
$\abs{\mathcal{P}_{\mathcal{M}}(T)}$ for trees. Also, we believe that
the run-time of Algorithm~\ref{alg:allfix} can be improved. Moreover,
an algorithm to enumerate all block trees is an open problem. Finally,
a full characterization of all 2-cycles with the help of a subset of
the power set of the tree edges is still missing.

Another line of research is to consider random trees and compute the
expected number of fixed points and pure colorings. Using our results,
it suffices to compute the expected sizes of $\abs{E_{fix}(T)}$ and
$\abs{E_{pure}(T)}$ for these trees.

\bibliographystyle{plainurl}
\bibliography{arxiv}
\end{document}